\newtheorem{theorem}{Theorem}[section]
\newtheorem{lemma}{Lemma}[section]
\newtheorem{corollary}{Corollary}[section]
\newtheorem{proposition}{Proposition}[section]
\newtheorem{definition}{Definition}[section]
\newcommand{\qed}{\hfill $\Box$ \bigbreak}
\newenvironment{proof}{\noindent {\bf Proof.}}{\qed}
\newcommand{\remove}[1]{}
\newcommand*{\bdiv}{%
  \nonscript\mskip-\medmuskip\mkern5mu%
  \mathbin{\operator@font div}\penalty900\mkern5mu%
  \nonscript\mskip-\medmuskip
}
\newcommand{\ie}{{\em i.e.,}\xspace}
\newcommand{\sta}[1]{state~#1}
\newcommand{\stA}[1]{State~#1}
\newcommand{\stb}[1]{\textbf{\stA{#1}}.}
\newcommand{\stf}[1]{{\tt #1}}
\newcommand{\stg}[1]{Algorithm~#1}
\newcommand{\sth}[1]{procedure~#1}
\newcommand{\stH}[1]{Procedure~#1}
\newcommand{\stj}[1]{subroutine~#1}
\newcommand{\stJ}[1]{Subroutine~#1}
\newcommand{\sti}[1]{{\tt #1}}
\newcommand{\std}[1]{\mathcal{O}(#1)}
\newcommand{\svA}{f}
\newcommand{\svB}{l_{min}}
\newcommand{\svC}{n}
\newcommand{\svD}{N}
\newcommand{\svE}[1]{P(#1)}
\newcommand{\svF}[1]{\svFF$(#1)$}
\newcommand{\svFF}{\sti{EXPLO}}
\newcommand{\svG}[1]{X_{#1}}
\newcommand{\svH}{\mathcal{GK}}
\newcommand{\svI}{\mathcal{T}}
\newcommand{\svJ}{1}
\newcommand{\svK}{0}
\newcommand{\svL}{{\tt TRUE}}
\newcommand{\svM}{{\tt FALSE}}
\newcommand{\svO}[1]{#1^*}
\newcommand{\svP}[1]{|#1|}
\newcommand{\svX}{bin}
\newcommand{\svQ}[1]{\lfloor #1 \rfloor}
\newcommand{\svZ}[1]{\lceil #1 \rceil}
\newcommand{\svN}{{\svZ{\log\log\svC}}}
\newcommand{\svR}[1]{(5#1 + 1) (#1 + 1) + 1}
\newcommand{\svS}[2]{max\{#2 | \svR{#2} \leq #1\}}
\newcommand{\svT}[2]{#1 - \svS{#1}{#2}}
\newcommand{\svV}[1]{\svZ{\frac {#1} {4}}}
\newcommand{\swA}{\sti{GROUP}}
\newcommand{\swAA}[3]{\swA$(#1, #2, #3)$}
\newcommand{\swAa}[1]{G_{#1}}
\newcommand{\swC}{\sti{LEARN}}
\newcommand{\swCC}{\swC$()$}
\newcommand{\swD}{\sti{MERGE}}
\newcommand{\swDD}[2]{\swD$(#1, #2)$}
\newcommand{\swDd}[1]{M_{#1}}
\newcommand{\swE}{\sti{CHECK-GATHERING}}
\newcommand{\swEE}{\swE$()$}
\newcommand{\swF}{\sti{GATHER}}
\newcommand{\swFF}[1]{\swF$(#1)$}
\newcommand{\swN}{follower}
\newcommand{\swQ}{\stf{Census}}
\newcommand{\swR}{\stf{Election}}
\newcommand{\swZ}{\stf{Synchro\-nisation}}
\newcommand{\swT}{\stf{Learning}}
\newcommand{\swU}{\stf{Optimist}}
\newcommand{\swV}{\stf{Pessimist}}
\newcommand{\szA}{\stf{Wait-for-attendees}}
\newcommand{\swO}{\stf{Invite}}
\newcommand{\swG}{\stf{Search-for-a-group}}
\newcommand{\szB}{\stf{Follow-up}}
\newcommand{\szC}{\stf{Search-for-an-invitation}}
\newcommand{\szD}{\stf{Accept-an-invitation}}
\newcommand{\szE}{\stf{Restart}}
\newcommand{\svg}{H}
\newcommand{\svh}{I}
\newcommand{\svi}{\pi}
\newcommand{\sva}{\svO{b}}
\newcommand{\svb}{{\svP{\svO{\sxb}}}}
\newcommand{\svc}{i}
\newcommand{\svd}{\gamma}
\newcommand{\sve}{\omega}
\newcommand{\svf}{\rho}
\newcommand{\svm}{z}
\newcommand{\sxA}{A}
\newcommand{\sxB}{\ell}
\newcommand{\sxC}{r}
\newcommand{\sxD}{v}
\newcommand{\sxE}{T}
\newcommand{\sxF}{t}
\newcommand{\sxa}{\sxA}
\newcommand{\sxb}{\sxB_\sxA}
\newcommand{\sxg}{\svh_1}
\newcommand{\sxj}{\svh_2}
\newcommand{\sxk}{\svh_3}
\newcommand{\sxe}{\sxC_0}
\newcommand{\sxz}{\svg_1}
\newcommand{\sya}{A}
\newcommand{\syb}{\sxC_1}
\newcommand{\syc}{\sxD_1}
\newcommand{\syd}{\sxE_1}
\newcommand{\sye}{B}
\newcommand{\syf}{\sxD_2}
\newcommand{\syg}{\sxC_2}
\newcommand{\syi}{\svg_2}
\newcommand{\syj}{\sxC_3}
\newcommand{\syk}{\sxD_3}
\newcommand{\sxl}{\sxD}
\newcommand{\sxf}{T_{\svD}}
\newcommand{\sxm}{x}
\newcommand{\sxn}{y}
\newcommand{\sxo}{p}
\newcommand{\sxq}{y}
\newcommand{\swW}{``Check-gathering''}
\newcommand{\svk}{A}
\newcommand{\svl}{B}
\newcommand{\sxr}{i}
\newcommand{\sxs}{j}
\newcommand{\sxt}{\sxD_1}
\newcommand{\suc}{\sxD_2}
\newcommand{\sxu}{\sxC_1}
\newcommand{\sae}{\sxC_2}
\newcommand{\sug}{k}
\newcommand{\szx}{g}
\newcommand{\sua}[1]{\sxF_{#1}}
\newcommand{\sxx}[2]{\sxC_{#1, #2}}
\newcommand{\sub}{\sxr}
\newcommand{\szg}{A}
\newcommand{\svj}{i}
\newcommand{\suf}{j}
\newcommand{\sxw}{\sxC_1}
\newcommand{\sxy}{A}
\newcommand{\szf}{B}
\newcommand{\sxv}{R}
\newcommand{\syl}{\sxE_1}
\newcommand{\sym}{\sxr}
\newcommand{\syn}{\sxD_1}
\newcommand{\sud}{\sxD_2}
\newcommand{\sad}{\sxC_2}
\newcommand{\syo}{j}
\newcommand{\syp}{k}
\newcommand{\syq}{A}
\newcommand{\syr}{B}
\newcommand{\sys}{\sxD_\syq}
\newcommand{\syt}{\sxD_\syr}
\newcommand{\syw}{\svm_\syq}
\newcommand{\syx}{\svm_\syr}
\newcommand{\syy}{\sxn_\syq}
\newcommand{\syz}{\sxn_\syr}
\newcommand{\sza}{\svP{\svO{\sxB_A}}}
\newcommand{\szb}{\svP{\svO{\sxB_B}}}
\newcommand{\szc}{s}
\newcommand{\sue}{\sxD_3}
\newcommand{\sxi}{\sxC_3}
\newcommand{\szd}{\sxr}
\newcommand{\sze}{A}
\newcommand{\szz}{\sxs}
\newcommand{\sra}{B}
\newcommand{\srb}{C}
\newcommand{\suh}{\sxD}
\newcommand{\sui}{\sxC}
\newcommand{\src}{\sxC_1}
\newcommand{\srd}{\sxD_1}
\newcommand{\sre}{\sxC_2}
\newcommand{\srf}{\sxD_2}
\newcommand{\szn}{\sxC_1}
\newcommand{\szo}{A}
\newcommand{\szp}{\svc_1}
\newcommand{\suj}{\svc_2}
\newcommand{\sac}{\sxC_2}
\newcommand{\szq}{\sxD_1}
\newcommand{\szr}{\sve_1}
\newcommand{\szs}{A}
\newcommand{\srg}{B}
\newcommand{\szt}{C}
\newcommand{\szu}{i}
\newcommand{\szv}{j}
\newcommand{\szh}{\sxD_\szu}
\newcommand{\szi}{\sxC_\szu}
\newcommand{\szj}{\sxC_\szv}
\newcommand{\szk}{\sxD_\szv}
\newcommand{\szl}{\sxC_1}
\newcommand{\szm}{\sxD_1}
\newcommand{\srh}{r}
\newcommand{\szw}{p}
\newcommand{\szy}{y}
\begin{document}

	\baselineskip  0.20in 
	\parskip     0.1in 
	\parindent   0.0in 
	
	\title{{\bf Byzantine Gathering in Polynomial Time}\footnote{This work was performed within Project ESTATE (Ref. ANR-16-CE25-0009-03) and Project TOREDY. The first project is supported by French state funds managed by the ANR (Agence Nationale de la Recherche), while the second project is supported by the European Regional Development Fund (ERDF) and the Hauts-de-France region.}}
	\date{}
	\newcommand{\inst}[1]{$^{#1}$}
	
	\author{
	S\'{e}bastien Bouchard\inst{1},
	Yoann Dieudonn\'e\inst{2},
	Anissa Lamani\inst{2}\\
	\inst{1} Sorbonne Universit\'es, UPMC Univ Paris 06, CNRS, INRIA, LIP6 UMR 7606, Paris, France\\
	E-mail: \url{sebastien.bouchard@lip6.fr}\\
	\inst{2} Laboratoire MIS \& Universit\'{e} de Picardie Jules Verne, Amiens, France.\\
	E-mails: \url{{yoann.dieudonne, anissa.lamani}@u-picardie.fr}\\
	}

	\maketitle
	\vspace{-1.2cm}
	\begin{abstract}
Gathering a group of mobile agents is a fundamental task in the field of distributed and mobile systems. This can be made drastically more difficult to achieve when some agents are subject to faults, especially the Byzantine ones that are known as being the worst faults to handle. In this paper we study, from a deterministic point of view, the task of Byzantine gathering in a network modeled as a graph. In other words, despite the presence of Byzantine agents, all the other (good) agents, starting from {possibly} different nodes and applying the same deterministic algorithm, have to meet at the same node in finite time and stop moving. An adversary chooses the initial nodes of the agents (the number of agents may be larger than the number of nodes)  and assigns a different positive integer (called label) to each of them. Initially, each agent knows its label. The agents move in synchronous rounds and can communicate with each other only when located at the same node. Within the team, $f$ of the agents are Byzantine. A Byzantine agent acts in an unpredictable and arbitrary way. For example, it can choose an arbitrary port when it moves, can convey arbitrary information to other agents and can change its label in every round, in particular by forging the label of another agent or by creating a completely new one. 

Besides its label, which corresponds to a local knowledge, an agent is assigned some global knowledge denoted by $\mathcal{GK}$ that is common to all agents. In literature, the Byzantine gathering problem has been analyzed in arbitrary $n$-node graphs by considering the scenario when $\mathcal{GK}=(n,f)$ and the scenario when $\mathcal{GK}=f$. In the first (resp. second) scenario, it has been shown that the minimum number of good agents guaranteeing deterministic gathering of all of them is $f+1$ (resp. $f+2$). However, for both these scenarios, all the existing deterministic algorithms, whether or not they are optimal in terms of required number of good agents, have the major disadvantage of having a time complexity that is exponential in $n$ and $L$, where $L$ is the value of the largest label belonging to a good agent.

   In this paper, we seek to design a deterministic solution for Byzantine gathering that makes a concession on the proportion of Byzantine agents within the team, but that offers a significantly lower complexity. We also seek to use a global knowledge whose the length of the binary representation (that we also call size) is small. In this respect, assuming that the agents are in a \emph{strong team} i.e., a team in which the number of good agents is at least some prescribed value that is quadratic in $f$, we give positive and negative results. On the positive side, we show an algorithm that solves Byzantine gathering with strong teams in all graphs of size at most $n$, for any integers $n$ and $f$, in a time polynomial in $n$ and the length $|l_{min}|$ of the binary representation of the smallest label of a good agent. The algorithm works using a global knowledge of size $\mathcal{O}(\log \log \log n)$, which is of optimal order of magnitude in our context to reach a time complexity that is polynomial in $n$ and $|l_{min}|$. Indeed, on the negative side, we show that there is no deterministic algorithm solving Byzantine gathering with strong teams, in all graphs of size at most $n$, in a time polynomial in $n$ and $|l_{min}|$ and using a global knowledge of size $o(\log \log \log n)$.

\vspace{0.2cm}
\noindent {\bf Keywords:} gathering, deterministic algorithm, mobile agent, Byzantine fault, polynomial time.
\end{abstract}


	
	
	\section{Introduction}
\subsection{Context}
Gathering a group of mobile agents is a basic problem that has been widely studied in literature dedicated to mobile and distributed systems. One of the main reasons for this popularity stems from the fact that this task turns out to be an essential prerequisite to achieve more complex cooperative works. In other words, getting fundamental results on the problem of gathering implies \emph{de facto} getting fundamental results on a large set of problems whose resolution needs to use gathering as a building block. 

The scale-up when considering numerous agents is inevitably tied to the occurrence of faults among them, the most emblematic of which is the Byzantine one. Byzantine faults are very interesting under multiple aspects, especially because the Byzantine case is the most general one, as it subsumes all the others kind of faults. In fact, in the field of fault tolerance they are considered as the worst faults that can occur. 

In this paper, we consider the problem of gathering in a deterministic way in a network modeled as a graph, wherein some agents are Byzantine. A Byzantine agent acts in an unpredictable and arbitrary manner. For instance it may choose to never stop or to never move. It may also convey arbitrary information to the other agents, impersonate the identity of another agent, and so on. In such a context, gathering is very challenging, and so far the power of such Byzantine agents has been offset by a huge complexity when solving this problem. In what follows, we seek a solution allowing to withstand Byzantine agents while keeping a ``reasonable'' complexity.

\subsection{Model and problem}
\label{sub:subm}

A team of mobile agents are initially placed by an adversary at arbitrary nodes of a network modeled as a finite, connected, undirected graph $G=(V,E)$. We assume that $|V|\leq n$. Several agents may initially share the same node and the size of the team may be larger than $n$. Two assumptions are made about the labelling of the two main components of the graph that are nodes and edges. The first assumption is that nodes are anonymous i.e., they do not have any kind of labels or identifiers allowing them to be distinguished from one another. The second assumption is that edges incident to a node $v$ are locally ordered with a fixed port numbering ranging from $0$ to $deg(v)-1$ where $deg(v)$ is the degree of $v$. Therefore, each edge has exactly two port numbers, one for each of both nodes it links. The port numbering is not supposed to be consistent{:} a given edge $(u,v)\in E$ may be the $i$-th edge of $u$ but the $j$-th edge of $v$, where $i\ne j$. These two assumptions are not fortuitous. The primary motivation of the first one is that if each node could be identified by a label, gathering would become quite easy to solve as it would be tantamount to explore the graph (via e.g. a breadth-first search) and then meet in the node having the smallest label. While the first assumption is made so as to avoid making the problem trivial, the second assumption is made in order to avoid making the problem impossible to solve. Indeed, in the absence of a way allowing an agent to distinguish locally the edges incident to a node, gathering could be proven as impossible to solve deterministically in view of the fact that some agents could be precluded from traversing some edges and visit some parts of the graph.


Time is discretized into an infinite sequence of rounds. In each round, every agent, which has been previously woken up (this notion is detailed in the next paragraph), is allowed to stay in place at its current node or to traverse an edge according to a deterministic algorithm. The algorithm is the same for all agents{:} only the input, whose nature is specified further in the subsection, varies among agents.

Before being woken up, an agent is said to be dormant. A dormant agent may be woken up only in two different ways{:} either by the adversary that wakes some of the agents at possibly different rounds, or as soon as a non-dormant agent is at the starting node of the dormant agent. We assume that the adversary wakes up at least one agent. Note that, when the adversary chooses to wake up in round $r$ a dormant agent located at a node $v$, all the dormant agents that are at node $v$ wake up in round $r$.

When an agent is woken up in a round $r$, it is told the degree of its starting node. As mentioned above, in each round $r'\geq r$, the executed algorithm can ask the agent to stay idle or to traverse an edge. In the latter case, this takes the following form{:} the algorithm asks the agent, located at node $u$, to traverse the edge having port number $i$, where $0 \leq i < deg(u)-1$. Let us denote by $(u,v)\in E$ this traversed edge. In round $r'+1$, the agent enters node $v${:} it then learns the degree $deg(v)$ as well as the local port number $j$ of $(u,v)$ at node $v$ (recall that in general $i\ne j$). An agent cannot leave any kind of tokens or markers at the nodes it visits or the edges it traverses.

In the beginning, the adversary also assigns a different positive integer (called label) to each agent. Each agent knows its label but does not know \emph{a priori} the labels of the other agents (except if some or all of them are inserted in the global knowledge $\mathcal{GK}$ that is introduced below). When several agents are at the same node $v$ in the same round $t$, they see, for each agent $x$ at node $v$, the label of agent $x$ and all information it wants to share with the others in round $t$. This transmission of information is done in a ``shouting'' mode in one round{:} all the transmitted information by all agents at node $v$ in round $t$ becomes common knowledge for agents that are currently at node $v$ in round $t$. On the other hand when two agents are not at the same node in the same round they cannot see or talk to each other{:} in particular, two agents traversing simultaneously the same edge but in opposite directions, and thus crossing each other on the same edge, do not notice this fact. In every round, the input of the algorithm executed by an agent $a$ is made up of the label of agent $a$, the up-to-date memory of what agent $a$ has seen and learnt since its waking up and some global knowledge denoted by $\mathcal{GK}$. Parameter $\mathcal{GK}$ is a piece of information that is initially given to all agents and common to all of them (i.e., $\mathcal{GK}$ is the same for all agents): its nature is precised at the end of this subsection. 
Note that in the absence of a way of distinguishing the agents, the gathering problem would have no deterministic solution in some graphs, regardless of the nature of $\mathcal{GK}$. This is especially the case in a ring in which at each node the edge going clockwise has port number $0$ and the edge going anti-clockwise has port $1${:} if all agents are woken up in the same round and start from different nodes, they will always have the same input and will always follow the same deterministic rules leading to a situation where the agents will always be at distinct nodes no matter what they do. 
 
Within the team, it is assumed that $f$ of the agents are Byzantine. A Byzantine agent has a high capacity of nuisance: it can choose an arbitrary port when it moves, can convey arbitrary information to other agents and can change its label in every round, in particular by forging the label of another agent or by creating a completely new one. All the agents that are not Byzantine are called good. We consider the task of $f$-Byzantine gathering which is stated as follows. The adversary wakes up at least one good agent and all good agents must eventually be in the same node in the same round, simultaneously declare termination and stop, despite the fact there are $f$ Byzantine agents. Regarding this task, it is worth mentioning that we cannot require the Byzantine agents to cooperate as they may always refuse to be with some agents. Thus, gathering all good agents with termination is the strongest requirement we can make in such a context. The time complexity of an algorithm solving $f$-Byzantine gathering is the number of rounds counted from the start of the earliest good agent until the task is accomplished.

We end this subsection by explaining what we mean by global knowledge, that can be viewed as a kind of advice given to all agents. Following the paradigm of algorithms with advice \cite{AbiteboulKM01,KatzKKP04,ThorupZ05,CohenFIKP08,FraigniaudIP08,FraigniaudGIP09,NisseS09}, $\mathcal{GK}$ is actually a piece of information that is initially provided to the agents at the start, by an oracle knowing the initial instance of the problem. By instance, we precisely mean: the entire graph with its port numbering, the initial positions of the agents with their labels, the $f$ agents that are Byzantine, and for each agent the round, if any, when the adversary wakes it up in case it has not been woken up before by another agent. So, for example, $\mathcal{GK}$ might correspond to the size of the network, the number of Byzantine agents, or a complete map of the network, etc. As mentionned earlier, we assume that $\mathcal{GK}$ is the same for all agents. The size of $\mathcal{GK}$ is the length of its binary representation.


\subsection{Related works}
\label{subsec:relat}

When reviewing the chronology of the works that are related to the gathering problem, it can be seen that this problem has been first studied in the particular case in which the team is made of exactly two agents. Under such a limitation, gathering is generally referred to as \emph{rendezvous}. From the first mention of the rendezvous problem in \cite{Schelling}, this problem and its generalization, gathering, have been extensively studied in a great variety of ways. Indeed, there is a lot of alternatives for the combinations we can make when addressing the problem, e.g., by playing on the environment in which the agents are supposed to evolve, the way of applying the sequences of instructions (i.e., deterministic or randomized) or the ability to leave some traces in the visited locations, etc. In this paper, we are naturally closer to the research works that are related to deterministic gathering in networks modeled as graphs. Hence, we will mostly dwell on this scenario in the rest of this subsection. However, for the curious reader wishing to consider the matter in greater depth, we invite him to consult \cite{CieliebakFPS12,AgmonP06,IzumiSKIDWY12} that address the problem in the plane via various scenarios, especially in a system affected by the occurrence of faults or inaccuracies for the last two references. Regarding randomized rendezvous, a good starting point is to go through \cite{Alpern02,Alpern03,KranakisKR06}.

Now, let us focus on the area that concerns the present paper most directly, namely deterministic rendezvous and/or gathering in graphs. In most papers on rendezvous in networks, a synchronous scenario was assumed, in which agents navigate in the network in synchronous rounds. Under this context, a lot of effort has been dedicated to the study of the feasibility and to the time (i.e., number of rounds) required to achieve the task, when feasible. For instance, in \cite{DessmarkFKP06} the authors show a rendezvous algorithm polynomial in the size of the graph, in the length of the shorter label and in the delay between the starting time of the agents. In \cite{KowalskiM08} and \cite{Ta-ShmaZ14} solutions are given for rendezvous, which are polynomial in the first two of these parameters and independent of the delay. While these algorithms ensure rendezvous in polynomial time (i.e., a polynomial number of rounds), they also ensure it at polynomial cost where the cost corresponds here to the total number of edge traversals made by both agents until meeting. Indeed, since each agent can make at most one edge traversal per round, a polynomial time always implies a polynomial cost. However, the reciprocal may be not true, for instance when using an algorithm relying on a technique similar to ``coding by silence'' in the time-slice algorithm for leader election \cite{Lynch96}:
``most of the time'' both agents stay idle, in order to guarantee that agents rarely move simultaneously. Thus these parameters of cost and time are not always linked to each other. This was recently highlighted in \cite{MillerP16} where the authors studied the tradeoffs between cost and time for the deterministic rendezvous problem. Some other efforts have been also dedicated to analyse the impact on time complexity of rendezvous when in every round the agents are brought with some pieces of information by making a query to some device or some oracle, see, e.g., \cite{DKU14,MillerP15}. Along with the works aiming at optimizing the parameters of time and/or cost of rendezvous, some other works have examined the amount of memory that is required to achieve deterministic rendezvous e.g., in \cite{FraigniaudP08,FraigniaudP13} for tree networks and in \cite{CzyzowiczKP12} for general networks. 

Apart from the synchronous scenario, the academic literature also contains several studies focusing on a scenario in which the agents move at constant speed, which are different from each other, or even move asynchronously{:} in this latter case the speed of agents may then vary and is controlled by the adversary. For more details about rendezvous under such a context, the reader is referred to \cite{MarcoGKKPV06,CzyzowiczPL12,GuilbaultP13,DieudonnePV15,KranakisKMPR17} for rendezvous in finite graphs and \cite{BampasCGIL10,CollinsCGL10} for rendezvous in infinite grids.

As stated in the previous subsection, our paper is also related to the field of fault tolerance since some agents may be prone to Byzantine faults. First introduced in \cite{PeaseSL80}, a Byzantine fault is an arbitrary fault occurring in an unpredictable way during the execution of a protocol. Due to its arbitrary nature, such a fault is considered as the worst fault that can occur. Byzantine faults have been extensively studied for ``classical'' networks i.e., in which the entities are fixed nodes of the graph (cf., e.g., the book \cite{Lynch96} or the survey \cite{BarborakM93}). To a lesser extend, the occurrence of Byzantine faults has been also studied in the context of mobile entities evolving on a one-dimensional or two-dimensional space, cf. \cite{AgmonP06,DefagoGMP06,CzyzowiczGKKNOS16}. 

Gathering in arbitrary graphs in presence of many Byzantine agents was considered in \cite{DieudonnePP14,BouchardDD16}. Actually, our model is borrowed from both these papers, and thus they are naturally the closest works to ours. In \cite{DieudonnePP14}, the problem is introduced via the following question: {\it what is the minimum number $\mathcal{M}$ of good agents that guarantees f-Byzantine gathering in all graphs of size $n$?} In~\cite{DieudonnePP14}, the authors provided several answers to this problem by firstly considering a relaxed variant, in which the Byzantine agents cannot lie about their labels, and then by considering a harsher form (the same as in our present paper) in which Byzantine agents can lie about their identities. For the relaxed variant, it has been proven that the minimum number $\mathcal{M}$ of good agents that guarantees $f$-Byzantine gathering is precisely $1$ when $\mathcal{GK}=(n,f)$ and $f+2$ when $\mathcal{GK}$ is reduced to $f$ only. The proof that both these values are enough, relies on polynomial algorithms using a mechanism of blacklists that are, informally speaking, lists of labels corresponding to agents having exhibited an ``inconsistent'' behavior. Of course, such blacklists cannot be used when the Byzantine agents can change their labels and in particular steal the identities of good agents. Still in \cite{DieudonnePP14}, the authors give for the harsher form of $f$-byzantine gathering a lower bound of $f+1$ (resp. $f+2$) on $\mathcal{M}$ and a deterministic gathering algorithm requiring at least $2f+1$ (resp. $4f+2$) good agents, when $\mathcal{GK}=(n,f)$ (resp. $\mathcal{GK}=f$). Both these algorithms have a huge complexity as they are exponential in $n$ and $L$, where $L$ is the largest label of a good agent evolving in the graph. 
Some advances are made in \cite{BouchardDD16}, via the design of an algorithm for the case $\mathcal{GK}=(n,f)$ (resp. $\mathcal{GK}=f$)  that works with a number of good agents that perfectly matches the lower bound of $f+1$ (resp. $f+2$) shown in \cite{DieudonnePP14}. However, these algorithms also suffer from a complexity that is exponential in $n$ and $L$.

\subsection{Our results}

As mentioned just above, the existing deterministic algorithms dedicated to $f$-Byzantine gathering all have the major disadvantage of having a time complexity that is exponential in $n$ and $L$, when Byzantine agents are allowed to change their labels. Actually, these solutions are all based on a common strategy that consists in enumerating the possible initial configurations, and successively testing them one by one. Once the testing reaches the correct initial configuration, the gathering can be achieved. However, in order to get a significantly more efficient algorithm, such a costly strategy must be abandoned in favor of a completely new one.

   In this paper, we seek to design a deterministic solution for Byzantine gathering that makes a concession on the proportion of Byzantine agents within the team, but that offers a signifi\-cantly lower complexity. We also seek to use a global knowledge whose the length of the binary representation (that we also call size) is small. In this respect, assuming that the agents are in a \emph{strong team} i.e., a team in which the number of good agents is at least the quadratic value $5f^2+6f+2$, we give positive and negative results. On the positive side, we show an algorithm that solves $f$-Byzantine gathering with strong teams in all graphs of size at most $n$, for any integers $n$ and $f$, in a time polynomial in $n$ and $|l_{min}|$. The algorithm works using a global knowledge of size $\mathcal{O}(\log \log \log n)$, which is of optimal order of magnitude in our context to reach a time complexity that is polynomial in $n$ and $|l_{min}|$. Indeed, on the negative side, we show that there is no deterministic algorithm solving $f$-Byzantine gathering with strong teams, in all graphs of size at most $n$, in a time polynomial in $n$ and $|l_{min}|$ and using a global knowledge of size $o(\log \log \log n)$. 

  \subsection{Roadmap}

The next section is dedicated to the presentation of some basic definitions and routines that we need in the rest of this paper. In Section~\ref{sec:block}, we describe two building blocks that are used in turn in Section~\ref{sec:onlyn} to establish our positive result. In Section~\ref{sec:gen}, we prove our negative result. Finally we make some concluding remarks in Section~\ref{sec:ccl}.

	\section{Preliminaries}
\label{sec:pre}
Throughout the paper, $\log$ denotes the binary logarithm. An agent will be designated by a capitalized letter, and the label of an agent $X$ will be denoted by $\ell_X$. The length of the binary representation of $\ell_X$ will be denoted by $|\ell_X|$. The length of the binary representation of the smallest label of a good agent in a given team will be denoted by $|l_{min}|$.

Several routines given in this paper will use a procedure whose aim is graph exploration,  i.e., visiting all nodes of the graph. This procedure, based on universal exploration sequences (UXS), is a corollary of the result of Reingold \cite{Reingold08}. Given any positive integer $\svC$, this procedure, called \svF{\svC}, allows the executing agent to traverse all nodes of any graph of size at most $\svC$,
starting from any node of this graph, using $\svE{\svC}$ edge traversals, where $P$ is some polynomial. After entering a node of degree $d$ by some port $p$,
the agent can compute the port $q$ by which it has to exit; more precisely $q=(p+x_i)\bmod d$, where $x_i$ is the corresponding term of the UXS of length $\svE{\svC}$. We denote by $\svG{\svC}$ the execution time of procedure \svFF{} with parameter $\svC$ (note that $\svG{\svC}=\svE{\svC}+1$).

Besides this exploration procedure, we will use a label transformation derived from~\cite{DessmarkFKP06}. Let $\ell_B$ be the label of an agent $B$ and $b_1 \ldots b_c$ its binary representation with $c$ its length. The binary representation of the corresponding transformed label $\ell_B^*$ is $10 b_1b_1 \ldots b_cb_c 01 10 b_1b_1 \ldots b_cb_c 01$. This transformation is made to ensure the following property that is used in the proof of correctness of our algorithm in Section~\ref{sec:onlyn}.

\begin{proposition} \label{prelim:label}
	Let $\ell_B$ and $\ell_X$ be two labels such that $\ell_B < \ell_X$. Let $b^*_1b^*_2 \ldots b^*_{4c+8}$ and $x^*_1x^*_2 \ldots x^*_{4y+8}$ be the respective binary representations of $\ell_B^*$ and $\ell_X^*$, with $c$ and $y$ the lengths of the binary representations of $\ell_B$ and $\ell_X$ respectively. There exist two positive integers $i \leq 2c+4$ and $2c+4 < j \leq 4c+8$ such that $b^*_i \ne x^*_i$ and $b^*_j \ne x^*_j$.
\end{proposition}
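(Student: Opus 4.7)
The plan is to split into the two cases $c=y$ and $c<y$; the case $c>y$ is impossible because $\ell_B<\ell_X$ and binary representations carry no leading zero. It is helpful to view $\ell_B^*$ as four consecutive zones of the first half: an opener $10$ at positions $1,2$; the body $b_1b_1\ldots b_cb_c$ at positions $3$ to $2c+2$; the closer $01$ at positions $2c+3,2c+4$; followed by an identical second copy occupying positions $2c+5$ through $4c+8$. The analogous decomposition holds for $\ell_X^*$ with $y$ in place of $c$.

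If $c=y$, then $\ell_B\neq\ell_X$ forces some index $k\in\{1,\ldots,c\}$ with $b_k\neq x_k$. By construction, $b^*_{2k+1}=b_k$ and $x^*_{2k+1}=x_k$ (since both halves start at the same offset when $y=c$), so these differ, and $2k+1\leq 2c+1\leq 2c+4$ gives a valid $i$. In the second copy, $b_k$ appears at position $(2c+4)+(2k+1)=2c+2k+5$, which lies strictly between $2c+4$ and $4c+8$, providing a valid $j$.

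If $c<y$, I exploit the clash between the closer/opener pattern of $\ell_B^*$ and the body bits of $\ell_X^*$ that still occupy those positions. At positions $2c+3,2c+4$ the bits of $\ell_B^*$ read $0,1$. Since $y\geq c+1$, the corresponding positions of $\ell_X^*$ hold $x_{c+1}x_{c+1}$, two identical bits that cannot simultaneously match $(0,1)$; so one of $2c+3,2c+4$ is a valid $i$. At positions $2c+5,2c+6$ the bits of $\ell_B^*$ read $1,0$. Two subcases arise: if $y=c+1$ the corresponding bits of $\ell_X^*$ are its closer $0,1$, differing at both positions; if $y\geq c+2$ they are $x_{c+2}x_{c+2}$, two equal bits that cannot match $(1,0)$. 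Either way, one of $2c+5,2c+6$ is a valid $j$, and it satisfies $2c+4<j\leq 4c+8$.

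The only real obstacle is keeping the zone-index arithmetic straight and verifying that every $x^*_i$ or $x^*_j$ I reference is within the length of $\ell_X^*$; since $c\leq y$ implies $4c+8\leq 4y+8$, all indices cited above lie inside the binary representation of $\ell_X^*$, which closes the argument.
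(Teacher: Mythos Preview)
Your proof is correct and follows essentially the same approach as the paper: the same case split on $c=y$ versus $c<y$, and the same distinguishing positions ($2k+1$ and $2c+2k+5$ in the equal-length case; $2c+3,2c+4$ and $2c+5,2c+6$ in the unequal-length case). You simply spell out more of the index bookkeeping, including the subcases $y=c+1$ versus $y\geq c+2$, which the paper leaves implicit.
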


\begin{proof}
	Let $b_1b_2 \ldots b_c$ (resp. $x_1x_2\ldots x_y$) the binary representation of $\ell_B$ (resp. $\ell_X$). There are two cases to consider: either $c = y$ or $c < y$. In the first case, since $\ell_B \ne \ell_X$, there exists a positive integer $i \leq c$ such that $b_i \ne x_i$. This implies in particular that $x^*_{2i+1} \ne b^*_{2i+1}$ and $x^*_{2c+2i+5} \ne b^*_{2c+2i+5}$. In the second case, either $x^*_{2c+3} \ne b^*_{2c+3}$ or $x^*_{2c+4} \ne b^*_{2c+4}$, and if $x^*_{2c+5} = b^*_{2c+5}$ then $x^*_{2c+6} \ne b^*_{2c+6}$. Hence, in each case the proposition holds.
\end{proof}


Throughout the paper, we will recurrently design some routines in the form of a description of several states, where an agent has to apply specific rules, along with how to transit among them. In each round spent executing such a routine, we assume that a good agent will tell its current state to the other agents sharing the same node. Sometimes, we will require that an agent also tells extra information other than only its state: when such a situation arises, we will obviously precise this point.
Moreover, in the description of our states, we will use different expressions that are as follows. When we say ``agent $A$ enters state W'', we precisely mean that at the previous round, agent $A$ was in some state U $\neq$ W and at the current round, it is in state W. When we say ``agent $A$ exits state X'', we mean agent $A$ remains in state X until the end of the current round and is in some state V $\neq$ X at the following round. Lastly, when we say ``agent $A$ transits from state Y to state Z'', we mean agent $A$ exits state Y at the current round and enters state Z at the following one. Thus, in each round, agent $A$ is always exactly in at most one state.



	
	
	


	\section{Building blocks} \label{sec:block}

	To design our solution that is given in Section~\ref{sec:onlyn}, we need to describe two prior subroutines that will be used as building blocks.

	In the rest of this section, for each of both building blocks, we first explain the high level idea that is behind it. Then, we give a detailed description of it. Finally, we show its correctness and analyze its time complexity.

	\subsection{\stH{\swA}}

		The first building block called \swA{} takes as input three integers $\svI$, $n$ and $\svX$ such that $\svX\in\{0;1\}$. Let $x$ be an integer that is at least $f+2$. Roughly speaking, \stj{\swAA{\svI}{n}{\svX}} ensures that $(x-f)$ good agents finish the execution of the subroutine at the same round and in the same node in a graph of size at most $n$ provided the following two conditions are verified: the number of agents is at least $(x-1)(f+1)+1$, and all good agents start executing the subroutine in some interval lasting at most $\svI$ rounds, with the same parameters except for the last one that has to be $0$ (resp. $1$) for at least one good agent. The time complexity of the procedure is polynomial in the first two parameters $\svI$ and $n$.

		\subsubsection{High level idea}

			As mentioned previously, \stj{\swA} aims at ensuring that $x-f$ good agents finish the execution of the subroutine at the same round and in the same node. To achieve this, we have to face several difficulties, especially the fact that the agents know neither $x$ nor $f$, and also the fact that agents have a priori no mean to detect whether an agent is good or not. Indeed, we cannot have instructions like ``If there are at least $x-f$ good agents in my current node, then\dots''. We cannot even have ``If there are at least $x$ or $f$ agents in my current node, then...'' no matter whether there are some Byzantine agents or not in the current node. So, to circumvent these problems, \sth{\swA} is made of two phases. The first phase aims at ensuring that at least $x$ agents executing the first phase meet in the same node (even though the involved agents do not detect this event). This phase lasts exactly the same time for each good agent and when it finishes it, a good agent is at the node from which it started executing it. The second phase consists, for a good agent, in replaying in the same order the same edge traversals and waiting periods made during the first $i$ rounds of its first phase started at round $t$, such that $t+i$ is the round when the agent was with the maximal number of agents executing the first phase (if there are several such rounds, we choose the latest one). Once this is done, the agent stops executing \swA{}. By doing so, we have the guarantee that $x-f$ good agents (those involved in the last maximal meeting of the first phase) will stop executing the second phase (and thus \sth{\swA}) in the same node and at the same round, as all the meetings involving the maximal number of agents in the first phase, necessarily involve at least $x$ agents. Hence, the key of the procedure is to make $x$ agents meet in the first phase.  

During the execution of the first phase, the agents are partitioned into two distinct groups, namely \emph{followers} and \emph{searchers}. The first group corresponds to agents executing the subroutine with $\svX=0$ and the second group corresponds to those executing it with $\svX=1$. 

The first phase works in steps $1,2,\ldots,\mathcal{S}$ where $\mathcal{S}$ is some polynomial in $\svI$ and $n$. At a very high level, in each step, the main role of followers is to remain idle in their initial starting nodes in order to ``mark" possible positions on which $x$ agents could meet, while the main role of searchers is to look for these positions. To this end, each searcher will make use of a kind of map that it initially computes during the first step by making an entire traversal of the graph, using \sth{\svF{\svC}}. Actually, this map corresponds to a sequence $P$ of objects symbolizing every visited node $v$ along with the list of labels of the agents that are (or pretend to be) followers present in node $v$ at the time of the visit by the searcher. More precisely, the length of $P$ is equal to the number of visited nodes in \svF{\svC}, and the $i$-th object of $P$ contains, among other information, the set of all followers' labels present in the $i$-th visited node of the traversal. Note that such a map will be called \emph{imperfect map} as some nodes can be represented several times in the sequence $P$. Indeed \svF{\svC} guarantees that each node is visited at least once but some nodes may be visited more than once. The use of the qualifying term ``imperfect'' also stems from the fact that the list of followers' labels that are stored in $P$ may be plagued by artificial ones created by Byzantine agents. In all the other steps, the searchers never recompute a new imperfect map, but always use the one computed in the first step, along with some possible updates on the lists of labels. How and when these updates are applied is explained below: they are obviously related to ``bad behaviors'' coming from Byzantine agents.

For the convenience of the explanation, let us first consider an 
ideal situation in which there is a unique follower among the good agents. If there is no Byzantine agent, during the first step, by moving to the node that hosts the unique follower, all searchers meet in the same node: using their maps, they are all able to determine a path to this follower. Thus, if the number of good agents is at least $x$, there is necessarily a round in which $x$ agents meet in the same node. However, when Byzantine agents come into the picture, the problem becomes a tricky one, as these malicious agents can also pretend to have the status of followers (with the same label or not). Hence, all the searchers may not necessarily choose to move towards the same follower, which may prevent \emph{in fine} the meeting of $x$ agents. To deal with this issue and limit the confusion caused by Byzantine agents, in each step every good searcher $A$ proceeds as follows. Let $\ell$ be the smallest label in sequence $P$ (corresponding to the imperfect map of $A$) and let $i$ be the first object of $P$ in which $\ell$ appears. Agent $A$ moves to the node $u$ of the graph corresponding to the $i$-th object of $P$ in order to meet again the follower with label $\ell$ and waits some prescribed amount of rounds with it at node $u$. If $A$ does not see a follower with label $\ell$ when reaching $u$, or at some point during its waiting period at $u$ it does not see anymore any follower with label $\ell$, then agent $A$ updates its imperfect map by removing $\ell$ from the list of the $i$-th object. Then, in all cases, the agent will end up starting the following step (if any) with its possibly updated map. The total number $\mathcal{S}$ of steps has been carefully chosen so that it is larger than the total number of map updates that can be made by all good agents in the network. Hence, we can ensure the existence of a step in which there is no map update: in such a step we will be able to prove that the number of different locations that are reached by searchers is at most $f+1$. Thus, if the number of good agents is at least $(x-1)(f+1)+1$, using arguments relying on the \textit{pigeonhole principle}, we will prove the meeting of $x$ agents. Keep in mind that all the above explanations are made under the assumption there is a single good follower in the team. When there are more than one good follower, things get more complicated. For instance, observe in this case that even though the number of good agents is at least $(x-1)(f+1)+1$, our approach, without additional precautions, may fail to make at least $x$ agents meet on the same node as the number of good searchers may not be enough to ensure the meeting. Indeed, for a given number of agents, the more followers, the less searchers to distribute. However, through extra technical actions requiring sometimes some followers to end up behaving as a searcher, we will be able to overcome this issue and still ensure the meeting of $x$ agents provided the cardinality of the set of good agents is at least $(x-1)(f+1)+1$.

\subsubsection{Detailed description}

 
To describe \stj{\swA}, we use a function called \textit{IM} that takes as input two integers $n$ and $q \in \{0, 1\}$, and returns an ordered sequence $P$ of lists of labels: $P=<L_{1}, \dots, L_{\svG{\svC}}>$. The returned sequence $P$ is called an \emph{imperfect map}. When a given agent $A$ performs $IM(q,n)$, it actually executes \svF{\svC} with some additional actions. At each step of \svF{\svC}, depending on the value of $q$, $A$ checks the presence of a given agent or a group of agents to compute $P$. During the first step, the agent is at the node from which it starts $IM(q,n)$. Let us consider the $j^{th}$ step of \svF{\svC} ($j \in \{1, \dots , \svG{\svC}\})$ and let $u$ be the node on which $A$ is at this step. If $q=0$, $L_j$ is a list of pairwise distinct labels such that $\ell_B\in L_j$ iff there is on $u$ an agent $B$ with label $\ell_B$ being or pretending to be a follower. If $q=1$, $L_j$ is a list of pairwise distinct labels such that $\ell_B\in L_j$ iff there is on $u$ an agent $B$ with label $\ell_B$ being or pretending to be a follower in \sta{\szA}. \stA{\szA} is defined in the description of the algorithm. When $\ell_B$ is added to a list of $P$ by $A$, we say that $A$ records $B$. At the end of \svF{\svC}, the agent traverses all the edges traversed in \svF{\svC} in the reverse order, and then it exits $IM(q,n)$.


To facilitate the presentation of the formal description of \sth{\swA}, we also need the following two definitions.

\begin{definition}[Useful map]
\label{def:use}
An imperfect map $P$ is said to be useful iff $P$ contains a non-empty list.
\end{definition}

\begin{definition}[Index of a map]
\label{def:index}
Let $P=<L_{1}, \dots, L_{\svG{\svC}}>$ be a useful map. Let $S$ be the set of every label that appears in at least one list of $P$. Let $j$ be the smallest integer such that $L_j$ contains the smallest label of $S$: $j$ is the index of $P$.
\end{definition}



Now, we are ready to give the formal description of the subroutine. \stJ{\swAA{\svI}{n}{\svX}} comprises two phases: \textbf{Process} and \textbf{Build-up}. Let us consider a given agent $A$ executing \swAA{\svI}{n}{\svX} from an initial node $v$. When $\svX=0$, the agent is said to be a \textit{follower}. Otherwise, it is said to be a \textit{searcher}. 
The description is in the form of several states along with rules to transit among them. At the beginning of each state, the agent is in its initial node $v$. 

\begin{itemize}
\item  \textbf{Phase Process}. Agent $A$ proceeds in steps $1, 2, \dots, \mathcal{S}$ where $\mathcal{S}=n^2.\svI.\svG{\svC}+1$. Assume without loss of generality that $A$ is at step $s \in \{1, \dots, \mathcal{S}\}$. Unless stated explicitly, all the transitions between states which are presented below are performed within the same step. In all what follows {$\mathcal{H}=(n+1)[\svI+4\svG{\svC}+(\svG{\svC}. n)(\svI\svC+\svC)(2\svG{\svC}+\svI)]+3$}. We describe $A$'s behavior depending on the value of $bin$.

\begin{itemize}

\item $bin=0$ ($A$ is a follower). In this case, $A$ can be in one of the following states: \swO{}, \szA{}, \swG{} and \szB{}. At the beginning of each step $s$, agent $A$ is in \sta{\swO}. The actions to be performed in each state are presented in what follows. \\

\stb{\swO} Agent $A$ waits $2\svI+3\svG{\svC}$ rounds.  At the end of this waiting time, if $A$ is on the same node as at least one searcher, $A$ transits to \sta{\szA}. Otherwise, it transits to \sta{\swG}.\\

\stb{\szA} Agent $A$ waits $2\svI+ \svG{\svC} + \mathcal{H}$ rounds. If at each round of this waiting period, there is at least one searcher at node $v$, then at the end of the waiting period agent $A$ transits to \sta{\szB}. Otherwise, as soon as there is a round of the waiting period when there is no searcher at node $v$, agent $A$ transits to \sta{\swG} (hence the waiting period may be prematurely stopped).\\


\stb{\swG} Let $k$ be the number of rounds spent by agent $A$ in \\ \sta{\szA} of step $s$. 

Note that $k=0$ if $A$ transited directly to \sta{\swG} from \sta{\swO} in step $s$. Let $w$ be a counter, the initial value of which is $0$. The way this counter is incremented and decremented is explained below. 

While agent $A$ does not reach round $t+2\svI+ \svG{\svC} + \mathcal{H}-k$ where $t$ is the round when it entered this state in step $s$, it proceeds as follows (thus, what follows is then interrupted when reaching round $t+2\svI+ \svG{\svC} + \mathcal{H}-k$). Agent $A$ first waits $\svI$ rounds and then executes $IM(1,n)$. Once this is done, the agent has a map $P$. Each time $P$ is useful (refer to Definition~\ref{def:use}) and $w=0$, the agent performs the first $i-1$ edge traversals of \svF{\svC} from its initial node $v$ where $i$ is the index of $P$: just before each edge traversal, counter $w$ is incremented by one. Let us refer to the node reached at the end of these $i-1$ edge traversals by $u$. As long as there is a follower $B$ in \sta{\szA} on $u$ such that $\ell_B$ is the smallest label in the $i$-th list $L_i$ of $P$, $A$ remains idle. By contrast, if there is no such follower on $u$ in some round, agent $A$ updates $P$ by removing from $L_i$ its smallest element and then goes back to its initial node $v$ by performing the $(i-1)$ edge traversals executed above in the reverse order: just before each edge traversal of this backtrack, counter $w$ is decremented by one.

As soon as agent reaches round $t+2\svI+ \svG{\svC} + \mathcal{H}-k$, the agents proceeds as follows: if $w=0$, it transits to \sta{\szB}. Otherwise, if $w>0$, $A$ goes back to its initial node $v$ by traversing in the reverse order the sequence of $w$ edges $e_1,e_2,\ldots,e_k$ corresponding to the  $w$ first edge traversals of \svF{\svC} from node $v$: once this is done, it transits to \sta{\szB}.\\

\stb{\szB}
Let $x$ be the number of rounds elapsed from the beginning of the current step. Agent $A$ waits $5\svI+5\svG{\svC}+ \mathcal{H}-x$ 
 rounds. At the end of the waiting time, if $s<\mathcal{S}$, $A$ transits to \sta{\swO} of step $s+1$. Otherwise, $A$ transits to \sta{\szE} of phase \textbf{Build-up}. \\

\item $bin=1$ ($A$ is a searcher). Agent $A$ can be in one of the following states: 

\szC{}, \szD{} and \szB{}.

At the beginning of each step $s$, agent $A$ is in \sta{\szC}. We present in what follows the set of actions to be performed for each state.  \\

\stb{\szC} Agent $A$ first waits $\svI$ rounds. Next, if $s=1$ (first step of phase \textbf{Process}), $A$ executes $IM(0,\svC)$ and then transits to \sta{\szD}. The output of the execution of $IM(0,n)$ is stored in variable $Z$. This variable may be updated in the current step as well as the following ones: each time we will mention this variable, we will implicitly consider its up-to-date value. If $s>1$, $A$ waits $2\svG{\svC}$ rounds and then transits to \sta{\szD}. \\

\stb{\szD} In the case where $Z$ is not a useful map, $A$ transits to \sta{\szB}.
Otherwise, let $j$ and $\ell$ be the index of $Z$ and the smallest label in the $j$-th list of $Z$ respectively. Agent $A$  performs the first $j-1$ edge traversals of \svF{\svC}. Let $t$ be the round when agent $A$ finishes these first $j-1$ edge traversals, and let $u$ be the node reached by $A$ in round $t$. As soon as there is a round in $\{t+1,t+2,\ldots, t+2\svI+\svG{\svC}+ \mathcal{H}\}$ for which there is no follower $B$ at node $u$ such that label $\ell_B=\ell$, agent $A$ updates $P$ by removing $\ell$ from $L_j$ and goes back to its initial node $v$ by performing the $(j-1)$ edge traversals executed above in the reverse order. Once this backtrack is done, agent $A$ transits to \sta{\szB}.

If agent $A$ is still in \sta{\szD} in round $t+2\svI+\svG{\svC}+ \mathcal{H}$, 
it goes back to its initial node $v$ by performing the $(j-1)$ edge traversals executed above in the reverse order, and then it transits to \sta{\szB} (note that in this latter case, $Z$ remains unchanged).\\

\stb{\szB} Let $x$ be the number of rounds elapsed from the beginning of the current step. Agent $A$ waits 
$5\svI+5\svG{\svC}+ \mathcal{H}-x$ 
 rounds. At the end of the waiting time, if $s< \mathcal{S}$, then $A$ transits to \sta{\szC} of step $s+1$.  Otherwise, it transits to \sta{\szE} of phase \textbf{Build-up}.

\end{itemize}
\item \textbf{Phase Build-up}. Agent $A$ can only be in \sta{\szE}.

\emph{//At the beginning of this phase, the agent is at the node from which it started \sth{\swA} i.e., node $v$}

\stb{\szE} Let $r$ be the round in which $A$ initiated \swA{} and let $r+i$ be the round in phase \textbf{Process} in which $A$ is on a node containing the largest number of agents (including $A$ itself) that are not in \sta{\szE}. If there are several such rounds, it chooses the one with the largest value $i$. Denote by $r'$ the round in which the agent enters this state. From round $r'$ to $r'+i-1$, agent $A$ replays exactly the same waiting periods and edges traversals from round $r$ to $r+i-1$. More precisely, for each integer $y$ in $\{0,1,\ldots,i-1\}$, if agent $A$ remains idle (resp. leaves the current node via a port $o$) from round $r+y$ to round $r+y+1$, then agent $A$ remains idle (resp. leaves the current node via port $o$) from round $r'+y$ to $r'+y+1$.
In round $r'+i$, the agent stops the execution of \swA{}.
\end{itemize}

		\subsubsection{Correctness and complexity analysis} \label{sub:proof1}
		

 Let $\mathcal{E}$ and $\Delta$ be respectively the set of all good agents in the network and the first round in which an agent of $\mathcal{E}$ starts executing \swAA{\svI}{n}{\svX}. Let $x$ be an integer that is at least $f+2$. To conduct the proof of correctness as well as the complexity analysis, we assume in the rest of this subsection that $|\mathcal{E}|\geq (x-1)(f+1)+1$, every agent of $\mathcal{E}$ starts executing \swAA{\svI}{n}{\svX} at round $\Delta+\svI-1$ at the latest, and at least one agent of $A$ starts executing the procedure with $\svX=0$ (resp. $\svX=1$).




We start with the following lemma about the duration of each step and the duration of phase Process. Recall that $\mathcal{S}$ and $\mathcal{H}$ are polynomials in $n$ and $\svI$ given in the detailed description of \sth{\swA}.

\begin{lemma}
\label{lem:duration}
Let $A$ be an agent of $\mathcal{E}$. We have the following two properties.
\begin{enumerate}
\item Each step of phase Process executed by $A$ lasts exactly $5\svI+5\svG{\svC}+ \mathcal{H}$ rounds.
\item The execution of phase Process by agent $A$ lasts exactly $\mathcal{S}\cdot(5\svI+5\svG{\svC}+\mathcal{H})$ rounds. 
\end{enumerate}
\end{lemma}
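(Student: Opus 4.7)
The second statement of the lemma is an immediate consequence of the first, since phase Process is defined as the sequence of exactly $\mathcal{S}$ consecutive steps, so its total duration is $\mathcal{S}$ times the duration of a single step.

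The first statement hinges on the observation that \szB{} is designed as a \emph{padding} state. Indeed, if $x$ denotes the elapsed time of the current step when $A$ enters \szB{}, then by construction \szB{} waits exactly $5\svI+5\svG{\svC}+\mathcal{H}-x$ rounds, so the total duration of the step is automatically $5\svI+5\svG{\svC}+\mathcal{H}$ provided this waiting time is non-negative. The plan is therefore to check two properties: (i) every execution branch within a single step eventually transits to \szB{}, and (ii) the elapsed time $x$ upon reaching \szB{} is always at most $5\svI+5\svG{\svC}+\mathcal{H}$.

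Both properties will be verified by a case analysis on the value of $bin$. For a follower, \swO{} is entered at the start of the step and lasts exactly $2\svI+3\svG{\svC}$ rounds. From \swO{} the agent either transits to \szA{} or directly to \swG{}; from \szA{} it either completes its $2\svI+\svG{\svC}+\mathcal{H}$-round waiting window and transits to \szB{} (giving $x = 4\svI+4\svG{\svC}+\mathcal{H}$), or exits earlier into \swG{} after some $k\geq 0$ rounds. When \swG{} is entered, its main loop is explicitly upper-bounded by the absolute round $t + 2\svI+\svG{\svC}+\mathcal{H}-k$, after which the agent performs a backtrack of $w$ edges before entering \szB{}; since the counter $w$ is never incremented above the index of an imperfect map (which is at most $\svG{\svC}$), the backtrack costs at most $\svG{\svC}$ additional rounds. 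Summing the contributions $(2\svI+3\svG{\svC}) + k + (2\svI+\svG{\svC}+\mathcal{H}-k) + w$ yields $x \le 4\svI+5\svG{\svC}+\mathcal{H}$ in all follower subcases, safely below the bound.

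For a searcher, \szC{} takes essentially $\svI+2\svG{\svC}$ rounds (whether actually executing $IM(0,n)$ in step $1$ or waiting the equivalent duration in later steps), and \szD{} performs at most $2(j-1)\leq 2\svG{\svC}$ edge traversals surrounding a waiting window of length at most $2\svI+\svG{\svC}+\mathcal{H}$; hence $x \leq 3\svI+5\svG{\svC}+\mathcal{H}$. Together with the padding property of \szB{}, this establishes the equality for a single step, from which the second claim follows by multiplication. The main technical obstacle is the careful accounting of the \emph{final backtrack} performed at the end of \swG{}, whose length is not forced to align with an exact round; the key point is the designer's choice to cap the main loop at an absolute round and then bound the residual backtrack by $\svG{\svC}$, which is what keeps $x$ strictly below $5\svI+5\svG{\svC}+\mathcal{H}$ and allows \szB{} to pad the step exactly to the claimed length.
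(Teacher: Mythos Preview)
Your proof is correct and follows essentially the same approach as the paper: both reduce the second property to the first, identify \szB{} as a padding state, and verify by case analysis on $bin$ that the elapsed time $x$ upon entering \szB{} is at most $4\svI+5\svG{\svC}+\mathcal{H}$ (you give $4\svI+5\svG{\svC}+\mathcal{H}$ for followers and $3\svI+5\svG{\svC}+\mathcal{H}$ for searchers, exactly matching the paper). Your accounting of the final backtrack in \swG{} is in fact slightly more explicit than the paper's, which simply asserts the combined time in \szA{}/\swG{} is at most $2\svI+2\svG{\svC}+\mathcal{H}$ without spelling out the $w\leq\svG{\svC}$ bound.
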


\begin{proof}
According to the algorithm, $\mathcal{S}$ corresponds to the number of steps in phase Process. So, if the first property holds, the second one also holds. Hence, to prove the lemma, it is enough to prove that the first property is true: this will be the purpose of the rest of this proof.

Let $s$ be a step of phase Process executed by agent $A$. Let us first prove that $A$ transits to \sta{\szB} of step $s$ after having spent at most $4\svI+5\svG{\svC}+\mathcal{H}$ rounds in this step. Depending on the value of $bin$, $A$ can be either a searcher or a follower. We consider the two cases.

\begin{itemize}
\item $A$ is a follower. The state of $A$ in the first round of every step $s$ of phase Process during the execution of \swA{} is \swO{}. Agent $A$ spends $2\svI+3\svG{\svC}$ rounds in \sta{\swO} before transiting to either \sta{\szA} or \sta{\swG} depending on whether there is a searcher on the same node as $A$ at the end of this waiting time. Agent $A$ remains in either \sta{\szA} or \swG{} at most $2\svI+ 2\svG{\svC} + \mathcal{H}$ rounds before transiting to \sta{\szB}. Hence, agent $A$ spends at most $4\svI+5\svG{\svC}+\mathcal{H}$ in step $s$ before transiting to \sta{\szB}.  

\item $A$ is a searcher. The state of $A$ in the first round of \swA{} is \szC{}. First, agent $A$ waits $\svI$ rounds. Next, if $s=1$, $A$ executes $IM(0,n)$ that lasts $2\svG{\svC}$ rounds before transiting to \sta{\szD}. Otherwise $s>1$ and $A$ waits $2\svG{\svC}$ rounds before transiting to \sta{\szD}. That is, in both cases, $A$ spends $\svI+2\svG{\svC}$ in total before transiting to \sta{\szD}. Once $A$ transits to \sta{\szD}, if $Z$, the output of $IM(0,n)$ performed while in \sta{\szC} in the first step of phase Process, is not a useful map, $A$ transits to \sta{\szB} and the lemma holds. If by contrast, $Z$ is useful then $A$ performs the first $(j-1)$ edge traversals of \svF{\svC} where $j$ is the index of $Z$. Agent $A$ waits at most
$2\svI+\svG{\svC}+ \mathcal{H}$ rounds (with a follower) and then performs less than $\svG{\svC}$ edge traversals to retrieve its initial position before transiting to \sta{\szB}. So, agent $A$ spends at most $3\svI+5\svG{\svC}+\mathcal{H}$ in step $s$ before transiting to \sta{\szB}.
 
\end{itemize}

Hence, whether $A$ is a follower or not, it spends at most $x\leq 4\svI+5\svG{\svC}+\mathcal{H}$ rounds in step $s$ before transiting to \sta{\szB} of step $s$. However, according to \sta{\szB}, the agent waits exactly $5\svI+5\svG{\svC}+ \mathcal{H}-x$ rounds before leaving step $s$. Hence, the lemma holds.
\end{proof}

From the previous lemma, we get the following corollary and remark.

\begin{corollary}\label{block:cor:delay}
Let $A$ and $B$ be any two good agents of $\mathcal{E}$ such that $t_A-t_B\geq0$, where $t_A$ (resp. $t_B$) is the round when $A$ (resp. $B$) starts executing \swA{}.
For every step $s$ of phase Process, agent $A$ finishes executing $s$, exactly $t_A-t_B$ rounds after $B$ finishes executing it.
\end{corollary}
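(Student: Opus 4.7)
The plan is to show that this corollary is an almost immediate consequence of the first property of Lemma~\ref{lem:duration}, combined with the observation that in \sth{\swA}, each step transitions into the next with no intermediate idle time. So the idea is to prove by induction on $s$ that, for every good agent $X \in \mathcal{E}$, the round at which $X$ finishes step $s$ of phase Process is exactly $t_X + s\cdot(5\svI+5\svG{\svC}+\mathcal{H}) - 1$, where $t_X$ is the round at which $X$ starts executing \swA{}. Taking the difference between the finish rounds for $A$ and $B$ immediately yields the claimed gap of $t_A - t_B$ rounds.

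For the base case $s=1$, Lemma~\ref{lem:duration} (property 1) states that the first step lasts exactly $5\svI+5\svG{\svC}+\mathcal{H}$ rounds, so agent $X$, which starts step $1$ at round $t_X$, finishes it at round $t_X + (5\svI+5\svG{\svC}+\mathcal{H}) - 1$. For the induction step, the description of \sta{\szB} is key: whenever a good agent finishes step $s-1$ (with $s-1 < \mathcal{S}$), it transits immediately to \sta{\swO} (if it is a follower) or \sta{\szC} (if it is a searcher) of step $s$. Hence step $s$ starts in the round just after step $s-1$ ends, and by Lemma~\ref{lem:duration} again it lasts exactly $5\svI+5\svG{\svC}+\mathcal{H}$ rounds. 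This gives the induction step.

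The crucial point to underline is that the duration of each step provided by Lemma~\ref{lem:duration} is an exact and uniform value that does not depend on the agent's role (follower or searcher), nor on the value of $\svX$, nor on the interactions with other agents: the mechanism in \sta{\szB} pads any premature exit from the preceding states so that the total step length is always the same constant. Because of this, the offset $t_A - t_B$ between the respective start rounds of $A$ and $B$ is preserved through every step of phase Process. I do not foresee any real obstacle here; the only thing to be slightly careful about is that the inductive argument relies on the fact that an agent never leaves phase Process before finishing step $\mathcal{S}$ (it only transitions to phase Build-up after state \szB{} of step $\mathcal{S}$), which is explicit in the descriptions of both \sta{\szB} variants.
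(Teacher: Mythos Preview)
Your proposal is correct and matches the paper's approach: the paper simply states this corollary as an immediate consequence of Lemma~\ref{lem:duration} with no further argument, and your induction on $s$ just spells out the obvious details behind that remark.
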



In the following, by initial node we mean the node from which the agent starts executing \sth{\swA}. Note that the statement of Lemma~\ref{block:lem:snapshot} calls for the notion of ``recording'' that is introduced in the description of function $IM$.

\begin{lemma}\label{block:lem:snapshot}
	Let $A$ be a good searcher of $\mathcal{E}$. For every follower $B$ in $\mathcal{E}$, agent $A$ records $B$ during its execution of $IM(0,n)$ when $B$ is on its initial node.
\end{lemma}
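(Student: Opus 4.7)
The plan is to compare the time windows during which $A$ performs the forward traversal inside $IM(0,n)$ with the time window during which $B$ stays idle at its initial node in \sta{\swO}, and show that the former is contained in the latter, so that $A$ necessarily meets $B$ at $B$'s initial node during the forward traversal and records it.

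First, I would pin down the relevant rounds. Let $t_A$ and $t_B$ denote the rounds in which agents $A$ and $B$ respectively begin executing \swA{}. By the hypothesis of this subsection, both $t_A$ and $t_B$ belong to the interval $[\Delta,\Delta+\svI-1]$, so $|t_A-t_B|\leq \svI-1$. Since $A$ is a searcher, during its first step of phase Process it is in \sta{\szC}, where it first waits $\svI$ rounds and then calls $IM(0,n)$. Hence the forward part of $IM(0,n)$ (the execution of \svFF{} itself, during which labels get recorded) takes place during the interval $I_A=[t_A+\svI,\, t_A+\svI+\svG{\svC}-1]$ of $\svG{\svC}$ consecutive rounds.

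Next, since $B$ is a follower, in step $1$ it starts in \sta{\swO} at its initial node and stays there, idle, during the whole waiting period of length $2\svI+3\svG{\svC}$. Thus throughout the interval $I_B=[t_B,\, t_B+2\svI+3\svG{\svC}-1]$, agent $B$ is located at its initial node and is a follower (in the sense of the description of $IM$, since it is in a follower state). I would then check the two inequalities showing $I_A\subseteq I_B$: from $t_A-t_B\geq -(\svI-1)$ we get $t_A+\svI\geq t_B+1>t_B$, and from $t_A-t_B\leq \svI-1$ we get $t_A+\svI+\svG{\svC}-1\leq t_B+2\svI+\svG{\svC}-2 \leq t_B+2\svI+3\svG{\svC}-1$. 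Hence during every round of $I_A$, agent $B$ sits at its initial node while being a follower.

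Finally, procedure \svFF{} run on parameter $\svC\geq n$ is guaranteed by Reingold's exploration to visit every node of the graph at least once during its $\svG{\svC}$ rounds. In particular, there is at least one round of $I_A$ in which $A$ is located at $B$'s initial node. By the step above, $B$ is then on that same node and is a follower, so the definition of $IM(0,n)$ forces $A$ to add $\ell_B$ to the corresponding list $L_j$, that is, $A$ records $B$ while $B$ is at its initial node, which is what the lemma claims. The only subtlety worth isolating is the timing argument above; once $I_A\subseteq I_B$ is established, the conclusion follows immediately from the specification of \svFF{} and of $IM$.
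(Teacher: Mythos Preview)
Your argument is correct and follows the same approach as the paper's proof: both hinge on the observation that the forward traversal of $IM(0,n)$ by the searcher, which begins after an $\svI$-round wait, is entirely contained in the follower's initial $2\svI+3\svG{\svC}$-round waiting period in \sta{\swO}, up to the initial delay of at most $\svI-1$. The paper's proof is simply a condensed version of yours, omitting the explicit interval inclusion $I_A\subseteq I_B$ that you spell out.
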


\begin{proof}
	According to the algorithm, agent $A$ executes $IM(0,n)$ while in \sta{\szC} of step $1$. More precisely, when starting step $1$, agent $A$ first waits $\svI$ rounds and then executes $IM(0,n)$ that lasts $2\svG{\svC}$ rounds. On the other hand, agent $B$ waits $2\svI+3\svG{\svC}$ rounds in its initial node at the beginning of step $1$. Hence, in view of the initial delay $\svI$, the lemma follows.
\end{proof}

To continue, we need to introduce the definition of \emph{target node}. A node $u$ is said to be a target node of a good searcher $A$ in a step $s>1$, if $u$ is the node that is reached after performing the first $(j-1)$ edge traversals of \svF{\svC} from the initial node of $A$ and $j$ is the index of the imperfect map of $A$ at the beginning of its execution of step $s$.

\begin{lemma}\label{Block:lem:together}
Let $A$ be a searcher of set $\mathcal{E}$ starting a step $s$ with a useful map $P$, the index of which is $j$. Let $\ell$ be the smallest label in the $j$-th list of $P$. If the target node of $A$ in step $s$ is the initial node of a good follower $B$ such that $\ell_B=\ell$, then $A$ does not update $P$ in any step $s'\geq s$.
\end{lemma}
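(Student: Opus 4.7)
The plan is to show that no update to $P$ occurs in step $s$ by exhibiting a tight round-by-round synchronization between $A$ and $B$, and then extend the conclusion to every subsequent step $s' \geq s$ by a short induction. Throughout, write $T = 5\svI + 5\svG{\svC} + \mathcal{H}$ for the per-step duration given by Lemma~\ref{lem:duration}, and let $t_A$ (resp.\ $t_B$) be the round in which $A$ (resp.\ $B$) starts \swA{}. By hypothesis $|t_A - t_B| \leq \svI - 1$, and each good agent returns to its initial node before entering \sta{\szB} at the end of every step, so both $A$ and $B$ are at their respective initial nodes at the first round of step $s$.

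The first part of the proof is a direct timing computation. Agent $A$ enters \sta{\szD} at round $t_A + (s-1)T + \svI + 2\svG{\svC}$, performs the first $(j-1)$ edge traversals of \svF{\svC}, and thus reaches the target node $u$ at round $R_A = t_A + (s-1)T + \svI + 2\svG{\svC} + (j-1)$. Meanwhile $B$ remains at $u$ in \sta{\swO} for the $2\svI + 3\svG{\svC}$ rounds starting from $t_B + (s-1)T$, and at the end of this waiting time transits to \sta{\szA} iff a searcher is at $u$. Using $t_A - t_B \leq \svI - 1$ and $j - 1 \leq \svG{\svC} - 1$, one checks directly that $R_A$ lies strictly inside $B$'s \sta{\swO} window, so $A$ is at $u$ when $B$ makes its decision, and $B$ accordingly transits to \sta{\szA}.

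The next part is a simple induction on rounds showing that both agents remain together at $u$ throughout the entire waiting window of length $2\svI + \svG{\svC} + \mathcal{H}$ started by $A$ at round $R_A$. At each such round, provided both are at $u$: $A$ sees a follower (namely $B$) with label $\ell$ at $u$, so the condition of \sta{\szD} is satisfied and $A$ does not leave; and $B$, being in \sta{\swO} (for rounds before the transition round $R_B$) or in \sta{\szA} with the searcher $A$ at $u$ (for later rounds), does not leave either. Since both waiting windows have the same length and $R_A \leq R_B$, $A$'s window ends no later than $B$'s, which means $A$ is still in \sta{\szD} at the final round of its own window. Per the algorithm, $A$ then backtracks to its initial node without updating $P$.

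It remains to extend the conclusion to any $s' \geq s$, which is a short induction. Since $P$ is unchanged at the end of step $s$, the index $j$ and the smallest label $\ell$ in the $j$-th list of $P$ are unchanged in step $s+1$; as $A$ and $B$ again start step $s+1$ at their respective initial nodes (including $B$ at $u$), the analysis above applies verbatim, and the induction continues for all $s' \in \{s, \ldots, \mathcal{S}\}$. The main obstacle is the careful timing bookkeeping needed to verify $R_A \leq R_B$ and that $A$'s waiting window closes first: this is precisely where the specific durations $\svI$ for the initial wait in \sta{\szC} and $2\svI + 3\svG{\svC}$ in \sta{\swO} are calibrated to absorb the inter-agent delay of at most $\svI - 1$.
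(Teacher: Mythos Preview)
Your proof is correct and follows essentially the same approach as the paper's own argument: show that $A$ arrives at $u$ while $B$ is still in \sta{\swO}, verify that the two agents then mutually hold each other at $u$ for the full $2\svI+\svG{\svC}+\mathcal{H}$ window so that no update occurs, and induct on steps since an unchanged $P$ reproduces the same index and target. Your version is simply more explicit about the round arithmetic and the mutual round-by-round invariant than the paper, which compresses the same reasoning into two sentences invoking Corollary~\ref{block:cor:delay} and the state descriptions.
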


\begin{proof}
We prove by induction on $i\geq0$ that $A$ does not update $P$ in step $s+i$. First consider, the initial step in which $i=0$.
Assume by contradiction that the target node of $A$ in step $s$ is the initial node $u$ of a good follower $B$ such that $\ell_B=\ell$, but $A$ updates $P$ in step $s$. According to \sth{\swA}, agent $A$ reaches node $u$ while in \sta{\szD}. Then, agent $A$ updates $P$ in step $s$ only if agent $A$ does not meet agent $B$ when reaching target node $u$ or $A$ notices the absence of $B$ on $u$ within  $2\svI+ \svG{\svC}+\mathcal{H}$ rounds after its meeting with $B$. However, when agent $B$ starts step $s$, it first waits $2\svI+ 3\svG{\svC}$ in \sta{\swO}. Hence in view of Corollary~\ref{block:cor:delay} and the definition of $\svI$, agent $A$ meets $B$ when reaching target node $u$ while $B$ is in \sta{\swO}: indeed agent $A$ spends {$\svI+2\svG{\svC}$} rounds in \sta{\szC} and at most $\svG{\svC}$ rounds in \sta{\szD} before reaching its target node $u$ at some round $t$. Moreover, since agent $A$ and $B$ are good, according to states \szA{} and \szD{} agent $A$ remains with $B$ at node $u$ at least $2\svI+ \svG{\svC}+\mathcal{H}$ rounds after round $t$. As a result, agent $A$ does not update $P$ in step $s$, which is a contradiction and proves the first step of the induction. Now consider there exits a positive integer $i'$ such that the property holds for all $i\leq i'$. If the last step of \sth{\swA} is step $s+i'$, the lemma directly follows. Otherwise, note that agent $A$ begins step $s+i'+1$ with the exact same map as in step $s+i'$. Hence using the same arguments as in step $s+i'$, agent $A$ does not update $P$ in step $s+i'+1$. This closes the induction and proves the lemma.   
\end{proof}

Note that in view of Lemmas~\ref{block:lem:snapshot} and~\ref{Block:lem:together}, we know that the imperfect map of every searcher of $\mathcal{E}$ remains always useful. In other terms, each of them always have a target node in every step of \sth{\swA}. This is stated in the following proposition.

\begin{proposition}
\label{prop:useful}
The map of every searcher of $\mathcal{E}$ is always useful.
\end{proposition}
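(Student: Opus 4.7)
The plan is to exploit Lemmas~\ref{block:lem:snapshot} and~\ref{Block:lem:together} to show that for every good searcher $A \in \mathcal{E}$ there is a distinguished label that enters $A$'s imperfect map at construction time and can never be deleted. Since by hypothesis at least one good agent of $\mathcal{E}$ starts the procedure with $\svX = 0$, I fix a good follower $B \in \mathcal{E}$; the whole argument will be carried out relative to this single $B$, and it will suffice to keep one ``slot'' of $\ell_B$ alive in the map of $A$ forever.

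The first step is to pinpoint a privileged index $k$ in the map $Z$ of $A$. Lemma~\ref{block:lem:snapshot} guarantees that while $A$ is executing $IM(0,n)$ there is at least one step of \svF{\svC} in which $A$ stands on $B$'s initial node $u_B$ while $B$ is still there; call this step $k$. By the definition of $IM$, $\ell_B$ is then inserted into the list $L_k$ of the map $Z$ produced by that call. The crucial geometric observation is that for any later step of phase \textbf{Process}, the ``target node'' of $A$ associated with the index $k$ is exactly $u_B$: both the target node and the $k$-th node visited by $IM$ are obtained by performing the same first $k-1$ edge traversals of \svF{\svC} from the initial node of $A$.

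The main step is to show that $\ell_B$ is never removed from $L_k$. The only deletion mechanism in \sta{\szD} removes the smallest label of $L_j$, where $j$ is the current index of $Z$. Suppose for contradiction that $\ell_B$ is removed from $L_k$ in some step $s$; then at the beginning of $s$ the index of $Z$ must equal $k$ and $\ell_B$ must be the smallest label in $L_k$. By the observation of the previous paragraph, the target node of $A$ in step $s$ is then $u_B$, the initial node of the good follower $B$ with $\ell_B$ equal to this smallest label. This matches the hypothesis of Lemma~\ref{Block:lem:together} exactly, so $A$ does not update $Z$ in step $s$ (nor in any subsequent step), contradicting the claimed removal. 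Hence $\ell_B \in L_k$ persists throughout the execution of phase \textbf{Process}, so the map of $A$ always contains a non-empty list and is useful by Definition~\ref{def:use}.

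I do not foresee any real obstacle: the statement is essentially an immediate corollary of the two preceding lemmas once the correct list $L_k$ is identified. The only point that needs some care is the semantics of ``index'' and ``target node'' when $\ell_B$ also happens to appear in lists $L_{k'}$ with $k' \neq k$ (which a Byzantine impersonation could cause), but this is harmless: we only need one occurrence of $\ell_B$ to survive for the map to stay useful, and the argument above never touches the other occurrences.
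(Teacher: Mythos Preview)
Your argument is correct and is exactly the approach the paper has in mind: the paper states the proposition immediately after Lemmas~\ref{block:lem:snapshot} and~\ref{Block:lem:together} with only the sentence ``in view of Lemmas~\ref{block:lem:snapshot} and~\ref{Block:lem:together}, we know that the imperfect map of every searcher of $\mathcal{E}$ remains always useful,'' and you have simply spelled out that deduction. Your handling of the one delicate point---that Byzantine copies of $\ell_B$ in other lists $L_{k'}$ are irrelevant because only the specific occurrence in $L_k$ needs to survive---is the right observation.
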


In order to prove the main result of this section, i.e., Theorem~\ref{Block:theo:Principal-2}, we need the next three lemmas.

\begin{lemma}\label{block:lem:no-update}
If $f < n$, then there exists an integer $s$ in $\{1,\cdots,\mathcal{S}\}$ such that no searcher in set $\mathcal{E}$ updates its imperfect map $P$ during its execution of step $s$.
\end{lemma}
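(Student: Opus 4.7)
The plan is a contradiction-plus-pigeonhole argument. I would suppose that every step $s \in \{1, \ldots, \mathcal{S}\}$ contains at least one update performed by some good searcher of $\mathcal{E}$, so that the total number of updates summed over all good searchers and all steps is at least $\mathcal{S}$; I would then derive a contradiction by bounding this total strictly below $\mathcal{S} = n^2 \cdot \svI \cdot \svG{\svC} + 1$.

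The first observation is that $IM(0,n)$ is executed only in step $1$, so thereafter the imperfect map of each good searcher can only shrink, and the total number of updates performed by a searcher $A$ is bounded by the initial number of (label, list-index) pairs stored in its map. The next step is to characterize precisely which pairs can actually be removed. By Lemma~\ref{block:lem:snapshot}, every good follower $B \in \mathcal{E}$ stays on its initial node throughout $A$'s execution of $IM(0,n)$; therefore any pair $(\ell_B, L_j)$ that was placed in $A$'s map by $B$ itself lies in a list $L_j$ whose $j$-th visited node equals $B$'s initial node. By Lemma~\ref{Block:lem:together}, the first time $A$ targets such a ``correctly placed'' pair, $A$ succeeds and never updates again. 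Combining these two facts, every pair ever removed from $A$'s map must have been inserted by a Byzantine agent impersonating a follower.

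I would then count Byzantine insertions: since a Byzantine agent occupies exactly one node per round, it contributes at most one pair per step of the $\svG{\svC}$-step traversal $IM(0,n)$, giving at most $f \cdot \svG{\svC}$ Byzantine-inserted pairs in $A$'s initial map and hence at most $f \cdot \svG{\svC}$ updates by $A$ over the whole of phase Process. Summing this per-searcher bound over the good searchers of $\mathcal{E}$, together with the hypothesis $f < n$ and the definition of $\mathcal{S}$, yields a total of at most $n^2 \cdot \svI \cdot \svG{\svC}$ updates, contradicting the lower bound of $\mathcal{S}$ obtained from the pigeonhole assumption.

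The main obstacle will be the characterization step: a good follower's label can simultaneously appear in $A$'s map at the ``correct'' list-index (placed by the follower) and at ``incorrect'' list-indices (placed by Byzantine impersonators under a stolen identity), so one must argue carefully that the monotone target-selection rule of state \szD{}, together with the halting guarantee of Lemma~\ref{Block:lem:together}, ensures that only the incorrect occurrences are ever removed.
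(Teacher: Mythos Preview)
Your per-searcher bound of $f\cdot\svG{\svC}$ updates is correct, and your derivation via Lemmas~\ref{block:lem:snapshot} and~\ref{Block:lem:together} is essentially the paper's own (the paper records it as $n\cdot\svG{\svC}$ after invoking $f<n$). The ``characterization step'' you flag as the main obstacle is in fact not one: the contrapositive of Lemma~\ref{Block:lem:together} says that whenever a good searcher removes an entry, that entry does \emph{not} point to a good follower sitting at its own initial node; combined with Lemma~\ref{block:lem:snapshot} this already pins every removed entry on a Byzantine insertion, with no further analysis of the target-selection rule required.

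The genuine gap is the summation. You claim that summing the per-searcher bound over the good searchers of $\mathcal{E}$ yields at most $n^2\cdot\svI\cdot\svG{\svC}$ total updates; this silently assumes that the number of good searchers is at most roughly $n\svI$. No such bound is available: in this model several agents may share a node, the team size may exceed $n$, and nothing in the hypotheses of the subsection caps $|\mathcal{E}|$. With arbitrarily many good searchers your total-update count is unbounded and the contradiction evaporates.

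The paper closes exactly this hole with one additional observation you are missing: two good searchers that start \sth{\swA} from the same node in the same round compute the same imperfect map and thereafter perform identical updates in identical steps. Hence the good searchers fall into at most $n\cdot\svI$ behaviour classes, and what one bounds is the number of \emph{steps in which some class performs an update}, which is at most $(n\svI)\cdot(n\svG{\svC})=n^2\svI\svG{\svC}<\mathcal{S}$. Insert this grouping and count classes instead of individual agents, and your argument goes through.
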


\begin{proof}
Assume for the sake of a contradiction that for each $s$ in $\{1,\cdots,\mathcal{S}\}$ there is at least one searcher of $\mathcal{E}$ that updates the output of its imperfect map $P$ during its execution of step $s$. According to \sth{\swA}, every searcher $A$ in $\mathcal{E}$ executes $IM(0,n)$ to compute $P$. When $A$ performs $IM(0,n)$, $A$ records all the followers it meets during the execution of \svF{\svC} in $IM(0,n)$. In particular, for each visited node $A$ can record at most $f$ Byzantine agents. This leads to at most $f$ ``wrong'' labels in each list of $P$. Since $f < n$, in view of Lemma~\ref{Block:lem:together}, each searcher of $\mathcal{E}$ performs at most $n. \svG{\svC}$ updates of $P$. Note that, two distinct searchers of $\mathcal{E}$ which start executing \sth{\swA} from the same node and at the same round act exactly in the same manner: in particular, they traverse the same edges synchronously, compute the same imperfect map and make the same updates at the same time. Hence, taking into account the maximum delay $\svI$, we know that the number of rounds in which we have a searcher of $\mathcal{E}$ making an update of its imperfect map is upper bounded by $\mathcal{U}= \svI\svC^2\svG{\svC}$. However, according to the algorithm $S=\mathcal{U}+1$. Hence, we get a contradiction, which proves the lemma.
\end{proof}

In view of Lemma \ref{block:lem:no-update}, we can define $s_{min}$ as being the first step for which there is no updates made by a searcher of $\mathcal{E}$. 
 
\begin{lemma}\label{block:lem:xagents}
If $f <n$, then there exist a round $\alpha$ and a node $v$ such that $x$ agents meet on node $v$ at round $\alpha$, and a searcher of $\mathcal{E}$ is in \sta{\szD} at round $\alpha$.
\end{lemma}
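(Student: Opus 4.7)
The plan is to identify a step in which every good searcher has settled in state \sta{\szD} at its target node, and then to use a pigeonhole argument over these target nodes to find one that gathers at least $x$ agents. Since $f<n$, Lemma~\ref{block:lem:no-update} provides a first step $s_{min}\in\{1,\dots,\mathcal{S}\}$ during which no good searcher of $\mathcal{E}$ updates its imperfect map, and by Proposition~\ref{prop:useful} every good searcher has a well-defined target node in $s_{min}$. Combining Corollary~\ref{block:cor:delay} with the fact that the waiting times in \sta{\szD} and \sta{\szA} both contain the dominant additive term $\mathcal{H}$, I would then exhibit a common window $T^*$ of rounds inside $s_{min}$ during which every good searcher of $\mathcal{E}$ is simultaneously in \sta{\szD} at its own target node $u_A$, and every good follower of $\mathcal{E}$ whose initial node is targeted by at least one good searcher is simultaneously in \sta{\szA} at its initial node.

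Next I would analyse who must ``pin'' each target node. Because $A$ does not update its map in $s_{min}$, at every round of $T^*$ there is some agent at $u_A$ claiming label $\lambda_A$ (the smallest label of the $j_A$-th list of $A$'s map) and pretending to be a follower. Let $F^*$ be the good follower with smallest label $\ell^*$ in $\mathcal{E}$, and $u^*$ its initial node. By Lemma~\ref{block:lem:snapshot}, $\ell^*$ appears in every searcher's map, so $\lambda_A\leq\ell^*$ for every good searcher $A\in\mathcal{E}$. If $\lambda_A<\ell^*$, no good agent carries that label, so the pinner at $u_A$ must be Byzantine; if $\lambda_A=\ell^*$, the pinner is either $F^*$ itself (forcing $u_A=u^*$) or again a Byzantine. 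Fixing any round $\alpha\in T^*$, the $f$ Byzantine agents occupy at most $f$ nodes at $\alpha$ and only $u^*$ can be pinned by a good follower, so the set $T$ of target nodes of good searchers in $\mathcal{E}$ satisfies $|T|\leq f+1$.

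Writing $S$ (resp. $F$) for the number of good searchers (resp. good followers) in $\mathcal{E}$, the pigeonhole principle yields some $u^\dagger\in T$ that is the target of at least $\ceil{S/(f+1)}$ good searchers; together with its pinner, this already gives $\ceil{S/(f+1)}+1$ agents at $u^\dagger$ throughout $T^*$. If this count already reaches $x$, the lemma follows; otherwise $F$ must be comparatively large, and I would exploit the follower state machine to inflate the count. A good follower of $\mathcal{E}$ whose initial node is not targeted by any good searcher leaves \sta{\swO} for \sta{\swG}, runs $IM(1,n)$, and heads for the smallest-labelled follower it detects in \sta{\szA}; when $u^*\in T$, the follower $F^*$ is in \sta{\szA} at $u^*$ during $T^*$, attracting those \sta{\swG}-followers whose smallest detected label equals $\ell^*$. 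A case analysis depending on whether $u^\dagger=u^*$, and on how many \sta{\swG}-followers are redirected by Byzantine impersonations of labels smaller than $\ell^*$, combined with the assumption $F+S\geq(x-1)(f+1)+1$, will then produce a round $\alpha\in T^*$ at which $u^\dagger$ (or $u^*$) hosts at least $x$ agents, at least one of which is a good searcher of $\mathcal{E}$ in \sta{\szD}.

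The main obstacle I expect is precisely this last accounting step. When $F$ is large, the pigeonhole on searchers alone does not suffice, and one must leverage the fact that a Byzantine agent pinning a target node at round $\alpha$ cannot simultaneously be elsewhere impersonating a very small label to deflect a \sta{\swG}-follower. Turning this informal ``budget'' of $f$ Byzantine agents shared between pinning and deflecting into a quantitative bound that, together with $F+S\geq(x-1)(f+1)+1$, forces $x$ agents onto a single node at some $\alpha\in T^*$ is the technically delicate part of the proof; the earlier topological setup (existence of $s_{min}$, of $T^*$, and of the bound $|T|\leq f+1$) should be comparatively routine once Corollary~\ref{block:cor:delay}, Lemmas~\ref{block:lem:snapshot} and~\ref{block:lem:no-update}, and Proposition~\ref{prop:useful} are in place.
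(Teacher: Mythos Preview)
Your high-level setup is right: using $s_{min}$ from Lemma~\ref{block:lem:no-update}, exhibiting a window in which every good searcher is parked at its target node, and bounding the set of target nodes by $f+1$ via the ``pinner'' argument. That part matches the paper (its Claims~1--3). The gap is in the accounting step you yourself flag as delicate.

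First, your pigeonhole is over the $S$ good searchers only, and you then try to top up the count with followers on a case-by-case basis. The paper instead shows that \emph{all} good agents are simultaneously located in at most $f+1$ nodes at some round, and applies the pigeonhole to the full $|\mathcal{E}|\geq(x-1)(f+1)+1$. This is what makes the argument close; a separate budget for searchers and followers does not.

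Second, your treatment of followers is incorrect in two places. You assert that a good follower whose initial node is not targeted by any good searcher leaves \sta{\swO} for \sta{\swG}; but a follower transits to \sta{\szA} as soon as \emph{any} agent claiming to be a searcher is present, so a single Byzantine agent can keep a whole node of good followers stuck in \sta{\szA} at a non-target node. The paper handles these followers as a separate set $\mathcal{F}_{\mathcal{Q}'}$, each of whose nodes consumes one Byzantine agent. You also suggest that \sta{\swG}-followers will head for $u^*$ because $\ell^*$ is the smallest follower label; but in \sta{\swG} they run $IM(1,n)$, which records agents \emph{claiming} to be followers in \sta{\szA}, so Byzantine agents can again plant smaller labels and redirect them.

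Third, your ``budget'' argument (a Byzantine agent pinning a target at round $\alpha$ cannot be elsewhere deflecting a \sta{\swG}-follower) does not work, because the deflection happens during the follower's execution of $IM(1,n)$, not at round $\alpha$. A Byzantine agent can plant a fake label at time $t_1$ and pin a target at time $t_2\ne t_1$. The paper's fix is a \emph{second} stabilization argument inside the long window of length $\mathcal{H}$: it bounds the number of distinct imperfect maps among \sta{\swG}-followers, bounds their total updates, and finds a sub-window in which none of them updates (Claim~4). At that point the \sta{\swG}-followers' target nodes, together with $\mathcal{Q}$ and the $\mathcal{F}_{\mathcal{Q}'}$-nodes, number at most $f+1$, and every good agent sits in one of them. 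This secondary no-update argument, and the resulting global bound $|\mathcal{Q}|+|\mathcal{Q}'|+|\mathcal{Q}''|\leq f+1$, is the missing idea in your plan; without it the follower accounting cannot be closed.
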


\begin{proof}
In order to prove the lemma, we first proceed by proving a series of $4$ claims. We start by introducing some notations that will facilitate the conduct of this proof.


Let $\mathcal{Q}$ be the set of nodes verifying the following condition: a node $u$ is in $\mathcal{Q}$ if $u$ is a target node of a searcher of $\mathcal{E}$ in step $s_{min}$. In view of Proposition~\ref{prop:useful}, $\mathcal{Q}\ne\emptyset$. Let $\mathcal{F}_{\mathcal{Q}}$ be the set of followers of $\mathcal{E}$ being on a node of $\mathcal{Q}$ at the beginning of their execution of step $s_{min}$.
Let $\rho$ be the last round in which a follower of $\mathcal{E}$ is in \sta{\swO}{} before entering either \sta{\swG} or \sta{\szA} (at round $\rho+1$) during its execution of step $s_{min}$.

\noindent \textbf{Claim 1} At round $\rho$, every searcher $A$ of $\mathcal{E}$ is on its target node. Moreover, $A$ remains on its target node for at least $\mathcal{H}$ rounds after round $\rho$.

\noindent \textbf{Proof of Claim 1}
According to \sth{\swA} and the maximal delay $\svI$, at round $\rho$ every searcher has spent in step $s_{min}$ at least $\svI+ 3\svG{\svC}$ rounds and at most $3\svI+ 3\svG{\svC}$ rounds. Moreover, in view of the definition of step $s_{min}$ and Proposition~\ref{prop:useful}, we know that every searcher remains in its target node at least  $2\svI+ \svG{\svC}+\mathcal{H}$ rounds while in \sta{\szD} of step $s_{min}$. However, before entering \sta{\szD} of step $s_{min}$, each searcher spends at least $\svI+ 2\svG{\svC}$ rounds and at most $\svI+ 3\svG{\svC}$ in step $s_{min}$. Hence the claim follows.

\noindent \textbf{Claim 2}
Let $B$ be a follower of $\mathcal{F}_{\mathcal{Q}}$. Agent $B$ remains idle in \sta{\szA} on its initial node from round  {$\rho+1$} to round $\rho+\mathcal{H}$. 

\noindent \textbf{Proof of Claim 2}
Let $u$ be the initial node of $B$ and $\rho'$ the last round in which it is in \sta{\swO} of step $s_{min}$. Since $u$ is a target node of a searcher $A$ of $\mathcal{E}$, agent $A$ reaches $u$ after having spent at least $\svI+2\svG{\svC}$ rounds and at most $\svI+3\svG{\svC}$ rounds in step $s_{min}$. Since $B$ waits $2\svI+ 3\svG{\svC}$ in \sta{\swO} at the beginning of step $s_{min}$, in view of the maximum delay between any pair of agents of $\mathcal{E}$, $A$ reaches node $u$ while $B$ is still in \sta{\swO}. Moreover, by definition of step $s_{min}$, $A$ remains on $u$ during $2\svI+\svG{\svC}+ \mathcal{H}$ rounds (in \sta{\szD}). Hence according to \sth{\swA}, at round $\rho'$ agent $B$ has shared its initial node with agent $A$ for at most {$2\svI+\svG{\svC}$ rounds} and it enters \sta{\szA} at round $\rho'+1$. So, after $\rho'$, agent $A$ stays idle with $B$ for at least $\mathcal{H}$ rounds. This means in particular that $B$ is in \sta{\szA} from round $\rho'+1$ to $\rho'+\mathcal{H}$.

Let $diff=\rho-\rho'$. Note that $0\leq diff\leq \svI$. According to the description of \sta{\szA}, from round $\rho'+\mathcal{H}$ to $\rho'+\mathcal{H}+diff$, agent $B$ leaves \sta{\szA} only if $A$ leaves $u$ at some round in $\{\rho'+\mathcal{H},\ldots,\rho'+\mathcal{H}+diff\}$. However, this is impossible according to Claim~1 and the fact that $\rho'\in\{\rho-\svI+1;\rho-\svI+2,\ldots,\rho\}$: indeed $\rho'+\mathcal{H}>\rho+1$ and $\rho'+\mathcal{H}+diff=\rho+\mathcal{H}$. Hence agent $B$ remains in \szA{} from round $\rho'+1$ to round $\rho'+\mathcal{H}+diff$, which proves the claim.

\noindent \textbf{Claim 3} Among the nodes of $\mathcal{Q}$, at least $|\mathcal{Q}|-1$ of them host a Byzantine agent in every round from round {$\rho+1$} to round $\rho+\mathcal{H}$.

\noindent \textbf{Proof of Claim 3} Let $\mathcal{I}$ be the time interval between round {$\rho+1$} and round $\rho+\mathcal{H}$. We show that during $\mathcal{I}$, at least $|\mathcal{Q}|-1$ nodes of $\mathcal{Q}$ host a Byzantine agent. Let $B$ be the first follower of $\mathcal{E}$ that starts the execution of \swA{}: if there are several agents satisfying the condition, we choose the one with the smallest label. Let us denote by $\Delta_B$ the round in which $B$ starts the execution of \swA{}. From Claim 1, during time interval $\mathcal{I}$, every searcher is on its target node. That is,  there are $|\mathcal{Q}|$ distinct target nodes for the searchers of $\mathcal{E}$. Hence, from \sth{\swA} and Lemmas~\ref{block:lem:snapshot} and~\ref{Block:lem:together}, it follows that on each node of $Q$ there is at least one agent $B'$ being (or pretending to be) a follower such that its label is at most $\ell_B$. According to the definition of $B$, and in particular its unicity, we know that at least $|\mathcal{Q}|-1$ target nodes host a Byzantine agent from round {$\rho+1$} to $\rho+\mathcal{H}$. Hence the claim holds.

Let $\mathcal{X}=\svI+ {4\svG{\svC}}+(\svC\svG{\svC})(\svI\svC+\svC)(2\svG{\svC}+\svI)$. Note that $\mathcal{H}=(n+1)\mathcal{X} {+3}$. We show the following claim.

\noindent \textbf{Claim 4} Let ${\rho+1} \leq \nu \leq  \rho+\mathcal{H}-\mathcal{X}$ be a round, if any, such that no good follower of $\mathcal{E}$ enters \sta{\swG} from round $\nu$ to round $\nu +\mathcal{X}-1$. At least $x$ agents meet at some round in $\{\nu+1,\ldots,\nu+\mathcal{X}\}$.


\noindent \textbf{Proof of Claim 4}

Let $\mathcal{F}_{\mathcal{Q}'}$ be the set of followers of $\mathcal{E}$ that do not belong to $\mathcal{F}_{\mathcal{Q}}$ and do not enter \sta{\swG} of step $s_{min}$ by round $v-1$.
Let $\mathcal{F}_{\mathcal{Q}''}$ be the set of followers of $\mathcal{E}$ that do not belong to $\mathcal{F}_{\mathcal{Q}}$ and enter \sta{\swG} of step $s_{min}$ by round $v-1$. Let $\mathcal{Q}'$ be the set of initial nodes of agents in $\mathcal{F}_{\mathcal{Q}'}$. Note that every good follower belongs to $\mathcal{F}_{\mathcal{Q}}\cup\mathcal{F}_{\mathcal{Q}'}\cup\mathcal{F}_{\mathcal{Q}''}$.

Let $B$ be a follower of $\mathcal{F}_{\mathcal{Q}''}$. We first show that the map of $B$, when it is computed, is always useful in step $s_{min}$ till round $\rho+\mathcal{H}$ included. Note that in view of Claim~1, it is enough to prove that agent $B$ starts and finishes the execution of $IM(1,n)$ in $\{\rho+1,\ldots,\rho+\mathcal{H}\}$. In view of the definition of $\rho$ and Corollary~\ref{block:cor:delay}, we know that $B$ enters \sta{\swG} at some round in $\{\rho-\svI+1,\ldots,\nu-1\}$. Moreover, when a follower enters this state, it first waits $\svI$ before executing $IM(1,n)$ that lasts $2\svG{\svC}$ rounds. Hence $B$ starts and finishes $IM(1,n)$ in $\{\rho+1,\ldots,\nu+\svI+2\svG{\svC}\}$. However, $\nu+\svI+2\svG{\svC}\leq \rho+\mathcal{H}$, which proves that the map of $B$, when it is computed, remains always useful in $\{\rho+1,\ldots,\rho+\mathcal{H}\}$.

As mentioned above, at round $\nu+ \svI+2\svG{\svC}$, every follower of $\mathcal{F}_{\mathcal{Q}''}$ has completed its execution of $IM(1,n)$. Observe that when a good follower $B$ transits to \sta{\swG} from \\ \sta{\szA} on a node $u$ at some given round $w$ between  round {$\rho+1$} and round $\rho+\mathcal{H}$, every good follower on $u$ also transits to \sta{\swG} from \sta{\szA} at round $w$: moreover, these good followers behave in a same synchronous manner i.e., they execute the same actions in each round between $w$ to round {$\rho+\mathcal{H}$}. That is, the total number of distinct maps of the agents of $\mathcal{F}_{\mathcal{Q}''}$ at round  $\nu+ \svI+2\svG{\svC}$ is at most $(\svI. n+n)$: there are at most $\svI\svC$ distinct maps of the good followers that transit to \sta{\swG} from either\sta{\swO} or \sta{\szA} before round $\rho+1$ and at most $n$ additional distinct maps of the good followers that transit from\\ \sta{\szA} to \sta{\swG} after round $\rho$.

Next, assume that there exists a round {$\alpha'$} such that $\nu+ \svI+2\svG{\svC}\leq {\alpha'} \leq {\nu+\mathcal{X}-2\svG{\svC}}$ and no good follower of $\mathcal{F}_{\mathcal{Q}''}$ updates its imperfect map from round {$\alpha'$} to round ${\alpha' +2\svG{\svC}}$. We show that in this case, $x$ agents meet on the same node at some round in {$\{\alpha',\ldots,\alpha'+2\svG{\svC}\}$}. {Let $B$, $P$ and $j$ be respectively a follower of $\mathcal{F}_{\mathcal{Q}''}$, the imperfect map of $B$ and its index from round {$\alpha'$} to round ${\alpha' +2\svG{\svC}}$}. The target node of $B$ is the node that is reached after performing the first $(j-1)$ edge traversals of \svF{\svC} from the initial node of $B$. Agent $B$ updates its imperfect map $P$ only if on its target node, there is no follower $B'$ such that $\ell_{B'}$ is the smallest label in $L_j$ of $P$. Since there are no updates from round {$\alpha'$} to round {$\alpha'+ 2\svG{\svC}$}, at round  {$\alpha'+2\svG{\svC}$}, every follower $B$ of $\mathcal{F}_{\mathcal{Q}''}$ is on its target node $u$.

Let us consider the case where $u$ is neither in $\mathcal{Q}$ nor $\mathcal{Q}'$, we show that $u$ hosts at least one Byzantine agent. From \sth{\swA}, we know that at round {$\alpha'+2\svG{\svC}$}, node $u$ hosts a follower $B'$ such that  $\ell_{B'}$ is the smallest label in {$L_j$} of $P$. If $B'$ is a good follower, $B'$ is in \sta{\szA} with a searcher $A$ (recall that no good follower transits to \sta{\swG} from round $\nu$ to round {$\nu +\mathcal{X}-1$}). However, $A$ cannot be a good searcher of $\mathcal{E}$ since $u$ is not in $\mathcal{Q}$. Hence, $u$ hosts indeed a Byzantine agent at round {$\alpha'+2\svG{\svC}$}. Note that in view of the definition of $\nu$ and the algorithm, each agent of $\mathcal{F}_{\mathcal{Q}'}$ is on its initial node with a Byzantine agent pretending to be a searcher from round $\rho+1$ to $\nu+\mathcal{X}-1$ (as all the good searchers are in nodes $\notin\mathcal{Q}'$ according to Claim~1). Let $\mathcal{Q}''$ be the target nodes which do not belong to $\mathcal{Q}\cup\mathcal{Q}'$, of the good followers of $\mathcal{F}_{\mathcal{Q}''}$ at round   {$\alpha'+2\svG{\svC}$}. By Claim 3, we then have $|\mathcal{Q}|+|\mathcal{Q}'|+ |\mathcal{Q}''| \leq f+1$. Moreover, at round  {$\alpha'+2 \svG{\svC}$}, every good agent is in a node of $\mathcal{Q}\cup\mathcal{Q}'\cup\mathcal{Q}''$. Hence by the Pigeonhole principle, it follows that $x$ agents share the same node at round  {$\alpha'+2 \svG{\svC}$}. If round  {$\alpha'$} exists, then the claim holds. So to conclude the proof of this claim, it remains to show the existence of round  {$\alpha'$}. Recall that each follower of $\mathcal{F}_{\mathcal{Q}''}$ performs at most $\svG{\svC}. n$ updates of its imperfect map $\mathcal{P}$ (since it can record at most $f$ Byzantine agents that pretend to be followers in \sta{\szA} on each node during the execution of $IM(1,n)$). Besides, as argued earlier, the total number of distinct maps of the agents of $\mathcal{F}_{\mathcal{Q}''}$ at round  $\nu+ \svI+2\svG{\svC}$ is at most $(\svI. n+n)$. So, after at most $(\svI. n+n).(\svG{\svC}. n)(\svI+2 \svG{\svC}$ $=\mathcal{X}-\svI-4\svG{\svC}$ rounds from {$\nu+\svI+2\svG{\svC}$}, no good follower of $\mathcal{F}_{\mathcal{Q}''}$ updates its imperfect map. Moreover, every good follower of $\mathcal{F}_{\mathcal{Q}''}$ spends at most  {$2\svG{\svC}$} rounds before reaching its target node. This proves the existence of round {$\alpha'$} and by extension the claim.
  

We are now able to prove our lemma. Assume by contradiction that the lemma does not hold. This means either there is no round when $x$ agents meet, or in every round $z$ when $x$ agent meet, no searcher of $\mathcal{E}$ is in \sta{\szD} at round $z$.
Let us first consider the former case. Let $\mathcal{F}'$ be the set of good followers that enter \sta{\swG} from \sta{\szA} at some round in $\{\rho+1,\ldots,\rho+\mathcal{H}\}$. From Claim 4, we can deduce that there is no consecutive $\mathcal{X}$ rounds in  {$\{\rho+1,\ldots,\rho+\mathcal{H}-\mathcal{X}\}$} in which no good follower of $\mathcal{E}$ transits to \sta{\swG} (otherwise, round $\alpha$, which is defined in the statement of this lemma, exits). From round $\rho+2$ to $\rho+\mathcal{H}$, only the followers of $\mathcal{F}'$ may enter \sta{\swG}. From round $\rho+2$ all the agents of $\mathcal{F}'$ have already entered \sta{\szA} in view of the definition of $\rho$.
Note that $|\mathcal{Q}_{\mathcal{F}'}| \leq n$ where $|\mathcal{Q}_{\mathcal{F}'}|$ is the set of initial nodes of at least one follower of $\mathcal{F}'$. Moreover, let $C$ be an agent of $\mathcal{F}'$ that enters \sta{\swG} from \sta{\szA} at a round $t\in\{\rho+2,\ldots,\rho+\mathcal{H}\}$: before round $t$, agent $C$ does not move in step $s_{min}$, and all the agents of $\mathcal{F}'$ that are in \sta{\szA} and share the same node as $C$ in round $t-1$ also enter \sta{\swG} at round $t$.
Hence, after at most $n\mathcal{X}$ rounds from round $\rho+2$, there is no agent that can enter \sta{\swG} till round $\rho+\mathcal{H}$ included. However round $\rho+3+n\mathcal{X}\leq\rho+\mathcal{H}-\mathcal{X}$. Hence there exists a round $v$ satisfying the statement of Claim~4 and there is a meeting of at least $x$ agents at some round in $\{\nu+1,\ldots,\nu+\mathcal{X}\}$: we get a contradiction with the fact that $\alpha$ does not exist.
Concerning the latter case, note that there is a round $\alpha$ in $\{\nu+1,\ldots,\nu+\mathcal{X}\}$ in which $x$ agents meet. In view of Claim 1 and \sth{\swA}, every searcher of $\mathcal{E}$ is in \sta{\szD} in every round belonging to $\{\nu+1,\ldots,\nu+\mathcal{X}\}$ : we get a contradiction with the fact that no searcher of $\mathcal{E}$ is in \sta{\szD} at round $\alpha$.


\end{proof}

{{\begin{lemma}\label{Block:lem:groupx-f}
	If there exists a round $r$ at which at least $x \geq f + 2$ agents meet on the same node and among them all the good ones are executing phase Process at round $r$, then at least $(x-f)$ good agents exit their execution of \swA{} at the same round and on the same node. 
\end{lemma}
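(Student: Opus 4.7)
The plan is to combine Lemma~\ref{lem:duration} with the description of \sta{\szE} to exhibit a common ``replay target'' that is picked by at least $x-f$ good agents. First, by Lemma~\ref{lem:duration}, every good agent spends the exact same number of rounds, call it $C$, in phase Process. Hence, if $A$ is a good agent starting \swA{} in round $r_A$ and, while in \sta{\szE}, selects the offset $i_A$ so that $r_A+i_A = t^*_A$ is the absolute round at which it observed the largest number of agents not in \sta{\szE} (the latest one, in case of ties), then $A$ exits \swA{} in round $r_A + C + i_A = t^*_A + C$, and does so at the node $v^*_A$ where it was located in round $t^*_A$. Thus two good agents choosing the same pair $(t^*_A, v^*_A)$ terminate simultaneously at the same node, and it suffices to produce at least $x-f$ good agents with a common choice.

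To this end, I would consider the global maximum $M$ of the counts ``number of agents not in \sta{\szE} at this node'' observed by any good agent in phase Process, and would pick a pair $(t^*, v^*)$ reaching $M$ with $t^*$ as large as possible. The meeting at round $r$ guarantees $M \geq x$: there are at least $x$ physical agents at $(r,v)$, no good one among them is in \sta{\szE} by assumption, and Byzantine agents---who do not legitimately execute the protocol---are not in \sta{\szE} either, so every good agent present observes a count of at least $x$.

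Now let $G^*$ be the set of good agents that are in phase Process at $v^*$ in round $t^*$. Each such agent observes exactly $M$ at $(t^*, v^*)$, which must be its personal maximum (since $M$ is the global maximum) and must be attained by it for the last time in round $t^*$ (by the maximality of $t^*$); hence each agent of $G^*$ picks $(t^*_A, v^*_A) = (t^*, v^*)$ and therefore, by the first paragraph, exits \swA{} in round $t^* + C$ at $v^*$. Finally, at most $f$ of the agents counted at $(t^*,v^*)$ can be Byzantine, so $|G^*| \geq M - f \geq x-f$, which proves the lemma.

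The delicate point I expect to have to pin down is the bookkeeping of the count ``agents not in \sta{\szE}'' in the presence of Byzantine agents: one has to argue that Byzantine agents cannot artificially shrink the count at $(r,v)$ below $x$, and that the at-most-$f$ Byzantine present at $(t^*,v^*)$ still leave $|G^*| \geq x-f$ good agents inside $G^*$ in phase Process. Once these bounds are established, everything else is routine arithmetic on top of Lemma~\ref{lem:duration} and the description of \sta{\szE}.
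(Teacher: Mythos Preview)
Your proposal is correct and follows essentially the same approach as the paper's own proof. Both arguments use Lemma~\ref{lem:duration} to obtain a uniform duration $C$ of phase Process, select the globally maximal observed count $M$ (the paper's $x'$) together with the latest round/node pair $(t^*,v^*)$ attaining it, argue that the good agents present there in phase Process all choose this pair as their replay target and hence all terminate at round $t^*+C$ in $v^*$, and finally bound their number from below by $M-f\geq x-f$. The ``delicate point'' you flag about Byzantine agents and the count is treated with the same level of informality in the paper.
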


\begin{proof}
	Assume there exists such a round. Let us show that $(x-f)$ good agents exit their execution of \swA{} at the same round and on the same node. Let $x'$ be the largest number of agents executing \swA{} but not in \sta{\szE} which met in the same node $u$ in some round $\Delta+ w$. If there are several such rounds, we consider the one with the largest value of $w$. The good agents executing \swA{} but in another state than \szE{} are precisely those executing phase Process, which implies that $x' \geq x$. Let $\mathcal{Y}$ be the set of good agents executing phase Process on $u$ at round $\Delta+w$. Remark that at round $\Delta + w$ on $u$ there are at most $f$ Byzantine agents. Hence, $|\mathcal{Y}|\geq x'-f$.
	
	When in \sta{\szE}, every agent $A$ of $\mathcal{Y}$ repeats exactly the same waiting periods and edge traversals as in its execution phase Process in order to reconstruct the group of agents that was at node $u$ in round $\Delta+ w$. More precisely, let $r$ and $r'$ be the round when $A$ initiated \swA{} and the round when $A$ enters \sta{\szE} respectively. Let $i$ be an integer such that $r+i=\Delta+w$. From round $r'$ to $r'+i-1$, agent $A$ replays exactly the same waiting periods and edges traversals from round $r$ to $r+i-1$: for each integer $y$ in $\{0,1,\ldots,i-1\}$, if agent $A$ remains idle (resp. leaves the current node via a port $o$) from round $r+y$ to round $r+y+1$, then agent $A$ remains idle (resp. leaves the current node via port $o$) from round $r'+y$ to $r'+y+1$. In round $r'+i$, agent $A$ is in node $u$  and stops the execution of \swA{}. Besides, in view of Lemma~\ref{lem:duration}, every good agent spends the same number of rounds executing phase Process: let us denote this number by $\mathcal{W}$. So, $r'+i=r+\mathcal{W}+i=\Delta+w+\mathcal{W}$. Hence, every agent of $\mathcal{Y}$ is in node $u$ and stops the execution of \swA{} at round $\Delta+w+\mathcal{W}$.
\end{proof}}}
 


Now we are ready to end this subsection by giving the main theorem related to \sth{\swA}. In order to use the theorem outside of this subsection, we recall in the statement the assumptions that were made in the beginning of this subsection.

\begin{theorem}\label{Block:theo:Principal-2}
Consider a team made of at least $(x-1)(f+1)+1$ good agents in a graph of size at most $n$, where $x\geq f+2$. Let $\Delta$ be the first round when a good agent starts executing \swAA{\svI}{\svC}{\svX}. If all good agents start executing \swAA{\svI}{\svC}{\svX} by round $\Delta+\svI-1$, and parameter $bin$ is $0$ (resp. $1$) for at least one good agent, then we have the following property. After at most a time polynomial in $n$ and $\svI$ from $\Delta$, at least $(x-f)$ good agents finish the execution of \swA{} at the same round and in the same node.
\end{theorem}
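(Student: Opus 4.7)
The plan is to combine the machinery of Lemmas~\ref{block:lem:xagents} and~\ref{Block:lem:groupx-f} with the timing estimate of Lemma~\ref{lem:duration}, treating the case $f<n$ first and then handling $f\geq n$ separately (in the latter case the graph is so small that the hypothesis $|\mathcal{E}|\geq (x-1)(f+1)+1$ together with a searcher/follower pair forces the meeting directly in the first step, making the conclusion almost immediate).

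First, I would unpack the hypotheses and verify that Lemma~\ref{block:lem:xagents} can be applied: we have at least $(x-1)(f+1)+1$ good agents all starting in the $\svI$-round window, at least one with $bin=0$ (a follower) and at least one with $bin=1$ (a searcher), and the assumption $x\geq f+2$ gives $x-f\geq 2$. Lemma~\ref{block:lem:xagents} then yields a round $\alpha$ and a node $v$ at which $x$ agents are together and such that some good searcher $A\in\mathcal{E}$ is in \sta{\szD} at round $\alpha$. This state $\szD$ belongs to phase \textbf{Process}, so agent $A$ has not yet transited to \sta{\szE}.

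The key bridging step is to argue that every good agent present at node $v$ in round $\alpha$ is still executing phase \textbf{Process} at round $\alpha$; this is exactly the hypothesis needed to invoke Lemma~\ref{Block:lem:groupx-f}. For this I would use Corollary~\ref{block:cor:delay}: since any two good agents of $\mathcal{E}$ traverse each step of \textbf{Process} with a fixed offset equal to the difference of their starting rounds (at most $\svI-1$), and since $A$ is in \sta{\szD} of some step $s_{\min}\leq\mathcal{S}$, the \szB{} padding at the end of each step (whose total duration per step is $5\svI+5\svG{\svC}+\mathcal{H}$, from Lemma~\ref{lem:duration}) is more than large enough to guarantee that at round $\alpha$ no other good agent has already transited to \sta{\szE}: concretely, an agent $B$ with $t_B<t_A$ could at most be somewhat further along in \sta{\szB} of the same step $s_{\min}$, which is still phase \textbf{Process}. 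Having established this, Lemma~\ref{Block:lem:groupx-f} immediately delivers the conclusion that at least $(x-f)$ good agents finish \swA{} in the same node at the same round.

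For the complexity bound, I would simply add up the two phases. By Lemma~\ref{lem:duration}, phase \textbf{Process} lasts exactly $\mathcal{S}\cdot(5\svI+5\svG{\svC}+\mathcal{H})$ rounds, and both $\mathcal{S}=n^2\svI\,\svG{\svC}+1$ and $\mathcal{H}$ are polynomial in $n$ and $\svI$ (since $\svG{\svC}=P(n)+1$ is polynomial in $n$ by Reingold's procedure). Phase \textbf{Build-up} consists, for each agent, of replaying at most the same number of rounds as its first phase, hence is also polynomial in $n$ and $\svI$. Accounting for the at most $\svI-1$ delay between the first and last good agents to start, the total time from $\Delta$ until the $(x-f)$ agents simultaneously terminate \swA{} is polynomial in $n$ and $\svI$, as required.

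The main obstacle in this plan is the bridging argument of the middle paragraph: one must be careful that when $s_{\min}=\mathcal{S}$ and $A$ is in \sta{\szD} rather deep into the step, an agent $B$ with a smaller starting round has not already left \sta{\szB} of step $\mathcal{S}$ and entered \sta{\szE}. Resolving this reduces to a precise accounting: the time \emph{remaining} to $A$ inside step $s_{\min}$ at round $\alpha$ is at least the $\svI$-round slack needed to cover the maximal inter-agent delay, because $A$ enters \sta{\szD} only after the $\svI+2\svG{\svC}$-round prefix of \szC{} and then spends at most a few $\svG{\svC}$-round blocks before the $\svI$-round "synchronisation buffer" at the end of the step. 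This quantitative check is the one place where the constants hidden in $\mathcal{H}$ and in the \sta{\szB} padding of Lemma~\ref{lem:duration} genuinely matter; everything else follows cleanly from stitching together the previously established lemmas.
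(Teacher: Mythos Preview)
Your proposal is correct and follows essentially the same route as the paper: the same case split on $f<n$ versus $f\geq n$, the same appeal to Lemma~\ref{block:lem:xagents} in the former case, the same bridging argument via Corollary~\ref{block:cor:delay} to show all good agents are still in phase \textbf{Process} at round $\alpha$, and the same application of Lemma~\ref{Block:lem:groupx-f} to conclude. The paper's version of the bridging step is slightly crisper: it notes that any searcher in \sta{\szD} has already spent more than $\svI$ rounds in its current step and still has at least $\svI$ rounds remaining (since the \sta{\szB} wait is at least $\svI$ rounds, as $x\leq 4\svI+5\svG{\svC}+\mathcal{H}$ from the proof of Lemma~\ref{lem:duration}), so Corollary~\ref{block:cor:delay} directly gives that every good agent is in that same step---this is exactly the quantitative check you flagged as the main obstacle, and your sketch of it is accurate.
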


{{\begin{proof}
	When in \sta{\szE}, an agent only replays all or part of the waiting periods and edge traversals made in phase Process. Hence, according to Lemma~\ref{lem:duration} and the initial delay that is at most $\svI$, we know that every good agent finishes the execution of \swA{} after at most a time polynomial in $n$ and $\svI$ from $\Delta$.
	
	So to prove the theorem it remains just to show that there is a group of at least $(x-f)$ good agents that exit \swA{} on the same node and at the same time. This follows directly from Lemma~\ref{Block:lem:groupx-f} and the claim that is proven below.

\noindent \textbf{Claim~1} At least $x$ agents meet on the same node at some round $t$, and among them all the good ones are executing phase Process of \sth{\swA} at round $t$.

\noindent \textbf{Proof of Claim~1}

If $f \geq n$, there are always $x$ agents sharing the same node as the number of good agent is at least $(f+1)x$. Moreover, at round $\Delta+\svI$ every good agent is executing phase process of \sth{\swA}. Hence, the claim holds if $f \geq n$. 

So let us focus on the case where $f < n$. From Lemma~\ref{block:lem:xagents}, there is a round $\alpha$ when $x$ agents meet in some node $v$ and there is a good searcher $A$ in \sta{\szD} of some step $s$ in round $\alpha$. At round $\alpha$, it remains for agent $A$ at least $\svI$ rounds to spend in step $s$. Indeed, in \sta{\szB} of step $s$, an agent has to wait $5\svI+5\svG{\svC}+\mathcal{H}-x$ rounds and $x$ is upperbounded by $4\svI+5\svG{\svC}+\mathcal{H}$ (this is shown in the proof of Lemma~\ref{lem:duration}). Hence, in view of Corollary~\ref{block:cor:delay}, no good agent has finished step $s$ of phase Process at round $\alpha$. Moreover, agent $A$ has necessarily spent more than $\svI$ rounds in step $s$ when in round $\alpha$. So, every good agent is executing phase Process of \sth{\swA} at round $\alpha$, which proves the claim.
\end{proof}}}

\subsection{\stH{\swD}}

	The second building block called \swD{} takes as input two integers $n$ and $\svI$. \stJ{\swDD{\svI}{\svC}} allows all the good agents to finish their executions of the subroutine in the same node and at the same round, provided the following two conditions are satisfied. The first condition is that all good agents are in a graph of size at most $n$ and start executing \swDD{\svI}{\svC} in an interval lasting at most $\svI$ rounds. The second condition is that at least $4\svA + 2$ good agents start executing \swDD{\svI}{\svC} at the same round and in the same node. The time complexity of the procedure is polynomial in $\svI$ and $n$.
	
	\subsubsection{High level idea}
	

	For the sake of convenience, we will consider in this subsubsection that a group of agents is a set of all agents, at least one of which is good, that start executing \sth{\swD} in the same node and at the same round. In the sequel, we assume there is a group of at least $4f+2$ good agents. The reasons why we need such an assumption will appear at the end of the explanations. Let $G_{max}$ and $v_{max}$ be respectively the group with the largest initial number of agents and its starting node. In case there are several possible groups $G_{max}$, we choose among them the one having the largest lexicographically ordered list of pairwise distinct labels denoted by $L_{max}$:  this guarantees the unicity of $G_{max}$ as it contains at least $4f+2$ good agents. The cardinality of a list $L$ will be denoted by $|L|$.
		
	The idea underlying \sth{\swD} is to make all good agents elect the same node, and then gather in it (if we ensure this, then we can ensure that all good agents finish the execution of \swD{} {at a same round} using some technicalities). Each node is a candidate, and each good agent supports the node in which it started executing the procedure. Besides supporting its candidate, each good agent is also a voter. When acting as a supporter, a good agent stays idle to promote its candidate and when acting as a voter, it makes a traversal of the graph in order to visit all nodes of the graph (using procedure \svF{\svC}), and then elects one of the nodes using the information provided by the supporters. In order to establish such a strategy, note that all good agents must not act as voters at the same time. Otherwise, there would be no supporter left in its candidate node to promote it. Hence, the election process is divided into two parts, and each group is divided into two subgroups of nearly equal size using the labels of the agents. During the first (resp. second) part of the election, the first (resp. second) subgroup of each {group} acts as voters while the second (resp. first) subgroup of each {group} acts as supporters.



When visiting a node during its traversal of the graph, a voter gets from each supporter of this node a promotional information: for a good supporter, it is simply the lexicographically ordered list of all pairwise distinct labels of the agents that were initially in its group. Once its traversal is done, the voter considers each node $v$ satisfying the property that at least $\svV{\svP{L}}$ distinct agents in $v$ have transmitted a lexicographically ordered list $L$. Among these nodes, the voter elects the one for which the property is true with the list $L$ having the largest cardinality: in case of a tie, the lexicographical order on the labels is used as done to ensure the unicity of $G_{max}$. By doing so, all good agents elect node $v_{max}$ and then gather in it: the purpose of the last paragraph is to explain why we have the guarantee that $v_{max}$ is unanimously elected.

	By definition, the number of good agents that is initially in $G_{max}$, and thus $|L_{max}|$ is at least $4f+2$. Moreover, the number of Byzantine agents is initially at most $f$ in $G_{max}$. Hence, we can show that our strategy permits to always have at least $\svV{\svP{L_{max}}}$ distinct agents in $v_{max}$ that transmit list $L_{max}$ to all voters. Note that each good supporter transmits a list $L$ such that $|L|<|L_{max}|$, or $|L|=|L_{max}|$ and $L$ is not lexicographically larger than $L_{max}$. So, the only way the Byzantine agents could prevent the good agents to elect $v_{max}$ would be that at least $\svV{\svP{L'}}$ Byzantine agents transmit a list $L'$ such that $|L'|>|L_{max}|$, or $|L'|=|L_{max}|$ and $L'$ is lexicographically larger than $L_{max}$. However this situation is impossible because the Byzantine agents are not numerous enough: indeed $\svV{\svP{L'}}\geq f+1$.

	\subsubsection{Formal description of the algorithm}
	
		When an agent $A$ executes \swDD{\svI}{\svC}, it can transit to different states that are \swQ{}, \swR{} and \swZ{}. When agent $\sxa$ starts the execution of \swD{}, it is in \sta{\swQ}. In the algorithm, the cardinality of a list $L$ will be denoted by $|L|$.

		\stb{\swQ} Agent $\sxa$ spends a single round in this state. Besides its state, it transmits its label to the agents sharing the same node. Agent $\sxa$ assigns to variable $\svg$, the lexicographically ordered list of all pairwise distinct labels of agents that are currently in its node and in \sta{\swQ}. Then $\sxa$ transits to \sta{\swR}.
		
		\stb{\swR} When it enters this state, agent $\sxa$ initializes two variables: it assigns an empty list to variable $\svh$, and 0 to variable $\svi$. This state is made of five different periods: the first, third and fifth (resp. the second and fourth) ones are waiting periods (resp. moving periods). In each round of the two first waiting periods, agent $\sxa$ transmits the list $\svg$ built when in \sta{\swQ}. If $\sxb$ belongs to the first $\svQ{\frac {\svP{\svg}} {2}}$ labels of $\svg$, then the durations of the two first waiting periods are respectively $\svI - 1$ and $\svI + 2\svG{\svC} - 1$. Otherwise, they respectively last $\svI + 2\svG{\svC} - 1$ and $\svI - 1$ rounds. The duration of the third waiting period is given after describing the second moving period. 

During the first moving period, agent $\sxa$ executes \svF{\svC} followed by a backtrack in which the agent traverses all edges traversed in \svF{\svC} in the reverse order. Once this backtrack is done, the agent assigns to variable $I$ the largest list $\sxg$, if any, having the following property: there is a round during the execution of \svF{\svC} at which agent $A$ is in a node where at least $\svV{\svP{\sxg}}$ distinct agents in \sta{\swR} transmit $\sxg$. (We consider that a list $\sxj$ is larger than another list $\sxk$ if and only if $\sxj$ contains more elements, or $\sxj$ and $\sxk$ contain the same number of elements and $\sxj$ is lexicographically larger than $\sxk$). If such a list $\sxg$ exists, the agent also assigns to variable $\svi$, the smallest number of edge traversals made by $\sxa$ during the execution of \svF{\svC} to reach a node satisfying the above property with $\sxg$. Otherwise, the agent leaves variables $\svh$ and $\svi$ unchanged.

During the second moving period, agent $\sxa$ performs the first $\svi$ edge traversals of \svF{\svC}. Once this is done, agent $\sxa$ checks whether $\svg = \svh$ or not. If $\svg = \svh$, then the third waiting period lasts $\svI + \svG{\svC} - 1$ rounds, and at its expiration, $\sxa$ transits to \sta{\swZ}. Otherwise, the third waiting period lasts $2\svI + \svG{\svC} - 1$ but can be interrupted when agent $\sxa$ notices at least $\svV{3\svP{\svh}}$ agents in \sta{\swZ} in its node: as soon as such an event occurs, agent $A$ exits the execution of \swDD{\svI}{\svC}. In case such an interruption does not occur, the agent exits the execution of \swDD{\svI}{\svC} at the end of the waiting period.

		\stb{\swZ} Agent $\sxa$ spends one round in this state and then exits the execution of \swDD{\svI}{\svC}.
		
		
	\subsubsection{Correctness and complexity analysis}
	
Concerning \sth{\swD}, we only have the following theorem.

		\begin{theorem} \label{block2:proof:theo}
			Consider a team of agents in a graph of size at most $n$. Let $\sxe$ be the first round when a good agent starts executing \swDD{\svI}{\svC}. If every good agent starts executing \swDD{\svI}{\svC} by round $\sxe + \svI -1$ and among them at least $4\svA + 2$ start the execution in the same node and at the same round, then all good agents finish their executions of \sth{\swD} in the same node and at the same round $r< \sxe+4\svI + 6\svG{\svC} - 1$.
		\end{theorem}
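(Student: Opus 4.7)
The plan is to single out a canonical winning group and node, prove that every good voter elects this node, and conclude that the synchronization mechanism in state \swZ{} forces all good agents to terminate simultaneously at this node. Define $G_{max}$ to be the group (in the sense of the subroutine) whose list of pairwise distinct labels collected in \swQ{}, call it $L_{max}$, is strictly maximal in the (size, lex) ordering used by the algorithm. The hypothesis guarantees $|L_{max}| \ge 4\svA + 2$, and uniqueness of $G_{max}$ follows from the uniqueness of good-agent labels. Denote by $v_{max}$ the common starting node of $G_{max}$ and by $T_G \in [T_1, T_1 + \svI - 1]$ its common start round. A direct count of the durations in state \swR{} shows that both subgroup schedules have the same total pre-\swZ{} length, so that during any good voter's first moving period (the \svF{\svC} traversal followed by backtrack) every good agent of the opposite subgroup of every group is stationary at its own starting node transmitting its list $\svg$.

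The next and main step is the election claim: every good voter $A$ sets $\svh = L_{max}$ and $\svi$ equal to an index of \svF{\svC} at which $A$ stands on $v_{max}$. For existence, when $A$ visits $v_{max}$ during its first moving period, the good agents of $G_{max}$ in the opposite subgroup are idle there transmitting $L_{max}$; writing $g$ for the number of good agents in $G_{max}$ (so $|L_{max}| \le g + \svA$ and $|L_{max}| \ge 4\svA + 2$), their count is at least $g - \lceil |L_{max}|/2 \rceil$, which these bounds force to be at least $\lceil |L_{max}|/4 \rceil$, so $L_{max}$ passes the threshold test at $v_{max}$. For maximality, any list $L'$ strictly larger than $L_{max}$ is the $\svg$ of no good agent (by choice of $L_{max}$); hence the $\lceil |L'|/4 \rceil$ transmitters needed to validate $L'$ at a single node would all have to be Byzantine, which is impossible since $\lceil |L'|/4 \rceil \ge \lceil (4\svA + 3)/4 \rceil = \svA + 1$ exceeds the total number of Byzantine agents. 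Thus $L_{max}$ is the unique elected list, and $\svi$ is the smallest prefix length of \svF{\svC} reaching $v_{max}$.

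After the second moving period, every good agent stands on $v_{max}$. Good agents of $G_{max}$ have $\svg = \svh = L_{max}$ and $\svi = 0$, so each performs exactly the short third wait of $\svI + \svG{\svC} - 1$ rounds and enters \swZ{} at the common round $r^\star := T_G + 3\svI + 5\svG{\svC} - 2$, terminating at $r^\star$. For every good agent $A \notin G_{max}$ one has $\svg \ne \svh$; the bounds $\svi_A \le \svG{\svC}$ and $|T_G - T_A| \le \svI - 1$ place $r^\star$ inside $A$'s long interruptible third wait at $v_{max}$. At round $r^\star$ such an $A$ sees at least $g$ agents in \swZ{} at $v_{max}$, which using once more $g \ge |L_{max}| - \svA$ and $|L_{max}| \ge 4\svA + 2$ is at least $\lceil 3|L_{max}|/4 \rceil$, triggering the interruption clause and forcing $A$ to terminate at $r^\star$ as well. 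The bound $r^\star \le T_1 + 4\svI + 5\svG{\svC} - 3 < T_1 + 4\svI + 6\svG{\svC} - 1$ then follows from $T_G - T_1 \le \svI - 1$. The main obstacle will be verifying the two threshold inequalities in the tight regime where all $\svA$ Byzantine agents join $G_{max}$ (so $|L_{max}| = g + \svA$), while simultaneously checking the timing windows align so that every non-$G_{max}$ good voter is at $v_{max}$ precisely at round $r^\star$.
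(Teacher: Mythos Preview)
Your approach mirrors the paper's: isolate the maximal list $L_{\max}$ and its group $G_{\max}$ at node $v_{\max}$, show every good voter elects $(L_{\max}, v_{\max})$, and then use the \swZ{} synchronization to force simultaneous termination. The structure is sound, but two steps are not fully closed.

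\textbf{Validation of $L_{\max}$ at a wrong node.} You argue that no list \emph{strictly larger} than $L_{\max}$ can pass the $\lceil |L'|/4\rceil$ threshold, and then conclude that $\pi$ points to $v_{\max}$. But you have not excluded that $L_{\max}$ itself is validated at some node $v' \neq v_{\max}$ visited earlier along some voter's \svF{\svC}. Since good agents transmit $\svg$ only during their first two waiting periods (hence only at their own starting node), and since your uniqueness claim gives that no good agent outside $G_{\max}$ has $\svg = L_{\max}$, any validation at $v'$ would require $\lceil |L_{\max}|/4\rceil \geq \svA+1$ Byzantine transmitters. This is the missing sentence; the paper isolates it as a separate claim (uniqueness of the group building $L_{\max}$) precisely because the election step needs it.

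\textbf{No premature interruption before $r^\star$.} You assert that the timing bounds place $r^\star$ inside each non-$G_{\max}$ agent's long third wait, and then observe that at $r^\star$ it sees $g \geq \lceil 3|L_{\max}|/4\rceil$ agents in \swZ{}. But the third wait is interruptible: you must rule out that some good $A \notin G_{\max}$ sees $\lceil 3|L_{\max}|/4\rceil$ agents in \swZ{} at a round strictly before $r^\star$ and exits early (at the wrong round). The paper handles this by noting that $\lceil 3|L_{\max}|/4\rceil > \svA$, so any such collection contains a good agent in \swZ{}; but the only good agents that ever enter \swZ{} are those with $\svg = \svh$, i.e.\ the agents of $G_{\max}$, and they all enter \swZ{} exactly at $r^\star$. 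Without this argument your synchronization step has a gap.

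Both gaps are repairable with one or two lines each, and once filled your proof is essentially the paper's.
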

			
		\begin{proof}
			Note that according to procedure \swD{}, every good agent spends at most $4\svI + 6\svG{\svC} - 1$ rounds in any execution of \sth{\swDD{\svI}{\svC}}. Hence, to prove the theorem we just have to prove that all good agents finish their executions of \sth{\swD} in the same node and at the same round.	
			
			Let us denote by $\sxz$ the largest list $\svg$ built by any good agent in \sta{\swQ}, and by $\sya$ one of the good agents that builds it. By assumption, they are at least $4\svA + 2$ good agents that start the execution in the same node and at the same round. As a result, in view of the description of \sta{\swQ}, $\sxz$ contains at least $4\svA + 2$ elements, and agent $\sya$ belongs to the group of at least $3\svA + 2$ good agents in \sta{\swQ} that compute the same list $\sxz$ at a round $\syb$ in a node $\syc$. Let us call $\syd$ the group of all the good agents in \sta{\swQ} in node $\syc$ at round $\syb$. We prove the following two claims.

\noindent \textbf{Claim 1} The agents of $\syd$ are the only good agents that build list $\sxz$ while in \sta{\swQ}.

\noindent \textbf{Proof of Claim 1} Let us assume by contradiction that the claim is false. Hence, there is a good agent $\sye$ in \sta{\swQ} which also builds $\sxz$ in a node $\syf$ at a round $\syg$ such that $\syf\neq\syc$ or $\syg\neq\syb$. In view of the description of \sta{\swQ}, there are all the labels of the agents of $\syd$ in $\sxz$. Thus, for each good agent of $\syd$, there is an agent in \sta{\swQ} with the same label in node $\syf$ at round $\syg$. However, there are at least $3\svA + 2$ agents in $\syd$, and since they only spend round $\syb$ in \sta{\swQ} in node $\syc$, none of them is in this state in node $\syf$ at round $\syg$. Besides, all the good agents have different labels and the Byzantine agents are not numerous enough to be these $3\svA + 2$ agents in \sta{\swQ} in node $\syf$ at round $\syg$. This contradicts the existence of these $3\svA + 2$ agents and the assumption that $\sye$ builds $\sxz$ in node $\syf$ at round $\syg$. Hence, the claim is proven.

			
\noindent \textbf{Claim 2} Each good agent starts its third waiting period in node $\syc$

\noindent \textbf{Proof of Claim 2}

To prove the claim, we first need to prove the following two facts. The first fact is that in each of the rounds belonging to $\{\syb + 1,\ldots,\syb + 2\svI + 4\svG{\svC} - 2\}$, there are at least $\svV{\svP{\sxz}}$ good agents in \sta{\swR} that transmit the list $\sxz$ in node $\syc$. The second fact is that each good agent performs entirely its first moving period between round $\syb + 1$ and round $\syb + 2\svI + 4\svG{\svC} - 2$.

			
Let us focus on the first fact. In view of the description of \sta{\swQ}, the list $\sxz$ contains at least $3\svA + 2$ elements corresponding to the labels of the agents of $\syd$, all of which are good, and at most $\svP{\syd} + \svA$ elements, with $\svP{\syd}$ the number of agents in $\syd$. This means that $\svA < \svV{\svP{\sxz}}$, $\svP{\syd} > \svV{3\svP{\sxz}}$, $\svQ{\frac {\svP{\sxz}} {2}} - \svA \geq \svV{\svP{\sxz}}$ and $\svZ{\frac {\svP{\sxz}} {2}} - \svA \geq \svV{\svP{\sxz}}$ \ie in each half of $\sxz$ there are at least $\svV{\svP{\sxz}}$ labels of agents of $\syd$. This implies that in each of the rounds belonging to $\{\syb + \svI,\ldots,\syb + \svI + 4\svG{\svC} - 1\}$, there are at least $\svV{\svP{\sxz}}$ good agents in \sta{\swR} transmitting the list $\sxz$ in node $\syc$. Moreover, in view of the description of \sta{\swR}, all the agents of $\syd$ wait in $\syc$ and transmit $\sxz$ in each round from round $\syb + 1$ to round $\syb + \svI - 1$, and from round $\syb + \svI + 4\svG{\svC}$ to round $\syb + 2\svI + 4\svG{\svC} - 2$. Hence, the first fact is true.

Let us go further by considering the second fact. Each good agent starts the execution of \sth{\swD} between rounds $\sxe$ and $\sxe + \svI - 1$. Then, it spends a single round in \sta{\swQ}, and enters \sta{\swR} between round $\sxe + 1$ and round $\sxe + \svI$. Actually, the good agents of $\syd$ are in \sta{\swQ} at round $\syb$. This means that $\syb$ belongs to $\{\sxe; \ldots; \sxe + \svI - 1\}$. Since every good agent spends at least $\svI - 1$ rounds and at most $\svI + 2\svG{\svC} - 1$ rounds in the first waiting period, every good agent starts its first moving period between round $\sxe + \svI$ and round $\sxe + 2\svI + 2\svG{\svC} - 1$ \ie between round $\syb + 1$ and round $\syb + 2\svI + 2\svG{\svC} - 1$. Since the first moving period lasts $2\svG{\svC}$ rounds, the second {fact} is true.

Hence, from the two facts, we know that during its first moving period each good agent visits $\syc$ and notices at least $\svV{\svP{\sxz}}$ agents in \sta{\swR} transmitting the same list $\sxz$. As a result, in view of the description of \sta{\swR} each good agent finishes the second moving period at round $\syc$ except if the following event occurs: there is a list $\syi$ strictly larger than or identical to $\sxz$ such that at a round $\syj$, in a node $\syk \ne \syc$, at least $\svV{\svP{\syi}}$ agents in \sta{\swR} transmit $\syi$ to a good agent while it is performing the \svF{\svC} of its first moving period. However, such an event cannot occur. Let us assume by contradiction it can. Since $\svP{\syi} \geq \svP{\sxz} \geq 4\svA + 2$, among $\svV{\svP{\syi}} > \svA$ agents in \sta{\swR} transmitting $\syi$, there must be at least one good agent which builds $\syi$ in \sta{\swQ}. Note that either $\syi$ is identical to $\sxz$ or it is larger than $\sxz$. If $\syi$ is identical to $\sxz$ we get a contradiction with Claim~1. If $\syi$ is larger than $\sxz$, we get a contradiction with the maximality of $\sxz$, which concludes the proof of the claim.

	In view of Claim~2 and the description of states~\swQ{} and~\swR{}, we know that every good agent finishes its execution in the same node. Hence, to conclude the proof of the theorem, we just have to prove now that all good agents finish the execution at the same time. To do this, in view of the fact that $\svP{\syd} \geq \svV{3\svP{\sxz}}$ and the fact that each good agent assigns to variable $\svh$ the same list $\sxz$ at the end of its first moving period, it is enough to show that there is a round in which the good agents of $\syd$ are in \sta{\swZ} and all the others good agents are performing their third waiting period. It is the purpose of the following lines.

First assume that no good agent prematurely interrupts its third waiting period before round $\syb + 3\svI + 5\svG{\svC} - 2$. Since each good agent assigns to variable $\svh$ the same list $\sxz$, each agent of $\syd$ performs no edge traversal in the second moving period and enters \sta{\swZ} at round $\syb + 3\svI + 5\svG{\svC} - 2$. Each good agent starts its first waiting period between round $\syb - \svI + 2$ and round $\syb + \svI$. Moreover, it can spend from 0 to $\svG{\svC}$ rounds in its second moving period. This implies that each good agent completes it between round $\syb + \svI + 4\svG{\svC} - 1$ and round $\syb + 3\svI + 5\svG{\svC} - 3$ and starts the third waiting period between round $\syb + \svI + 4\svG{\svC}$ and round $\syb + 3\svI + 5\svG{\svC} - 2$. Furthermore, each good agent that does not belong to $\syd$ assigns to variable $\svh$ a list that is different from the list it has built when in \sta{\swQ}, and thus its third waiting period lasts $2\svI + \svG{\svC} - 1$ rounds. This means that each good agent which does not belong to $\syd$ completes its third waiting period between round $\syb + 3\svI + 5\svG{\svC} - 2$ and round $\syb + 5\svI + 6\svG{\svC} - 4$. Hence, each good agent that does not belong to $\syd$ is performing its third waiting period at round $\syb + 3\svI + 5\svG{\svC} - 2$ when the agents of $\syd$ enter \sta{\swZ}. As a result, the theorem is true if no good agent prematurely interrupts its third waiting period before round $\syb + 3\svI + 5\svG{\svC} - 2$. However, no good agent can interrupt its third waiting period at a round $r<\syb + 3\svI + 5\svG{\svC} - 2$. Indeed, if it was the case, that would {imply} that there are at least $\svV{3\svP{\sxz}}$ agents in \sta{\swZ} at round $r$ and among them there is {necessarily} one good agent of $\syd$: this {contradicts} the fact that the agents of $\syd$ {enter} \sta{\swZ} at round $\syb + 3\svI + 5\svG{\svC} - 2$. This ends the proof of the theorem.
		\end{proof}

	\section{The positive result} \label{sec:onlyn}

			In this section we show an algorithm, called \swF{}, that solves $\svA$-Byzantine gathering with strong teams in all graphs of size at most $\svC$, assuming that $\svH=\svN$: note that such a global knowledge can be coded using $\std{\log \log \log \svC}$ bits. The algorithm works in a time polynomial in $\svC$ and $|\svB|$, and it makes use of the building blocks introduced in the previous section. 

In the sequel, we denote by $\swAa{\svC}$ the maximal time complexity of \sth{\swAA{\svG{\svC}}{\svC}{\svf}} with $\svf\in\{0;1\}$ in all graphs of size at most $\svC$. We also denote by $\swDd{\svC}$ the maximal time complexity of \sth{\swDD{\svG{\svC}+\swAa{\svC}}{\svC}} in all graphs of size at most $\svC$. Note that according to Theorems~\ref{Block:theo:Principal-2} and~\ref{block2:proof:theo}, $\swAa{\svC}$ and $\swDd{\svC}$ exist and are polynomials in $\svC$.
		
		
		
		
		\subsection{Intuition}

In order to better describe the high level idea of our solution, let us first consider a situation that would be ideal to solve Byzantine gathering with a strong team and that would be as follows. Instead of assigning distinct labels to all agents, the adversary assigns to each of them just one bit $\rho\in\{0;1\}$, so that there are at least one good agent for which $\rho=0$ and at least one good agent for which $\rho=1$. Such a situation would clearly constitute an infringement of our model, but would allow the simple protocol described in Algorithm~\ref{onlyn:simple} to solve the problem in a time that is polynomial in $n$ when $\svH=\svN$. Let us briefly explain why.

			\begin{algorithm}
				\caption{{Algorithm executed by every good agent in the ideal situation.}}
				\label{onlyn:simple}
				\begin{footnotesize}
				\begin{algorithmic}[1]
					\State Let $\rho$ be the bit assigned to me by the adversary
					\State Execute $\mathcal{A}(\rho)$
					\State Declare that gathering is achieved
				\end{algorithmic}
				\end{footnotesize}
			\end{algorithm}

\begin{algorithm}
				\caption{$\mathcal{A}(\rho)$ executed by a good agent.}
				\label{onlyn:small}
				\begin{footnotesize}
				\begin{algorithmic}[1]
					\State $N \gets 2^{(2^{\svH})}$\label{l:1}
					\State Execute \svF{\svD} 
					\State Execute \swAA{\svG{\svD}}{\svD}{\svf}
					\State Execute \swDD{\svG{\svD} + \swAa{\svD}}{\svD}
				\end{algorithmic}
				\end{footnotesize}
			\end{algorithm}

Algorithm~\ref{onlyn:simple} consists mainly of a call to $\mathcal{A}(\rho)$ that is given by Algorithm~\ref{onlyn:small}.
Since $\svH=\svN$, we know that at line~\ref{l:1} of Algorithm~\ref{onlyn:small}, $N$ is a polynomial upperbound on $n$, and the execution of \svF{\svD} in a call to $\mathcal{A}(\rho)$ by the first woken-up good agent permits to visit every node of the graph and to wake up all dormant agents. As a result, the delay between the starting times of \swAA{\svG{\svD}}{\svD}{\svf} by any two good agents of the strong team is at most $\svG{\svD}$. According to the properties of procedure \swA ~(cf. Theorem~\ref{Block:theo:Principal-2}), this guarantees in turn that the delay between the starting times of \swDD{\svG{\svD} + \swAa{\svD}}{\svD} by any two good agents is at most $\svG{\svD} + \swAa{\svD}$, and at least $4f+2$ good agents start this procedure at the same time in the same node. Hence, in view of the properties of procedure \swD ~(cf. Theorem~\ref{block2:proof:theo}), all good agents declare gathering is achieved at the same time in the same node after a polynomial number of rounds (w.r.t $n$) since the wake-up time of the earliest good agent.

Unfortunately, we are not in such an ideal situation. At first glance, one might argue that it is not really a problem because all agents are assigned distinct labels that are, after all, distinct binary strings. Thus, by ensuring that each good agent applies on its label the transformation given in Section~\ref{sec:pre}, and then processes one by one each bit $b_i$ of its transformed label by executing $\mathcal{A}(b_i)$, we can guarantee (with some minor technical adjustments) that the gathering of all good agents is done in time polynomial in $n$ and $|l_{min}|$. Indeed, in view of Proposition~\ref{prelim:label} the conditions of the ideal situation are recreated when the agents process their $j$-th bits for some $j\leq2|l_{min}|+4$. Unfortunately it is not enough for our purpose. In fact, in the ideal situation, there is just one bit to process: thus, {\em de facto} every good agent knows that every good agent knows that gathering will be done at the end of this single process. However, it is no longer the case when the agents have to deal with sequences of bit processes: the good agents have a priori no mean to detect collectively and simultaneously when they are gathered. It should be noted that if the agents knew $f$, we could use an existing algorithmic component (cf. \cite{DieudonnePP14}) allowing to solve $f$-Byzantine gathering if at some point some good agents detect the presence of a group of at least $2f+1$ agents in the network. Such a group is necessarily constructed during the sequence of bit processes given above, but again, it cannot be a priori detected as the agents {do} not know $f$ or an upperbound on it. Hence, in our goal to optimize the amount of global knowledge, we need to implement a new strategy to allow the good agents to declare gathering achieved jointly and simultaneously. It is the purpose of the rest of this subsection.

To get all good agents declare simultaneously the gathering achieved, we want to reach a round in which every good agent knows that every good agent knows that gathering is done. So, let us return to our sequence of bit processes. As mentioned above, when a good agent has finished to read the first half of its transformed label -- call such an agent \emph{experienced} -- it has the guarantee that the gathering of all good agents has been done at least once. Hence, when an experienced agent starts to process the second half of its transformed label, it actually knows an approximation of the number of good agents with a margin of error of $f$ at the most. For the sake of convenience, let us consider that an experienced agent knows the exact number $\mu$ of good agents: the general case adds a slight level of complexity that is unnecessary to understand the intuition. So, each time an experienced agent completes the process of a bit in the second half of its transformed label, it is in a node containing less than $\mu$ agents or at least $\mu$ agents. In the first case, the experienced agent is sure that the gathering is not achieved. In the second case, the experienced agent is in doubt. In our solution, we build on this doubt. How do we do that? So far, each bit process was just made of one call to procedure $\mathcal{A}$: now at the end of each bit process, we add a waiting period of some prescribed length, followed by an extra step that consists in applying $\mathcal{A}$ again, but this time according to the following rule. If during the waiting period it has just done, an agent $X$ was in a node containing, for a sufficiently long period, an agent pretending to be experienced and in doubt (this agent may be $X$ itself), then agent $X$ is said to be {\em optimistic} and the second step corresponds to the execution of $\mathcal{A}(0)$. Otherwise, agent $X$ is said to be {\em pessimistic} and the second step corresponds to the execution of $\mathcal{A}(1)$.


If at least one good agent is optimistic within a given second step, then the gathering of all good agents is done at the end of this step. Indeed, through similar arguments of partition to those used for the ideal situation, we can show it is the case when at least another agent is pessimistic. However, it is also, more curiously, the case when there is no pessimistic agents at all. This is due in part to the fact that two good experienced agents cannot have been {in doubt in two distinct nodes} during the previous waiting period (otherwise, we would get a contradiction with the definition of $\mu$). Thus, all good agents start $\mathcal{A}(0)$ from at most $f+1$ distinct nodes (as the Byzantine agents can mislead the good agents in at most $f$ distinct nodes during the waiting period), which implies by the pigeonhole principle that at least $4f+2$ good agents {start} it from the same node. Combined with some other technical arguments, we can show that the conditions of Theorem~\ref{block2:proof:theo} are fulfilled when the agents execute \swD{} at the end of $\mathcal{A}(0)$, thereby guaranteeing again gathering of all good agents. 

As a result, the addition of an extra step to each bit process gives us the following interesting property: when a good agent is optimistic at the beginning of a second step, {at its end} the gathering is done and, more importantly, the optimistic agent knows it because its existence ensures it. Note that, it is a great progress, but unfortunately it is not yet sufficient, particularly because the pessimistic agents do not have the same kind of guarantee. The way of remedying this is to repeat once more the same kind of algorithmic ingredient as above. More precisely, at the end of each second step, we add again a waiting period of some prescribed length, followed by a third step that consists in applying $\mathcal{A}$ in the following manner. If during the waiting period it has just done, an agent $X$ was in a node containing, for a sufficiently long period, an agent pretending to be optimistic, then the third step of agent $X$ corresponds to the execution of $\mathcal{A}(0)$ and it becomes optimistic if it was not. Otherwise, the third step of agent $X$ corresponds to the execution of $\mathcal{A}(1)$ and the agent stays pessimistic.

By doing so, we made a significant move forward. To understand why, we want to invite the reader to reconsider the case when there is at least one good agent that is optimistic at the beginning of a second step. As we have seen earlier, at the end of this second step, all good agents are necessarily gathered and every optimistic agent knows it. In view of the last changes made to our solution, when starting the third step, every good agent is then optimistic. As explained above the absence of pessimistic good agent is very helpful, and using here the same arguments, we are sure that when finishing the third step, all good agents are gathered and every good agent knows it because all of them are optimistic. Actually, it is even a little more subtle: the optimistic agents of the first generation (i.e., those that {were} already optimistic when starting the second step) know that the gathering is done and know that every good agent knows it. Concerning the optimistic agents of the second generation (i.e., those that became optimistic only when starting the third step), they just know that the gathering is done, but do not know whether the other agents know it or not. Recall that to get all good agents declare simultaneously the gathering achieved, we want to reach a round in which every good agent knows that every good agent knows that gathering is done. We are very close to such a consensus. To reach it, at the end of a third step, the optimistic agents of the first generation make themselves known to all agents. Note that if there were at least $f+1$ agents declaring to be optimistic agents of the first generation and if $f$ was part of $\mathcal{GK}$, the consensus would be reached. Indeed, among the agents declaring to be optimistic of the first generation, at least one is {necessarily} good and every agent can notice it: at this point we can show that every good agent knows that every good agent knows that gathering is done. 

However, the agents do not know $f$. That being said, at the end of a third step, note that an optimistic agent knowing that the gathering is done can compute an approximation $\tilde{f}$ of the number of Byzantine agents. More precisely, if the number of agents gathered in its node is $\sxo$, the optimistic agent knows than the number of Byzantine agents cannot exceed $\tilde{f}=\svS{\sxo}{\sxq}$ according to the definition of a strong team. Based on this fact, we are saved. Indeed, our algorithm is designed in such a way that all good agents correctly declare the gathering is achieved in the same round after having computed the same approximation $\tilde{f}$ and noticed at least $\tilde{f}+1$ agents that {claim} being optimistic of the first generation during a third step. We show that such an event necessarily occurs before any agent finishes the $(4|l_{min}|+8)$-th bit process of its transformed label, which permits to obtain the promised polynomial complexity. This
is where our feat of strength is: obtaining such a complexity with a small amount of global knowledge, while ensuring that the Byzantine agents cannot {confuse} the good agents in any way. Actually, our algorithm is judiciously orchestrated so that the only thing Byzantine agents can really do is just to accelerate the resolution of the problem.

		\subsection{Formal description}

Algorithm~\ref{onlyn:algo:main} gives the formal description of \sth{\swF}. As mentioned at the beginning of this section, we assume that $\svH=\svN$. \stH{\swF} uses the two building blocks \swA{} and \swD{} described in the previous section. It also uses two small subroutines, \swC{} and \swE{}, which are described after Algorithm~\ref{onlyn:algo:main}. Both these subroutines do not have any input parameters, but when executing them, the agent can access to the current value of every variable defined in Algorithm~\ref{onlyn:algo:main}. Hence the variables defined in Algorithm~\ref{onlyn:algo:main} can be viewed as variables of global scope.
		
			


			\begin{algorithm}
				\caption{\stH{\swF} executed by an agent $\sxa$ with label $\sxb$.}
				\label{onlyn:algo:main}
				\begin{footnotesize}
				\begin{algorithmic}[1]
					\State $N \gets 2^{(2^{\svH})}$\label{onlyn:line:1}
					\State Let $\svO{\sxb} = \sva_1 \ldots \sva_\svb$
					\State // Recall that $\svO{\sxb}$ is the transformed label of agent $\sxa$ (refer to Section~\ref{sec:pre})
					\State $\svd \gets 1$
					\State $\svc \gets 1$
					\State Execute \svF{\svD} \label{alg:explo}
					\While {$\svc \leq 3\svb$}\label{alg:deb}
						\If {$\svc \bmod 3 = 1$} \label{onlyn:line:test}
							\State $\sve \gets 0$ \label{onlyn:line:init}
							\State $\svf \gets \sva_{(\svc \bdiv 3) + 1}$
						\EndIf
						\State Execute \swAA{\svG{\svD}}{\svD}{\svf}\label{alg:group}
						\State Execute \swDD{\svG{\svD} + \swAa{\svD}}{\svD}\label{alg:merge}
						\State Execute \swCC{}
						\State Let $(\svf, \svd)$ be the value returned by \swCC{}
						\If {$\svf = \svK$}
							\State $\sve = \sve + 1$ \label{onlyn:line:incr}
						\EndIf
						\If {$\svc \bmod 3 = 0$}
							\State Execute \swEE{}
							\State Let flag be the boolean value returned by \swEE{}
							\If {flag=\svL{}}
								\State Declare that gathering is achieved \label{onlyn:line:exit}\label{alg:declare}
							\EndIf
						\EndIf
						\State Let $r$ be the time elapsed since the beginning of the execution of this procedure
						\State Wait $\svG{\svD} + \svc(3\svG{\svD} + 4(\swAa{\svD} + \swDd{\svD})+ 2) - r$ rounds \label{onlyn:algo:main:wait}
						\State $\svc \gets \svc + 1$
					\EndWhile \label{alg:fin}
				\end{algorithmic}
				\end{footnotesize}
			\end{algorithm}
		
When presenting the high level idea of our solution in the previous subsection, we used some qualifiers like ``experienced and in doubt'', ``optimistic of the second generation'' or ``optimistic of the first generation''. These qualifiers were only used to help the reader understand the essence of our solution and they do not appear explicitely in the formal description. To ease the transition from the high level idea, just note that these qualifiers are reflected in the values $1,2$ or $3$ of variable $\sve$. For example, an optimistic agent of the first generation corresponds to an agent for which $\sve=3$.		

			Now we give the formal descriptions of the subroutines \swC{} and \swE{}. Let us begin by subroutine \swC{}.

{\bf Subroutine \swCC{}}\\When executing this subroutine, an agent $\sxa$ can transit to different states that are \swT{}, \swU{} and \swV{}. The initial state is \swT{}. During an execution of this procedure, $\sxa$ never moves. Let us denote by $\sxl$ the node occupied by the agent while executing this subroutine and by $\sxf$ the value $\svG{\svD} + \swAa{\svD} + \swDd{\svD}$.
				
				\stb{\swT} Agent $\sxa$ spends one round in this state. Let $\sxm$ be the maximum number of agents {in \sta{\swT}} (including itself) that $\sxa$ notices at this round in node $\sxl$. Let $\svm$ be $max(\svd,\sxm)$. The agent $\sxa$ transits either to \sta{\swU} or to \sta{\swV}. It transits to \sta{\swU} if $\sve \ne 0$, or $2\svc > 3\svb$ and $\sxm \geq \svT{\svm}{\sxn}$. Otherwise, it transits to \sta{\swV}.
				
				\stb{\swU} Agent $\sxa$ waits $3\sxf$ rounds in this state. At the end of this waiting period, the agent exits the execution of \swCC{}: the returned value of the subroutine is then the couple $(\svK, \svm)$.
				
				\stb{\swV} Agent $\sxa$ waits $3\sxf$ rounds in this state. At the end of the waiting period, the agent exits the execution of \swCC{} and returns a couple, the value of which is as follows. If during the waiting period, agent $\sxa$ notices $2\sxf$ consecutive rounds such that in each of them there is at least one agent in \sta{\swU} in node $\sxl$, then the returned value is $(\svK, \svm)$. Otherwise the returned value is $(\svJ, \svm)$.

Now, let us describe the second subroutine \swEE{}.

{\bf Subroutine \swEE{}}~\\
Agent $\sxa$ waits a single round and then exists the execution of \swEE{}. During this round, agent $\sxa$ transmits the value of its variable $\sve$, and the word \swW{} in order to indicate that it is executing the same named subroutine. Let us denote by $\sxo$ the number of agents in its current node during the single round of the execution. If the value of the variable $\sve$ of $\sxa$ belongs to $\{2; 3\}$, and there are more than $\svS{\sxo}{\sxq}$ distinct agents transmitting $3$ and \swW{}, the subroutine returns \svL{}. Otherwise, the subroutine returns \svM{}.
				
		\subsection{Proof and analysis}
		
			In this subsection, we prove the correctness and the polynomiality of Algorithm~\swF{} to solve $\svA$-Byzantine gathering with strong teams in all graphs of size at most $\svC$, assuming that $\svH=\svN$. We start with the following proposition.
			
			\begin{proposition} \label{onlyn:proof:figures}
				Let $\szw$ be a positive integer. If within a team of $\szw$ agents, there are $\szx \geq \svR{\svA}$ good agents and at most $\svA$ Byzantine agents, then we have $\svA \leq \svS{\szw}{\szy} < \szx$.
			\end{proposition}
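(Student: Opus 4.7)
The plan is to treat this as a direct monotonicity argument. First I would abbreviate the expression $(5y+1)(y+1)+1 = 5y^2 + 6y + 2$ by $h(y)$ and observe that $h$ is strictly increasing on the nonnegative integers, so that $\svS{p}{y} = \max\{y \mid h(y) \leq p\}$ is simply the largest nonnegative integer whose image under $h$ does not exceed $p$.

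For the lower bound $f \leq \svS{p}{y}$, I would just note that the team contains $x$ good agents, so $p \geq x$. Combined with the hypothesis $x \geq h(f)$, this gives $h(f) \leq p$, i.e.\ $f$ belongs to the set $\{y \mid h(y) \leq p\}$ over which the max is taken, hence $f \leq \svS{p}{y}$.

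For the upper bound $\svS{p}{y} < x$, let $m = \svS{p}{y}$. By definition of $m$ we have $h(m) \leq p$. Since the team consists of $x$ good agents together with at most $f$ Byzantine agents, $p \leq x + f$, hence $h(m) \leq x + f$. I would then show that $h(x) > x + f$, whence strict monotonicity of $h$ forces $m < x$. To bound $h(x)$, note that the hypothesis $x \geq h(f) = 5f^2 + 6f + 2$ already gives $x \geq f+2$; in particular $x \geq 2$ and $x > f$. Thus $h(x) - (x+f) = 5x^2 + 5x + 2 - f$, and since $x > f$ and $x \geq 1$ we obtain $5x^2 \geq 5x > f$, so $5x^2 + 5x + 2 > f$, i.e.\ $h(x) > x+f \geq p \geq h(m)$, and monotonicity of $h$ yields $x > m$.

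There is no real obstacle here; the argument is a two-line algebraic verification once the monotonicity of $h$ is isolated. The only thing to be mildly careful about is to track the chain $p \geq x$ (for the lower bound) versus $p \leq x + f$ (for the upper bound), and to use the quadratic slack in $h$ to absorb the additive $+f$ on the right-hand side. The role of the quadratic term $5f^2$ in the definition of a strong team is precisely what makes the numerics in the last display comfortable.
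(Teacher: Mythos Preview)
Your proof is correct and follows essentially the same line as the paper's. Both arguments establish $f \leq \max\{y : h(y) \leq p\}$ from $p \geq g \geq h(f)$, and both establish the strict upper bound by showing $h(g) > p$; the paper packages the latter as a contradiction via the intermediate estimate $p \leq g + f < 2g < h(g)$, whereas you compute the difference $h(g) - (g+f)$ directly, but the content is the same monotonicity argument.
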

			
			\begin{proof}
				First of all, $\svA \leq \svS{\szw}{\szy}$ follows from the fact that $\szw \geq \svR{\svA}$. Then, let us assume by contradiction that $\svS{\szw}{\szy} \geq \szx$. This implies that $\svR{\szx} \leq \szw$. However, $2\szx < \svR{\szx}$ and $\szw \leq \szx + \svA < 2\szx$. By transitivity, we get $2\szx < 2\szx$. This is a contradiction, which completes the proof.
			\end{proof}

			The executions of all the subroutines and building blocks that are mentioned in the following statements and their proofs always occur during an execution of \sth{\swFF{\svN}} by an agent in a graph of size at most $\svC$. Hence, for ease of reading, we omit to mention it. Before going further, we give some notations that are used in the statement of the next lemma and its proof. For any good agent $\svk$, we denote by $\sxx{\svk}{\sxr}$ the round (if any) at which $\svk$ starts its $\sxr$-th execution of \sth{\swA}. We also denote by $\sua{\sxr}$ the first round (if any) at which there is at least one good agent that starts its $\sxr$-th execution of \sth{\swA}. Finally, according to line~\ref{onlyn:line:1} of Algorithm~\ref{onlyn:algo:main}, $N$ is the value $2^{(2^{\svH})}$.
			
			
			\begin{lemma} \label{onlyn:proof:delay}
				 Let $A$ be a good agent. For any positive integer $\sxr$, $\sxx{\svk}{\sxr + 1} = \sxx{\svk}{\sxr} + 3\svG{\svD} + 4\swAa{\svD} + 4\swDd{\svD} + 2$, and every good agent that starts its $\sxr$-th execution of \swA{} does it at round $\sua{\sxr} + \svG{\svD} - 1$ at the latest.
			\end{lemma}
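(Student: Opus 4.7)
The plan is to read off $\sxx{A}{i}$ directly from the structure of Algorithm~\ref{onlyn:algo:main}. Let $t_A$ denote the round at which $A$ is woken up; agent $A$ immediately executes \svF{\svD} at line~\ref{alg:explo}, which lasts $\svG{\svD}$ rounds, and then enters the first iteration of the while loop, so $\sxx{A}{1}=t_A+\svG{\svD}$. Set $C:=3\svG{\svD}+4\swAa{\svD}+4\swDd{\svD}+2$. I would then argue by induction on $i$ that the wait at line~\ref{onlyn:algo:main:wait} is chosen precisely to equalize the duration of each iteration, making the total time elapsed since the start of the procedure equal to $\svG{\svD}+i\cdot C$ at the end of iteration $i$. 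This immediately gives $\sxx{A}{i+1}=t_A+\svG{\svD}+i\cdot C$ and hence the identity $\sxx{A}{i+1}-\sxx{A}{i}=C$ claimed by the lemma.

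The only routine check needed is that this wait is always non-negative, i.e., that the body of a single iteration (everything preceding the wait) never exceeds $C$ rounds. I would tally the four subroutine calls appearing inside the loop body in turn. By definition of $\swAa{\svD}$, the call \swAA{\svG{\svD}}{\svD}{\svf} at line~\ref{alg:group} takes at most $\swAa{\svD}$ rounds; by definition of $\swDd{\svD}$, the call \swDD{\svG{\svD}+\swAa{\svD}}{\svD} at line~\ref{alg:merge} takes at most $\swDd{\svD}$ rounds; a direct reading of the description of \swC{} shows that a call to \swCC{} consists of exactly one round in \sta{\swT} followed by a waiting period of $3(\svG{\svD}+\swAa{\svD}+\swDd{\svD})$ rounds in \sta{\swU} or \sta{\swV}, i.e., exactly $3\svG{\svD}+3\swAa{\svD}+3\swDd{\svD}+1$ rounds; finally, whenever it is invoked, \swEE{} lasts a single round. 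These four terms sum to exactly $C$, which is the budget encoded in the wait, so the wait is always non-negative and the inductive step goes through.

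For the second claim, let $t_{\min}$ denote the first round at which any good agent is woken up, and let $E$ be some such agent. Since $E$ begins its procedure by executing \svF{\svD} at line~\ref{alg:explo}, and since this procedure traverses every node of the graph within $\svG{\svD}$ rounds, every dormant good agent is awakened no later than round $t_{\min}+\svG{\svD}-1$, which gives $t_A-t_{\min}\in[0,\svG{\svD}-1]$ for every good agent $A$. Combined with the formula $\sxx{A}{i}=t_A+\svG{\svD}+(i-1)C$ established above, this yields $\sua{i}=t_{\min}+\svG{\svD}+(i-1)C$ and $\sxx{A}{i}\leq\sua{i}+\svG{\svD}-1$, as desired. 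The whole argument is essentially bookkeeping; the only place where real care is needed is in matching the exact duration of \swCC{} against the four constants appearing in $C$.
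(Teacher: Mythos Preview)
Your proof is correct and follows essentially the same approach as the paper: both establish the wake-up spread via the initial \svF{\svD} traversal, tally the durations of \swA{}, \swD{}, \swC{}, and \swE{} inside one loop iteration, and observe that the waiting instruction at line~\ref{onlyn:algo:main:wait} pads each iteration to the fixed length $C=3\svG{\svD}+4\swAa{\svD}+4\swDd{\svD}+2$. Your presentation is slightly more explicit in writing out the closed form $\sxx{A}{i}=t_A+\svG{\svD}+(i-1)C$, but the substance is identical.
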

			
			\begin{proof}
				 Since \svF{\svD} allows to visit every node of the graph, once the first awoken good agent has completed its first execution of \svF{\svD} at the beginning of \stg{\swF}, each good agent is awoken and has at least started its first execution of \svF{\svD}. Every good agent spends exactly $\svG{\svD}$ rounds executing it, and then starts its first execution of \sth{\swA}. Hence, every good agent starts its first execution of \sth{\swA} in some interval of $\svG{\svD}$ rounds, between rounds $\sua{1}$ and $\sua{1} + \svG{\svD} - 1$.
				
				We now consider the routines a good agent executes between the beginnings of any two consecutive executions of \swA{}. Let us show that their execution lasts at most $3\svG{\svD} + 4\swAa{\svD} + 4\swDd{\svD} + 2$ rounds. Any good agent spends at most $\swAa{\svD}$ rounds executing \swAA{\svG{\svD}}{\svD}{\svX} for any $\svX \in \{0; 1\}$, at most $\swDd{\svD}$ rounds executing \swDD{\svG{\svD} + \swAa{\svD}}{\svD}, exactly $3\sxf + 1$ rounds executing \swCC{}, and exactly 1 round executing \swEE{}. The sum of these amounts of rounds is $2 + 3\sxf + \swDd{\svD} + \swAa{\svD}$.
				
				In view of line~\ref{onlyn:algo:main:wait} of Algorithm~\ref{onlyn:algo:main} and since each agent spends exactly $\svG{\svD}$ rounds executing the initial \svF{\svD}, for any positive integer $\sub$ and any good agent $\szg$, the above sum is exactly the amount of rounds between $\sxx{\szg}{\sub}$ and $\sxx{\szg}{\sub + 1}$. Hence, any good agent spends exactly $4\swDd{\svD} + 4\swAa{\svD} + 3\svG{\svD} + 2$ rounds between the beginnings of any two consecutive executions of \sth{\swA}, which completes the proof.
			\end{proof}

			\begin{proposition} \label{onlyn:proof:prop}
				Consider a round $r$ at which two good agents $\svk$ and $\svl$ execute the same routine $\sxv$ from the set $\{$\swA; \swC; \swD; \swE$\}$. If $\svk$ is executing its $\sxr$-th execution of $\sxv$ at round $r$, then $\svl$ is also executing its $\sxr$-th execution of $\sxv$ at round $r$.
			\end{proposition}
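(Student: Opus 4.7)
The strategy is purely temporal: I will show, using Lemma~\ref{onlyn:proof:delay} and the bounded durations of the four routines, that the time window during which any good agent can be executing its $\sxr$-th instance of a given routine $\sxv$ is disjoint from the window corresponding to its $(\sxr-1)$-th or $(\sxr+1)$-th instance. Since all good agents land in the same window, their indices must match. The whole argument rests on one inequality: the per-iteration length $D=3\svG{\svD}+4\swAa{\svD}+4\swDd{\svD}+2$ is much larger than the sum of the maximum per-agent delay $\svG{\svD}-1$ and the maximum duration of a single routine (at most $\max(\swAa{\svD},\swDd{\svD},3(\svG{\svD}+\swAa{\svD}+\swDd{\svD})+1)$).

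First, I would handle $\sxv=$ \swA{}. By Lemma~\ref{onlyn:proof:delay}, for every pair of good agents $\svk,\svl$ and every positive integer $\sxs$, we have $|\sxx{\svk}{\sxs}-\sxx{\svl}{\sxs}|\leq \svG{\svD}-1$, and $\sxx{\svk}{\sxs+1}-\sxx{\svk}{\sxs}=D$. If $\svk$ is in its $\sxr$-th \swA{} at round $r$, then $r\in[\sxx{\svk}{\sxr},\sxx{\svk}{\sxr}+\swAa{\svD}-1]$ by Theorem~\ref{Block:theo:Principal-2}. If $\svl$ is executing \swA{} at the same round, say its $\sxs$-th one, then $r\in[\sxx{\svl}{\sxs},\sxx{\svl}{\sxs}+\swAa{\svD}-1]$. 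Combining the three bounds above and checking that $D>\svG{\svD}-1+\swAa{\svD}-1$ forces $\sxs=\sxr$: if $\sxs\geq \sxr+1$ then $r\geq\sxx{\svl}{\sxr}+D\geq\sxx{\svk}{\sxr}-(\svG{\svD}-1)+D$, which exceeds $\sxx{\svk}{\sxr}+\swAa{\svD}-1$; the symmetric argument rules out $\sxs\leq\sxr-1$.

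Next, for $\sxv\in\{$\swC; \swD$\}$, the same reasoning applies because these two routines are launched once per iteration of the while loop, at offsets from the iteration's start that are bounded by $\swAa{\svD}$ (for \swD) and $\swAa{\svD}+\swDd{\svD}$ (for \swC), while each of them has a fixed or bounded duration ($\swDd{\svD}$ for \swD, and exactly $3(\svG{\svD}+\swAa{\svD}+\swDd{\svD})+1$ for \swC). So the window in which any good agent can be in its $\sxr$-th instance of $\sxv$ has width at most $\svG{\svD}-1$ plus that duration, which is still strictly less than $D$; the same pigeonholing as for \swA{} then forces two good agents simultaneously running $\sxv$ to be at the same index.

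The case $\sxv=$ \swE{} is analogous, with the bookkeeping detail that \swE{} is only called when $\svc\bmod 3=0$: the $\sxr$-th execution of \swE{} by any good agent takes place within iteration $3\sxr$ of its while loop, starting at a bounded offset from the start of that iteration and lasting a single round. The distance between consecutive \swE{} executions by the same good agent is therefore exactly $3D$, which is even larger than $D$, so the non-overlap argument is only easier. The only possible subtlety is that \swE{} lasts a single round, so the window reduces to just the interval of starting times, but this does not change the reasoning. I expect that no step is truly hard; the main obstacle is simply being careful enough with the arithmetic to confirm that the inequality $D>\svG{\svD}-1+L$ holds for $L$ equal to the maximum duration of each of the four routines.
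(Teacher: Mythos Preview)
Your approach is essentially the paper's: both arguments are purely temporal, anchoring each routine to the start $\sxx{\cdot}{\sxr}$ of the $\sxr$-th \swA{} via Lemma~\ref{onlyn:proof:delay}, and then checking that the per-iteration spacing $D=3\svG{\svD}+4\swAa{\svD}+4\swDd{\svD}+2$ dominates the relevant drift. The paper likewise proceeds by contradiction and splits into the two groups $\{\swA,\swD\}$ versus $\{\swC,\swE\}$.

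There is, however, a genuine arithmetic gap in your \swC{} case. You claim the window for the $\sxr$-th \swC{} has width at most $(\svG{\svD}-1)$ plus the duration $3\sxf+1$, and that this is strictly less than $D$. But that width must also absorb the \emph{variation} in the offset at which \swC{} starts within an iteration: since \swA{} and \swD{} have only upper bounds $\swAa{\svD}$ and $\swDd{\svD}$ on their durations, this offset can range over an interval of length up to $\swAa{\svD}+\swDd{\svD}$. With that correction the width you would need to beat is
\[
(\svG{\svD}-1)+(\swAa{\svD}+\swDd{\svD})+(3\sxf+1)=4\svG{\svD}+4\swAa{\svD}+4\swDd{\svD},
\]
which \emph{exceeds} $D$ (the difference is $\svG{\svD}-2$), so your inequality fails as stated. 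The paper closes exactly this gap by also using a \emph{lower} bound on the offset: between $\sxx{\svl}{\sxs}$ and the round at which $\svl$ enters its $\sxs$-th \swC{} (or \swE{}), agent $\svl$ has to execute both \swA{} and \swD{}, and these together take strictly more than $\svG{\svD}$ rounds. That extra $\svG{\svD}$ subtracted from the lower end of the window is precisely what restores the strict inequality. Your plan is otherwise sound---and your handling of \swE{} via the $3D$ spacing is in fact cleaner than the paper's slightly loose indexing there---but for \swC{} you must invoke this minimum-duration observation rather than relying on upper bounds alone.
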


			\begin{proof} 
				Let us assume by contradiction that there exists some round $\sxw$ at which two good agents $\sxy$ and $\szf$ are respectively executing their $\svj$-th and $\suf$-th execution of a same routine $\sxv$ from the set $\{$\swA; \swC; \swD; \swE$\}$ with $\svj < \suf$. In view of Lemma~\ref{onlyn:proof:delay}, we know that $(\sxx{\szf}{\suf}-\sxx{\szf}{\svj})\geq 3\svG{\svD} + 4\swAa{\svD} + 4\swDd{\svD} + 2$. Thus $(\sxw - \sxx{\szf}{\svj})\geq 3\svG{\svD} + 4\swAa{\svD} + 4\swDd{\svD} + 2$. However, if $\sxv$ is \swA{} or \swD{}, we know that $(\sxw - \sxx{\sxy}{\svj})\leq\swAa{\svD} + \swDd{\svD}$. Hence, $(\sxx{\sxy}{\svj}-\sxx{\szf}{\svj})>\svG{\svD}$, which contradicts Lemma~\ref{onlyn:proof:delay}.
				
				Hence, $\sxv$ must be \swC{} or \swE{}. In either case, in view of Algorithm~\ref{onlyn:algo:main}, between $\sxx{\szf}{\suf}$ and $\sxw$, $\szf$ must have executed \swA{} and \swD{}. Executing these routines requires a minimal number a rounds which is strictly larger than $\svG{\svD}$. This means that, $(\sxw - \sxx{\szf}{\svj})\geq4\svG{\svD} + 4\swAa{\svD} + 4\swDd{\svD} + 3$. On the other hand, $(\sxw - \sxx{\sxy}{\svj})\leq 3\svG{\svD} + 4\swAa{\svD} + 4\swDd{\svD} + 2$ rounds. Hence, $(\sxx{\sxy}{\svj}-\sxx{\szf}{\svj})>\svG{\svD}$, which contradicts again Lemma~\ref{onlyn:proof:delay}.
			\end{proof}

			By Lemma~\ref{onlyn:proof:delay} and lines~\ref{alg:group}-\ref{alg:merge} of Algorithm~\ref{onlyn:algo:main}, we know that all good agents that start their $\sxr$-th execution of \swDD{\svG{\svD} + \swAa{\svD}}{\svD} for a given positive integer $i$, do it in an interval lasting at most the number of rounds given as first parameter of \swD.  Hence, in view of Theorem~\ref{Block:theo:Principal-2}, Theorem~\ref{block2:proof:theo}, and Lemma~\ref{onlyn:proof:delay}, we have the following corollary.

			\begin{corollary} \label{onlyn:proof:cor}
				Assume that for a given positive integer $i$, there is a group of at least $\svR{\svA}$ good agents that start (at possibly different nodes or rounds) their $\sxr$-th executions of \swAA{\svG{\svD}}{\svD}{\svf} with $\svf=0$ for at least one good agent and $\svf=1$ for at least one other good agent. There exist a node $\sxt$ and a round $\sxu$ such that each good agent in the graph that completes its $\sxr$-th execution of \swD{} does it at round $\sxu$ in node $\sxt$.
			\end{corollary}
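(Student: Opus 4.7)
The plan is to apply Theorem~\ref{Block:theo:Principal-2} to the $\sxr$-th execution of \swA{}, to extract a large enough synchronized nucleus, and then chain through Theorem~\ref{block2:proof:theo} applied to the $\sxr$-th execution of \swD{}. The whole argument is mostly bookkeeping: one has to check that the hypotheses of each theorem are met with the right parameters, and the delicate point is to verify the arithmetic so that the nucleus produced by \swA{} is exactly of the size required by \swD{}.

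First I would set $x = 5\svA + 2$, so that $(x-1)(\svA+1) + 1 = (5\svA+1)(\svA+1)+1 = \svR{\svA}$ and $x - \svA = 4\svA + 2$. By assumption, at least $\svR{\svA}$ good agents start their $\sxr$-th execution of \swAA{\svG{\svD}}{\svD}{\svf}, with $\svf = 0$ for at least one of them and $\svf = 1$ for at least one other. By Lemma~\ref{onlyn:proof:delay}, those starts all occur within an interval of $\svG{\svD}$ rounds, which matches the first parameter of the call. All hypotheses of Theorem~\ref{Block:theo:Principal-2} being satisfied (with $n = \svD$ and $\svI = \svG{\svD}$), I conclude that there exist a node $\sxt'$ and a round $\sxu'$ such that at least $4\svA + 2$ good agents finish their $\sxr$-th execution of \swA{} in node $\sxt'$ at round $\sxu'$.

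Next, inspecting lines~\ref{alg:group}--\ref{alg:merge} of Algorithm~\ref{onlyn:algo:main}, these $4\svA + 2$ good agents enter their $\sxr$-th execution of \swDD{\svG{\svD} + \swAa{\svD}}{\svD} together at round $\sxu' + 1$ in node $\sxt'$. I then need to show the remaining good agents start their $\sxr$-th execution of \swD{} inside a window of $\svG{\svD} + \swAa{\svD}$ rounds around $\sxu' + 1$. By Lemma~\ref{onlyn:proof:delay}, the start rounds of the $\sxr$-th execution of \swA{} by good agents lie in an interval of length at most $\svG{\svD} - 1$, and each execution of \swA{} takes at most $\swAa{\svD}$ rounds; hence the starts of the $\sxr$-th execution of \swD{} lie in an interval of length at most $\svG{\svD} + \swAa{\svD} - 1$, which is exactly the first parameter of the \swD{} call.

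With the delay condition satisfied and a cluster of $4\svA + 2$ good agents starting in the same node at the same round, Theorem~\ref{block2:proof:theo} applies to the $\sxr$-th execution of \swD{} and yields a node $\sxt$ and a round $\sxu$ such that every good agent that finishes the $\sxr$-th execution of \swD{} does so in $\sxt$ at round $\sxu$. The main obstacle, as noted, is matching $\svR{\svA}$ with the form $(x-1)(\svA+1)+1$ so that the lower bound $x - \svA = 4\svA + 2$ coming out of Theorem~\ref{Block:theo:Principal-2} coincides exactly with the threshold needed to invoke Theorem~\ref{block2:proof:theo}; once this matching is observed, the rest is a direct chaining of the two building-block theorems via Lemma~\ref{onlyn:proof:delay}.
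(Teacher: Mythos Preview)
Your proof is correct and follows exactly the approach the paper intends: the paper presents this corollary as an immediate consequence of Theorem~\ref{Block:theo:Principal-2}, Theorem~\ref{block2:proof:theo}, and Lemma~\ref{onlyn:proof:delay}, and you have simply written out the chaining argument in full, including the key arithmetic check that $x=5\svA+2$ gives $(x-1)(\svA+1)+1=\svR{\svA}$ and $x-\svA=4\svA+2$.
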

			

Before proving Theorem~\ref{onlyn:proof:live} that is the main result of this section, we still need to prove the following series of four lemmas.
			\begin{lemma} \label{onlyn:proof:lemma2}
				Assume that for a given positive integer $\sxr$, there are at least $5\svA + 2$ good agents that start (at possibly different rounds) their $\sxr$-th executions of \swAA{\svG{\svD}}{\svD}{\svf} in the same node $\sxt$ with $\svf=0$. There exist a round $\sae$ and a node $\suc$ such that each good agent in the graph that completes its $\sxr$-th execution of \sth{\swD} does it at round $\sae$ in node $\suc$.
			\end{lemma}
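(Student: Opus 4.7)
The plan is to reduce the statement to a direct application of Theorem~\ref{block2:proof:theo} to the $\sxr$-th execution of \swD{}. Theorem~\ref{block2:proof:theo} needs two ingredients: a bound of $\svG{\svD}+\swAa{\svD}$ on the pairwise delay between the starting rounds of that execution by the good agents, and a core group of $4\svA+2$ good agents that start it in the same node at the same round. The first ingredient will follow from Lemma~\ref{onlyn:proof:delay}, which controls the delays in starting \swA{}, together with the fact that \swA{} lasts at most $\swAa{\svD}$ rounds, so that the delay in starting \swD{} is at most $(\svG{\svD}-1)+\swAa{\svD}$. The second ingredient is the heart of the proof; for it I would invoke Lemma~\ref{Block:lem:groupx-f} with $x=5\svA+2\geq \svA+2$, which produces $(x-\svA)=4\svA+2$ good agents that exit \swA{} together, as soon as there is a witness round at which at least $x$ agents meet on some node and every good agent among them is in phase Process of its $\sxr$-th call to \swA{}.

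The main obstacle is exhibiting this witness round. I would take $r=\sua{\sxr}+\svG{\svD}-1$. At that round, Lemma~\ref{onlyn:proof:delay} guarantees that every good agent has already entered its $\sxr$-th execution of \swA{} and has spent at most $\svG{\svD}-1$ rounds in it. Because state \swO{} lasts $2\svG{\svD}+3\svG{\svD}$ rounds and an agent does not move while in \swO{}, every good agent is still in \swO{} (hence in phase Process of its $\sxr$-th call to \swA{}) at round $r$, and in particular none of the $5\svA+2$ good followers of the hypothesis has moved away from $\sxt$. This provides a group of at least $5\svA+2$ agents sharing node $\sxt$ at round $r$, with every good agent on $\sxt$ in phase Process of its $\sxr$-th call to \swA{}, so the precondition of Lemma~\ref{Block:lem:groupx-f} is met.

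The lemma then yields $4\svA+2$ good agents that finish \swA{} at a common round and a common node, and since in Algorithm~\ref{onlyn:algo:main} the $\sxr$-th call to \swD{} is chained immediately after the $\sxr$-th call to \swA{}, these $4\svA+2$ good agents also start their $\sxr$-th execution of \swD{} at a common round and common node. Combining this with the delay bound on the start of \swD{} derived above settles all the hypotheses of Theorem~\ref{block2:proof:theo}, which delivers the required common completion round $\sae$ and node $\suc$. I do not expect any step beyond the witness-round identification to be delicate.
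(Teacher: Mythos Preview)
Your proposal is correct and follows essentially the same route as the paper: exhibit a witness round (the paper uses the same idea that the $5\svA+2$ followers overlap in \sta{\swO} because the delay is at most $\svG{\svD}-1$ while \swO{} lasts $2\svG{\svD}+3\svG{\svD}$ rounds), apply Lemma~\ref{Block:lem:groupx-f} with $x=5\svA+2$ to get $4\svA+2$ good agents exiting \swA{} together, and feed this plus the delay bound into Theorem~\ref{block2:proof:theo}. One small imprecision to clean up: the sentence ``every good agent is still in \swO{}'' is literally false for searchers---what you need (and what your argument actually delivers) is that the $5\svA+2$ followers are still in \swO{} at $\sxt$, while every good agent present at $\sxt$ is in phase Process because no good agent has spent more than $\svG{\svD}-1$ rounds in \swA{} yet.
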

			
			\begin{proof}
				Assume that there exist an integer $\sym$ and a node $\syn$ such that a group $\syl$ of at least $5\svA + 2$ good agents all start their $\sym$-th executions of \swAA{\svG{\svD}}{\svD}{\svf} in the same node $\sxt$ with $\svf=0$. In view of Lemma~\ref{onlyn:proof:delay}, every good agent that starts its $\sym$-th execution of \swA{}, does it between rounds $\sua{\sym}$ and $\sua{\sym} + \svG{\svD} - 1$. In view of the description of \swA{}, at the beginning of its execution of \swAA{\svG{\svD}}{\svD}{\svK}, every good agent, which is called a \swN{}, first enters \sta{\swO} and spends strictly more than $\svG{\svD}$ rounds waiting in this state. Hence, there is at least one round at which each good agent of $\syl$ is waiting in \sta{\swO} in node $\syn$ during the first phase of its $\sym$-th execution of \swA{}. Thus, by Lemma~\ref{Block:lem:groupx-f}, at least $4\svA + 2$ good agents exit their $\sym$-th execution of \swA{} at the same round and in the same node. This means that at least $4\svA + 2$ good agents start their $\sym$-th execution of \swDD{\svG{\svD} + \swAa{\svD}}{\svD} at the same round and in the same node. Besides, each good agent spends at most $\swAa{\svD}$ rounds in any execution of \swA{} which means that each good agent that completes its $\sym$-th execution of \swA{} does it between round $\sua{\sym}$ and round $\sua{\sym} + \swAa{\svD} + \svG{\svD} - 1$. Hence, in view of Theorem~\ref{block2:proof:theo}, there exist a round $\sad$ and a node $\sud$ such that each good agent that completes its $\sym$-th execution of \swDD{\svG{\svD} + \swAa{\svD}}{\svD} does it at $\sad$ in $\sud$.
			\end{proof}
			
			\begin{lemma} \label{onlyn:proof:1node}
				Assume that $g\geq\svR{\svA}$ good agents start (at possibly different nodes or rounds) their $(3\syo + 1)$-th execution of \stj{\swA}, for a given integer $\syo$. Let $\syp \leq 3$ be the smallest positive integer, if any, such that the $(3\syo + \syp)$-th execution of \sth{\swC} by at least one good agent returns a couple whose the first element is $\svK$. There exists a node $\sxt$ such that each good agent that enters \sta{\swU} during its $(3\syo + \syp)$-th execution of \swC{}, does it in $\sxt$.
			\end{lemma}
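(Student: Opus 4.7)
The plan is to combine a structural argument about the variable $\sve$ with a synchronization-then-counting argument rooted in the strong-team hypothesis.

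First, I would exploit the minimality of $\syp$ to pin down the value of $\sve$ at the start of the $(3\syo+\syp)$-th execution of \swC. According to line~\ref{onlyn:line:init} of Algorithm~\ref{onlyn:algo:main}, $\sve$ is reset to $0$ at the beginning of the current cycle (when $\svc = 3\syo + 1$), and within the cycle it can only be incremented when \swC returns a couple whose first element is $\svK$. By minimality of $\syp$, no good agent experienced such a return in any $(3\syo+k')$-th execution of \swC for $k' < \syp$. Hence $\sve = 0$ for every good agent entering its $(3\syo+\syp)$-th \swC, and the first disjunct of the transition rule out of \sta{\swT} is false. Therefore, a good agent that enters \sta{\swU} during this execution of \swC must satisfy the second disjunct: $2(3\syo+\syp) > 3\svb$ and $\sxm \geq \svT{\svm}{\sxn}$.

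Second, I would invoke Lemma~\ref{onlyn:proof:delay} and Proposition~\ref{onlyn:proof:prop} to argue that the unique round any good agent spends in \sta{\swT} during its $(3\syo+\syp)$-th \swC lies within a window of at most $\svG{\svD}-1$ rounds with respect to every other good agent. Then, tracing back through the immediately preceding \swD{} execution, I would use Corollary~\ref{onlyn:proof:cor} (or, failing its hypothesis, Lemma~\ref{onlyn:proof:lemma2}) to conclude that all good agents exit \swD{} at the same round and node, so they all actually enter \sta{\swT} of the $(3\syo+\syp)$-th \swC at the very same round, starting from a common node. Consequently, for any two good agents $\svk,\svl$ entering \sta{\swU}, their local counts $\sxm_\svk,\sxm_\svl$ are measured at the same round.

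Third, assuming for contradiction that two good agents $A$ and $B$ enter \sta{\swU} at distinct nodes $u_A \ne u_B$, the agents counted in $\sxm_A$ and in $\sxm_B$ are physically disjoint at the common round. Combining the threshold $\sxm_X \geq \svm_X - \svS{\svm_X}{\sxn}$ with the fact that $\svm_X = \max(\svd_X,\sxm_X)$ is non-decreasing along the execution and, by the time we reach the $(3\syo+\syp)$-th iteration, has been driven above $\svR{\svA}$ (as $g \geq \svR{\svA}$ good agents have already been co-located by the preceding \swD), Proposition~\ref{onlyn:proof:figures} guarantees $\svS{\svm_X}{\sxn} \geq \svA$. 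Carrying the bookkeeping through then forces $\sxm_A + \sxm_B$ to exceed the total number of agents present in the network at that round, which is the desired contradiction. Hence every good agent entering \sta{\swU} does so at one common node $\sxt$.

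The main obstacle I expect is the second step, specifically showing that the \sta{\swT} rounds of the various good agents coincide \emph{exactly} rather than merely lying within a short window. This requires the preceding \swD{} execution to have satisfied the hypotheses of either Corollary~\ref{onlyn:proof:cor} or Lemma~\ref{onlyn:proof:lemma2}, which in turn depends on what happened in the preceding \swA. Since the lemma's hypothesis only controls the start of the \emph{first} \swA of the cycle and not its $\svf$-inputs, handling the \swD{} synchronization cleanly is the delicate part; a fallback using the monotonicity of $\svd$ and a direct counting bound may be needed when exact synchronization cannot be asserted from the global hypotheses alone.
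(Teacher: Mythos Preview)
Your first step is correct and matches the paper exactly: the minimality of $\syp$ forces $\sve=0$, so any good agent entering \sta{\swU} must satisfy the second disjunct, in particular $2(3\syo+\syp) > 3\svb$.

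The gap is in your second step, and you have in fact identified it yourself. You try to show that the \emph{immediately preceding} $(3\syo+\syp)$-th execution of \swD{} gathers all good agents, but neither Corollary~\ref{onlyn:proof:cor} nor Lemma~\ref{onlyn:proof:lemma2} need apply there. For $\syp\in\{2,3\}$, minimality of $\syp$ means every good agent's previous \swC{} returned first element $\svJ$, so every good agent executes the $(3\syo+\syp)$-th \swA{} with $\svf=\svJ$; this kills both the ``one $0$, one $1$'' hypothesis of Corollary~\ref{onlyn:proof:cor} and the ``$5\svA+2$ agents with $\svf=0$'' hypothesis of Lemma~\ref{onlyn:proof:lemma2}. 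For $\syp=1$, the $\svf$-values are the $(\syo+1)$-th bits of the transformed labels, which may all coincide. Your proposed fallback via monotonicity of $\svd$ is circular: $\svd$ only grows when the agents are actually co-located in \sta{\swT}, which is precisely what you are trying to establish.

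The idea you are missing is to exploit the condition $2(3\syo+\syp) > 3\svb$ itself. It says that each of the two hypothetical agents $A,B$ entering \sta{\swU} at distinct nodes has already processed more than half of its transformed label. Proposition~\ref{prelim:label} then guarantees an index $s\le \syo$ at which the transformed labels of $A$ and $B$ differ, so at the $(3s-2)$-th execution of \swA{} the hypothesis of Corollary~\ref{onlyn:proof:cor} \emph{is} satisfied. That earlier co-location is what pushes the stored value $\svd$ (hence $\svm$) up to at least $g$ for both $A$ and $B$, and from there the counting in your third step goes through. Note also that exact synchronization of the \sta{\swT} rounds at step $3\syo+\syp$ is unnecessary: since every good agent spends a single round in \sta{\swT} per execution of \swC{}, Proposition~\ref{onlyn:proof:prop} already ensures no good agent is double-counted across the two nodes, regardless of whether those single rounds coincide.
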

			
			\begin{proof}
				Let us assume by contradiction that two good agents $\syq$ and $\syr$ both enter \sta{\swU} during their $(3\syo + \syp)$-th executions of \swC{} but from different nodes, respectively $\sys$ and $\syt$. To conduct this proof, we need to explain what the entrance of these good agents in \sta{\swU} implies. Note that during their $(3\syo + \syp)$-th executions of \swC{}, both agents have the same value for variable $\svc$ \ie $(3\syo + \syp)$. In view of lines~\ref{onlyn:line:test}-\ref{onlyn:line:init} of Algorithm~\ref{onlyn:algo:main}, when a good agent starts its $(3\syo + 1)$-th execution of \swA{}, the value of its variable $\sve$ is 0. Since $\sve$  is only incremented on line~\ref{onlyn:line:incr}, when the first element of the pair returned by \swC{} is $\svK$, we know by definition of $\syp$ that at the beginning of their $(3\syo + \syp)$-th executions of \swC{}, the value of $\sve$ for both $\syq$ and $\syr$ is still 0. Thus, in view of the description of \sth{\swC}, $2(3\syo + \syp) > 3\sza$ (resp. $2(3\syo + \syp) > 3\szb$) and while in \sta{\swT}, $\syq$ (resp. $\syr$) notices at least $\svT{\syw}{\syy}$ (resp. $\svT{\syx}{\syz}$) agents {in \sta{\swT}} in its node, where $\sza$ (resp. $\szb$) denotes the length of the transformed label of $\syq$ (resp. $\syr$) as defined in Algorithm~\ref{onlyn:algo:main} and $\syw$ (resp. $\syx$) denotes the value of $\svm$ that is used in \sta{\swT} by agent $\syq$ (resp. $\syr$).
				
				We now explain what $2(3\syo + \syp) > 3\sza$ and $2(3\syo + \syp) > 3\szb$ imply. By Proposition~\ref{prelim:label}, it means that there exists a positive integer $\szc \leq \syo$ such that the $\szc$-th bits in the transformed labels of $\syq$ and $\syr$ are different. In view of Algorithm~\ref{onlyn:algo:main}, this means that for their $(3\szc - 2)$-th executions of \swA{}, one of them executes \swAA{\svG{\svD}}{\svD}{0} while the other one executes \swAA{\svG{\svD}}{\svD}{1}. Hence, in view of Corollary~\ref{onlyn:proof:cor}, there exist a node $\sue$ and a round $\sxi$ such that each good agent that completes its $(3\szc - 2)$-th execution of \swD{}, does it in $\sue$ at round $\sxi$. This means that each good agent that starts its $(3\szc - 2)$-th execution of \swC{}, and thus enters \sta{\swT}, does it in $\sue$ at round $\sxi + 1$. By assumption, these good agents are at least $\szx \geq \svR{\svA}$ and among them, there are $\syq$ and $\syr$. This means that the number of agents {in \sta{\swT}} that $\syq$ (resp. $\syr$) notices during its $(3\szc - 2)$-th execution of \swC{}, and thus $\syw$ (resp. $\syx$) is at least $\szx$.
				
				Let us now give the consequences of the fact that $\syq$ (resp. $\syr$) notices at least $\svT{\syw}{\syy}$ (resp. $\svT{\syx}{\syz}$) agents {in \sta{\swT} while in the same state} during its $(3\syo + \syp)$-th execution of \swC{}. In view of Proposition~\ref{onlyn:proof:figures}, both $\svS{\syw}{\syy}$ and $\svS{\syx}{\syz}$ are at least $\svA$. Let us assume without loss of generality that $\svT{\syx}{\syz}$ is at least $\svT{\syw}{\syy}$. Hence, the sum of the numbers of agents {in \sta{\swT}} noticed by $\syq$ or $\syr$ {while in the same state} during their $(3\syo + \syp)$-th execution of \swC{} is at least $2(\svT{\syw}{\syy})$ \ie at least $\szx + \syw - 2\svS{\syw}{\syy}$. Besides, $\syw - 2\svS{\syw}{\syy}$ is greater than $2\svS{\syw}{\syy} \geq 2\svA$. This means that the total number $q$ of agents {in \sta{\swT}} noticed by $\syq$ or $\syr$ {while in the same state} during their $(3\syo + \syp)$-th executions of \swC{} is greater than $\szx + 2\svA$. However, this is impossible as explained in the next paragraph. 

In view of Proposition~\ref{onlyn:proof:prop}, when $\syq$ or $\syr$ starts its $(3\syo + \syp)$-th execution of \swC{}, each good agent in \sta{\swT} is also starting its $(3\syo + \syp)$-th execution of \swC{}. No good agent can be in \sta{\swT} during its $(3\syo + \syp)$-th execution of \swC{} both in $\sys$ and in $\syt$, which means that among the $q>\szx + 2\svA$ agents noticed by $\syq$ or $\syr$, at most $\szx$ are good. However, in every round, there {are} at most $\svA$ Byzantine agents in $\sys$ or $\syt$. Hence, $q$ cannot be greater than $\szx + 2\svA$: this leads to a contradiction that proves the theorem.
			\end{proof}
			
			\begin{lemma} \label{onlyn:proof:lemma3}
				Let $\sxr$ and $\sxs$ two integers such that $\sxs\in\{1;2\}$. Assume that at least $\svR{\svA}$ good agents start (at possibly different nodes or rounds) their $(3\sxr + 1)$-th executions of \stj{\swA}. If the $(3\sxr + \sxs)$-th execution of \stj{\swC} by at least one good agent returns a couple whose the first element is $\svK$, then for every integer $\sxs < \sug \leq 3$, there exist a round $\sui$ and a node $\suh$ such that every $(3\sxr + \sug)$-th execution of \swC{} by any good agent finishes at round $\sui$ in node $\suh$ and returns a pair whose first element is $\svK$.
			\end{lemma}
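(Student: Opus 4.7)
The plan is to show by induction on $\sug$ (from $\sxs + 1$ up to $3$) that the ``success signal'' $\svK$ propagates forward: starting from the assumption that at least one good agent returns $\svK$ at iteration $3\sxr + \sxs$, all good agents will gather and return $\svK$ at every subsequent iteration through $3\sxr + 3$. Each inductive step reduces to invoking either Corollary~\ref{onlyn:proof:cor} or Lemma~\ref{onlyn:proof:lemma2} for the $(3\sxr + \sug)$-th execution of \swA{}, so that the following \swD{} gathers all good agents at a common node and round, and the subsequent \swC{} returns $\svK$ for all of them.

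For the inductive step, I would split into two cases. In the \emph{mixed case}, at iteration $3\sxr + \sug - 1$ some good agent returned $\svK$ while some returned $\svJ$, so the $(3\sxr + \sug)$-th call to \swA{} sees at least one good agent with $\svf = 0$ and at least one with $\svf = 1$. Corollary~\ref{onlyn:proof:cor} then yields a node $\suh$ and a round at which every good agent finishes its $(3\sxr + \sug)$-th execution of \swD{}. All good agents therefore enter \swC{} at $\suh$ in the same round. Since at least one of them has $\sve \ge 1$ (namely one that returned $\svK$ at $3\sxr + \sug - 1$), it transits to \sta{\swU} and remains there for $3\sxf > 2\sxf$ rounds. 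Every other good agent either transits to \sta{\swU} (and returns $\svK$ by definition), or transits to \sta{\swV} and observes this \sta{\swU} agent at $\suh$ for $2\sxf$ consecutive rounds, hence also returns $\svK$.

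In the \emph{unanimous case}, all good agents return $\svK$ at iteration $3\sxr + \sug - 1$; in particular they all have $\svf = 0$ entering the next \swA{}, but they need not share a common node. By Lemma~\ref{onlyn:proof:1node}, every good agent that entered \sta{\swU} at iteration $3\sxr + \sug - 1$ is at a common node $\sxt$. Any good agent returning $\svK$ from \sta{\swV} at a node $v \ne \sxt$ must have observed an agent in \sta{\swU} at $v$ for $2\sxf$ consecutive rounds, and such an agent must be Byzantine (otherwise Lemma~\ref{onlyn:proof:1node} would place it at $\sxt$). Since only $\svA$ Byzantine agents exist and each occupies a single node per round, at most $\svA$ distinct such nodes $v$ can arise, so the good agents are spread over at most $\svA + 1$ distinct nodes. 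With at least $\svR{\svA} = (5\svA + 1)(\svA + 1) + 1$ good agents, the pigeonhole principle yields a single node hosting at least $5\svA + 2$ good agents, all starting \swA{} there with $\svf = 0$. Lemma~\ref{onlyn:proof:lemma2} then gathers every good agent at a common node and round at the end of \swD{}, and the argument of the mixed case applies verbatim to the subsequent \swC{} call.

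The main obstacle is the distinct-node bound $\svA + 1$ used in the unanimous case: a careful accounting of Byzantine positions across the $2\sxf$-round windows scrutinized by the good agents in \sta{\swV} is needed, ensuring that the $\svA$ Byzantine agents cannot simultaneously sustain \sta{\swU}-pretenses at more than $\svA$ distinct nodes. The alignment of these windows across good agents, guaranteed by Lemma~\ref{onlyn:proof:delay} and Proposition~\ref{onlyn:proof:prop}, is the key technical ingredient that turns this combinatorial bound into a provable fact.
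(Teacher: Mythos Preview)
Your overall strategy---propagate the $\svK$ signal forward step by step, splitting into a mixed case (handled by Corollary~\ref{onlyn:proof:cor}) and a unanimous case (handled by Lemma~\ref{onlyn:proof:lemma2} after a pigeonhole bound on the number of occupied nodes)---matches the paper's approach. The overlapping-windows argument you sketch for the $\svA+1$ node bound is also the paper's: the total span of \sta{\swV} across all good agents is at most $4\sxf-1$ rounds, so any two $2\sxf$-windows share a common round, and at that single round the $\svA$ Byzantines occupy at most $\svA$ nodes.

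There is, however, a genuine gap in your unanimous case. You invoke Lemma~\ref{onlyn:proof:1node} at iteration $3\sxr+\sug-1$, but that lemma's hypothesis requires $\sug-1$ to be the \emph{smallest} index in $\{1,2,3\}$ at which some good agent's \swC{} returned $\svK$. You never check this. The hypothesis matters: if some good agent already returned $\svK$ at an earlier iteration of the same block, then at iteration $3\sxr+\sug-1$ that agent has $\sve\geq 1$ and enters \sta{\swU} unconditionally, regardless of its location---so the single-node conclusion of Lemma~\ref{onlyn:proof:1node} can fail. Concretely, take $\sxs=2$, $\sug=3$, all good agents returning $\svK$ at $3\sxr+2$, and suppose some good agent also returned $\svK$ at $3\sxr+1$: your appeal to Lemma~\ref{onlyn:proof:1node} at $3\sxr+2$ is unjustified.

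The paper closes this gap by refining the case split: instead of mixed vs.\ unanimous at each step, it distinguishes (i) $\sxs=1$; (ii) $\sxs=2$ with no good agent returning $\svK$ at $3\sxr+1$; (iii) $\sxs=2$ with some good agent returning $\svK$ at $3\sxr+1$. Cases (i) and (ii) are exactly the situations where Lemma~\ref{onlyn:proof:1node} applies. Case (iii) is handled by bootstrapping: apply the case-(i) argument with $\sxs=1$ to conclude that all good agents are already gathered at the end of iteration $3\sxr+2$, and then Lemma~\ref{onlyn:proof:lemma2} applies directly at $3\sxr+3$ without any pigeonhole step. Your induction can be repaired the same way: either assume without loss of generality that $\sxs$ is minimal, or note that once the first inductive step succeeds the hypothesis for all later steps already includes common-node gathering, so the delicate part of the unanimous case never recurs.
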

			
			\begin{proof}
				To prove this lemma, it is enough to show that for all integers $\szd$ and $\szz$ such that $\szz\in \{1; 2\}$, if at least $\svR{\svA}$ good agents start (at possibly different nodes or rounds)  their $(3\szd + 1)$-th executions of {\swA{}}, and the $(3\sxr + \szz)$-th execution of \stj{\swC} by at least one good agent $\sze$ returns a couple whose the first element is $\svK$, then we have the following property: there exist a round $\sui$ and a node $\suh$ such every $(3\szd + \szz + 1)$-th execution of \swC{} by any good agent finishes in node $\suh$ at round $\sui$, and returns a couple whose the first element is $\svK$.

				We consider three cases. In the first case, $\szz = 1$. In the second case, $\szz = 2$ and there is no good agent whose $(3\szd + 1)$-th execution of \swC{} returns a couple in which the first element is $\svK$. In the third case, $\szz = 2$ and there is at least one good agent $\sra$ (not necessarily different from $\sze$) whose $(3\szd + 1)$-th execution of \swC{} returns a couple in which the first element is $\svK$.


				Let us consider the first case. We first prove that all the good agents which complete their {$(3\szd + 2)$-th} execution of \swD{} do it at the same round, and in the same node. In view Algorithm~\ref{onlyn:algo:main}, {during the $(3\szd + 2)$-th execution of \swAA{\svG{\svD}}{\svD}{\svf} by agent $\sze$} we have $\svf=\svK$. So, if there exists a good agent that uses $\svJ$ for parameter $\svf$ during its {$(3\szd + 2)$-th} execution of \swAA{\svG{\svD}}{\svD}{\svf}, then in view of Corollary~\ref{onlyn:proof:cor}, all the good agents that complete their {$(3\szd + 2)$-th} executions of \swD{} do it at the same round and in the same node. The situation, in which each good agent that starts its {$(3\szd + 2)$-th} execution of \swAA{\svG{\svD}}{\svD}{\svf} uses $\svf=\svK$, is a little trikier to analyse. In this situation, in view of Algorithm~\ref{onlyn:algo:main}, this means that every {$(3\szd + 1)$-th execution of \swC{}} by any good agent $\srb$ returns a couple whose first element is $\svK$. Thus, during this execution agent $\srb$ either enters \sta{\swU}, or while in \sta{\swV} it notices at least one agent in \sta{\swU} for at least $2\sxf$ consecutive rounds. Moreover, in view of Lemma~\ref{onlyn:proof:delay} and the definitions of values $\swAa{\svD}$ and $\swDd{\svD}$, every good agent that starts its {$(3\szd + 1)$-th} execution of \swC{}, does it between rounds $\sua{3\szd + \szz}$ and $\sua{3\szd + \szz} + \swAa{\svD} + \swDd{\svD} + \svG{\svD} - 1 = \sua{3\szd + \szz} + \sxf - 1$ ($\sxf$ is defined in the description of \swC{}). According to the description of \sta{\swT} (resp. \sta{\swV}), every good agent that enters (resp. exits) \sta{\swV}, does it between rounds $\sua{3\szd  + \szz} + 1$ and $\sua{3\szd + \szz} + \sxf$ (resp. $\sua{3\szd + \szz} + 3\sxf$ and $\sua{3\szd + \szz} + 4\sxf - 1$). Hence, there are at most $4\sxf - 1$ rounds at which at least one good agent is in \sta{\swV} during its {$(3\szd + 1)$-th} execution of \swC{}. This implies there is at least one round $\src$ that overlaps all the intervals of $2\sxf$ rounds noticed by the good agents in \sta{\swV}, and such that in round $\src$ each good agent in \sta{\swV} is in the same node as at least one agent in \sta{\swU}. By Lemma~\ref{onlyn:proof:1node}, there is at most one node where good agents can enter \sta{\swU} during their {$(3\szd + 1)$-th} executions of \swC{}. This implies that there are at most $\svA + 1$ nodes in the graph from which any good agent can exit \sta{\swT} during its {$(3\szd + 1)$-th} execution of \swC{}. Since by assumption at least $\svR{\svA}$ good agents execute their {$(3\szd + 1)$-th} executions of \swC{}, we have the following: there exists at least one node $\srd$ such that at least $5\svA + 2$ good agents are in $\srd$ during their {$(3\szd + 1)$-th} executions of \swC{}  {as well as at the beginning of their $(3\szd + 2)$-th} executions of \swAA{\svG{\svD}}{\svD}{\svf}. Recall that these good agents all use $\svK$ as value for parameter $\svf$ during their {$(3\szd + 2)$-th} executions of \swAA{\svG{\svD}}{\svD}{\svf}. Hence, by Lemma~\ref{onlyn:proof:lemma2} it follows that all the good agents that complete their {$(3\szd + 2)$-th} executions of \swD{} do it at the same round and in the same node.

				So, in the first case, we know that all the good agents that complete their {$(3\szd + 2)$-th} executions of \swD{} do it at the same round and in the same node. Now, we show that these agents all complete at the same round and in the same node their {$(3\szd + 2)$-th} executions of \swC{} that returns a couple whose the first element is $\svK$. In view of the description of \swC, these good agents do not move and spend exactly $3\sxf + 1$ rounds during their {$(3\szd + 2)$-th} executions of \swC. Hence, they enter and exit \sta{\swT} at the same round, and complete their {$(3\szd + 2)$-th} executions of \swC{} at the same round and in the same node. Moreover, since the {$(3\sxr + 1)$-th} execution of \stj{\swC} by agent $\sze$ returns a couple whose the first element is $\svK$, we know that during its {$(3\szd + 2)$-th} of \stj{\swC}, the value of its variable $\sve$ is different from $0$ and it enters \sta{\swU}. Hence, each good agent that enters \sta{\swV} during its {$(3\szd + 2)$-th} execution of \swC{} notices agent $\sze$ in \sta{\swU} during $3\sxf \geq 2\sxf$ consecutive rounds. As a result, there exist a round $\sui$ and a node $\suh$ such that every {$(3\szd + 2)$-th} execution of \swC{} by any good agent finishes in node $\suh$ at round $\sui$, and returns a couple whose the first element is $\svK$.
				
				Using similar arguments to those used in the first case, we can show in the second case that there exist a round and a node in which every $(3\szd + 3)$-th execution of \swC{} by any good agent finishes and returns a couple whose the first element is $\svK$.
				
				Let us now consider the {third case \ie $\szz = 2$ and there is at least one good agent $\sra$ whose $(3\szd + 1)$-th execution of \swC{} returns a couple in which the first element is $\svK$.} Using similar arguments to those used in the first case, we can show that, there exist a round and a node in which every {$(3\szd + 2)$-th} execution of \swC{} by any good agent finishes and returns a couple whose first element is $\svK$. All these good agents start their {$(3\szd + 3)$-th} executions of \swAA{\svG{\svD}}{\svD}{\svf} from the same node with $\svf=\svK$. In view of Lemma~\ref{onlyn:proof:lemma2}, this implies that there exist a round $\sre$ and a node $\srf$ such that every {$(3\szd + 3)$-th} executions of \swD{} by any good agent finishes in node $\srf$ at round $\sre$. Moreover, since every {$(3\szd + 2)$-th} execution of \swC{} of any good agent $\srb$ returns a couple whose the first element is $\svK$, during its {$(3\szd + 3)$-th} execution of \swC{}, agent $\srb$ enters \sta{\swU}: this execution of \swC{} by agent $\srb$ lasts exactly $3\sxf + 1$ rounds during which it does not move from $\srf$. Hence, the {$(3\szd + 3)$-th} execution of \swC{} of agent $\srb$ returns a couple whose the first element is $\svK$ at round $\sre+3\sxf + 1$, which completes the proof.
			\end{proof}

			\begin{lemma} \label{onlyn:proof:safe}
				Assume there is a group $G$ of at least $\svR{\svA}$ good agents executing Algorithm~\swF{} at a round $\sxu$. If at least one agent of $G$ declares that gathering is achieved at round $\sxu$ in a node $\sxt$, then all agents of $G$ declare that gathering is achieved at $\sxu$ in $\sxt$.
			\end{lemma}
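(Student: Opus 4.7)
The plan is to bootstrap from $A$'s lone declaration to the same declaration by every good agent of $G$, via two successive applications of Lemma~\ref{onlyn:proof:lemma3}. First I unpack the semantics of a declaration: since the only line of Algorithm~\swF{} triggering one is line~\ref{alg:declare}, which fires exactly when \swEE{} returns $\svL$, agent $A$ must be executing its \swEE{} in node $\sxt$ at round $\sxu$ during some iteration of the form $\svc=3k+3$ of the while loop (the only iterations in which \swEE{} is called), and the return condition of \swEE{} gives $\sve_A\in\{2,3\}$ together with the fact that strictly more than $\svS{\sxo}{\sxq}$ distinct agents on $\sxt$ at round $\sxu$ transmit ``3'' and \swW{}, where $\sxo$ denotes the number of agents on $\sxt$ at round $\sxu$.

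The first stage is to show that every good agent of $G$ co-locates at $\sxt$ at round $\sxu$ executing its $(3k+3)$-th \swEE{}. A short case analysis on the pattern of $A$'s three \swC{} outcomes in iterations $3k+1,3k+2,3k+3$, combined with Lemma~\ref{onlyn:proof:lemma3} applied with $\sxs=1$ to $A$'s own $(3k+1)$-th \swC{}, rules out the only patterns compatible with $\sve_A\in\{2,3\}$ in which $A$'s $(3k+2)$-th \swC{} would not return $(\svK,\cdot)$; hence $A$'s $(3k+2)$-th \swC{} returns $(\svK,\cdot)$. By Lemma~\ref{onlyn:proof:delay} and the fact that $A$ is several iterations into the loop at round $\sxu$, each of the $|G|\geq\svR{\svA}$ good agents of $G$ has started its $(3k+1)$-th \swA{} by round $\sxu$, so the hypothesis of Lemma~\ref{onlyn:proof:lemma3} is satisfied. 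Invoking it with $\sxs=2$ yields a common round and node at which every good agent finishes its $(3k+3)$-th \swC{}; matching with $A$ identifies this round as $\sxu-1$ and this node as $\sxt$. Because \swEE{} occupies exactly one round without any move, every good agent of $G$ executes its $(3k+3)$-th \swEE{} at $\sxt$ precisely at round $\sxu$.

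The second stage is to show that \swEE{} returns $\svL$ for every good agent of $G$. Since all good agents of $G$ now sit on $\sxt$ at round $\sxu$, we have $\sxo\geq|G|\geq\svR{\svA}$, so Proposition~\ref{onlyn:proof:figures} yields $\svA\leq\svS{\sxo}{\sxq}$. Thus among the strictly more than $\svS{\sxo}{\sxq}$ agents transmitting ``3'' and \swW{} on $\sxt$ at round $\sxu$, at least one, call it $B$, is good, which forces $\sve_B=3$ and therefore $B$'s $(3k+1)$-th \swC{} to return $(\svK,\cdot)$. A second use of Lemma~\ref{onlyn:proof:lemma3}, this time with $\sxs=1$ and $B$, guarantees that the $(3k+2)$-th and $(3k+3)$-th \swC{} of every good agent return $(\svK,\cdot)$, so every good agent has $\sve\in\{2,3\}$ when entering \swEE{}. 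Since all good agents of $G$ share $\sxt$ at round $\sxu$, each of them sees exactly the same collection of agents as $A$ and in particular observes strictly more than $\svS{\sxo}{\sxq}$ distinct agents transmitting ``3'' and \swW{}. Consequently \swEE{} returns $\svL$ for every good agent of $G$, and each of them declares gathering at round $\sxu$ in $\sxt$, as claimed. The main obstacle is the chicken-and-egg bootstrap: Proposition~\ref{onlyn:proof:figures} only identifies a good agent among those transmitting ``3'' once $\sxo\geq\svR{\svA}$, but that inequality is itself a co-location fact. Resolving this forces the two applications of Lemma~\ref{onlyn:proof:lemma3} in the order above --- the weaker $\sxs=2$ (available from $\sve_A\geq 2$ alone) to first bring every good agent of $G$ to $\sxt$, then the stronger $\sxs=1$ (enabled by the newly identified $B$) to push every $\sve$ into $\{2,3\}$.
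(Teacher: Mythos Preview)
Your proof is correct and follows essentially the same approach as the paper's: both unpack the declaration to get $\sve_A\in\{2,3\}$, use Lemma~\ref{onlyn:proof:lemma3} to co-locate all good agents at the same node and round for their $(3k+3)$-th \swC{} and \swE{}, then invoke Proposition~\ref{onlyn:proof:figures} to identify a good agent transmitting $3$ and conclude every good agent has $\sve\in\{2,3\}$. The only cosmetic differences are that you make the two invocations of Lemma~\ref{onlyn:proof:lemma3} (first with $\sxs=2$, then with $\sxs=1$) explicit where the paper bundles the second one into an argument about the maximum value $\sve_1$, and there is a harmless off-by-one in your round indexing (the paper places \swE{} at round $\sxu-1$ and the declaration at $\sxu$, whereas you place \swE{} at $\sxu$); neither affects the logic.
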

			
			\begin{proof}
				 By assumption, there is at least one good agent $\szo$ that declares that gathering is achieved at $\szn$. Let $\szp$ be the value of the variable $\svc$ of agent $\szo$ at round $\szn$. In view of Algorithm~\ref{onlyn:algo:main}, it declares that gathering is achieved after executing \stj{\swE}, and there exists an integer $\suj$ such that $\szp = (3\suj + 3)$.
				
				In view of subroutine \swE{}, since $\szo$ declares gathering achieved at round $\szn$, the value of its variable $\sve$ is either 2 or 3. In view of Algorithm~\ref{onlyn:algo:main}, this means that there are either two or three executions of \swC{}, out of the three since the beginning of the $(3\suj + 1)$-th execution of \swA{} by agent $\szo$, which have returned a couple whose the first element is $\svK$. In view of Lemma~\ref{onlyn:proof:lemma3}, this means that there exist a round $\sac$ and a node $\szq$ such that each agent of $G$ completes at $\sac$ in $\szq$ its $(3\suj + 3)$-th execution of \swC{}, the returned value of which is a couple whose the first element is $\svK$.

Consider the set of the values of variable $\sve$ of every good agent of $G$ at the end of its $(3\suj + 3)$-th execution of \swC{}, and denote by $\szr$ the maximum one. Since at the end of the $(3\suj + 3)$-th execution of \swC{} by agent $\szo$, the variable $\sve$ of $\szo$ is either $2$ or $3$, we know that $\szr\in\{2;3\}$. Lemma~\ref{onlyn:proof:lemma3} implies that at the end of the $(3\suj + 3)$-th execution of \swC{} by every good agent $B$ of $G$ (including $\szo$), the variable $\sve$ of $B$ is either $\szr$ or $\szr - 1$.
				
				Each agent of $G$ starts its {$(\suj + 1)$-th execution} of subroutine \swE{} at $\sac + 1$ in $\szq$. According to the description of this subroutine and Algorithm~\ref{onlyn:algo:main}, agent $\szo$ declares that gathering is achieved at round $\szn$ because at the previous round, while executing \swE{}, it notices strictly more than $\svS{\sxo}{\sxq}$ distinct agents executing the same procedure and transmitting $3$. Thus, the round at which all the agents of $G$ execute \swEE{} in $\szq$ is $\szn - 1$. Since at least $\svR{\svA}$ good agents are in the same node, by Proposition~\ref{onlyn:proof:figures}, we know that at least one good agent transmits $3$ at round $\szn - 1$. In view of the fact that the integer transmitted by any good agent executing \swEE{} is the value of its variable $\sve$, we know that $\szr$ is 3 and the value of variable $\sve$ of each agent of $G$ is either $2$ or $3$. This means the execution of \swEE{} of every good agent of $G$ returns \svL{} at round $\szn - 1$ in $\szq$, and every good agent of $G$ declares that gathering is achieved at round $\szn$ with agent $\szo$, which completes this proof.
			\end{proof}
			

We are now ready to prove the final result of this section. Recall that a strong team is a team in which the number of good agents is at least $5f^2+6f+2$. As the reader would have noticed, a good agent can execute several iterations of the while loop of Algorithm~\ref{onlyn:algo:main} (cf. lines~\ref{alg:deb} to~\ref{alg:fin}): given a good agent $A$,  we will say that the $i$-th iteration of this while loop by agent $A$ is of \emph{order $i$}.

			
			\begin{theorem} \label{onlyn:proof:live}
Assuming that $\svH=\svN$, Algorithm~\swF{} solves $\svA$-Byzantine gathering with every strong team in all graph of size at most $\svC$, and has a time complexity that is polynomial in $\svC$ and $|\svB|$.
			\end{theorem}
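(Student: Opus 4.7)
The safety half of the statement---that whenever a good agent declares gathering, every good agent does so simultaneously in the same node---is already guaranteed by Lemma~\ref{onlyn:proof:safe}. What remains is (i) liveness: some good agent eventually reaches line~\ref{alg:declare}; and (ii) complexity: this happens within $\mathrm{poly}(n,|\svB|)$ rounds.

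For liveness, let $B$ be the good agent with label $\ell_B = \svB$ and set $c = |\svB|$, so that $B$'s transformed label has length $\svb_B = 4c+8$. Any other good agent $X$ satisfies $\ell_X > \ell_B$, which forces $|\ell_X| \geq c$, and Proposition~\ref{prelim:label} then supplies an index $j$ with $2c+4 < j \leq 4c+8$ at which the $j$-th bits of $\svO{\ell_B}$ and $\svO{\ell_X}$ disagree. Pick such a pair $(X,j)$ and set $\svc^{\star} = 3j - 2$. Since $\svc^{\star} \bmod 3 = 1$, at iteration $\svc^{\star}$ of the while loop the value of $\svf$ used by any good agent $A$ is exactly the $j$-th bit of $\svO{\ell_A}$, so $B$ and $X$ supply distinct values of $\svf$ to \swA. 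Because the strong team contains at least $\svR{\svA}$ good agents, Corollary~\ref{onlyn:proof:cor} then produces a common node $u$ and a common round at which every good agent finishes the $\svc^{\star}$-th execution of \swD.

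Consequently, every good agent enters \sta{\swT} of \swC{} in node $u$ at essentially the same round. Since $j \geq 2c+5$, we have $2\svc^{\star} > 3\svb_B$; moreover, the value $\sxm$ computed by $B$ is at least the number of good agents in $u$, which is at least $\svR{\svA}$, and a direct application of Proposition~\ref{onlyn:proof:figures} shows that the guard $\sxm \geq \svT{\svm}{\sxn}$ also holds for $B$, which therefore transits to \sta{\swU}. Every good agent sitting in \sta{\swV} then observes $B$ in \sta{\swU} throughout its $3\sxf$-round waiting period, so \swC{} returns a pair with first entry $\svK$ for every good agent and $\sve$ becomes $1$. Iterations $\svc^{\star}+1$ and $\svc^{\star}+2$ are next executed with $\svf = \svK = 0$ by every good agent; since at least $\svR{\svA} \geq 4\svA+2$ good agents are co-located at the start of each such \swA, Lemma~\ref{onlyn:proof:lemma2} guarantees that they finish the matching \swD{} together, and the condition $\sve \neq 0$ forces each of them to transit to \sta{\swU} in the following \swC, so $\sve$ climbs to $2$ and then to $3$. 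At the end of iteration $\svc^{\star}+2$ every good agent executes \swE{} in the common node with $\sve = 3$; Proposition~\ref{onlyn:proof:figures} certifies that strictly more than $\svS{\sxo}{\sxq}$ distinct agents (including the $\geq \svR{\svA}$ good agents present) transmit~$3$, so \swE{} returns \svL{} and gathering is declared.

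For the complexity, $\svc^{\star}+2 \leq 3\svb_B = 12|\svB|+24$, and Lemma~\ref{onlyn:proof:delay} together with Theorems~\ref{Block:theo:Principal-2} and~\ref{block2:proof:theo} bound the duration of a single iteration by a polynomial in $N$, which is polynomial in $n$ because $\svH = \svN$ forces $N \leq n^{2}$. Adding the initial $\svG{\svD}$ rounds of \svF{\svD} yields a total round count polynomial in $n$ and $|\svB|$. The hard part is the ``$\sve$-cascade'' of the previous paragraph: one must make sure that the Byzantine agents can neither disrupt the three consecutive transitions $\sve : 0 \to 1 \to 2 \to 3$ around iteration $\svc^{\star}$ nor trigger an erroneous declaration elsewhere in the graph. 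This is precisely what the thresholds $4\svA+2$ of \swD{} and $\svS{\cdot}{\cdot}$ of \swE{} achieve, as formalized by Lemmas~\ref{onlyn:proof:lemma2}, \ref{onlyn:proof:1node} and~\ref{onlyn:proof:lemma3} for the cascade and by the safety Lemma~\ref{onlyn:proof:safe} for the no-false-alarm property.
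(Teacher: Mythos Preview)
Your proof follows the paper's approach, and the central liveness argument---using an index $j$ in the second half of $\svO{\svB}$ to force a full gathering, then cascading $\sve$ from $0$ to $3$---is correct. However, there is one genuine gap and two minor issues.

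\textbf{The gap.} You paraphrase Lemma~\ref{onlyn:proof:safe} as guaranteeing that ``whenever a good agent declares gathering, every good agent does so simultaneously in the same node.'' This overstates the lemma: its hypothesis requires that at least $\svR{\svA}$ good agents are \emph{still executing} \swF{} at the round of declaration. Your liveness argument then proceeds under the tacit assumption that every good agent actually reaches iteration $\svc^\star+2$, but nothing rules out a good agent declaring earlier (Byzantine agents can drive some good agent's $\sve$ to $3$ and trigger a valid \swE{} check well before iteration $\svc^\star$), in which case not all good agents are present at iteration $\svc^\star$ and your application of Corollary~\ref{onlyn:proof:cor} there is unjustified. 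The paper closes this by defining $r$ as the \emph{first} round at which any good agent finishes, observing that all good agents are still executing at $r$ so that Lemma~\ref{onlyn:proof:safe} applies, and then doing a case split: either the first finisher declared (done), or it exhausted its while loop, in which case every good agent completed at least $3\svb_B \geq \svc^\star+2$ iterations without anyone declaring, and your argument at iteration $\svc^\star+2$ yields a contradiction. You need this structure (or an equivalent one) to connect your liveness argument to Lemma~\ref{onlyn:proof:safe} soundly.

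\textbf{Minor issues.} First, Lemma~\ref{onlyn:proof:lemma2} requires $5\svA+2$ co-located good agents, not $4\svA+2$; since $\svR{\svA} \geq 5\svA+2$ this is only a typo. Second, your ``direct application of Proposition~\ref{onlyn:proof:figures}'' to obtain $\sxm \geq \svm - \svS{\svm}{\sxn}$ silently uses $\svm \leq g+\svA$ (with $g$ the number of good agents); this bound does hold, because $\svd$ is only ever updated to $\max(\svd,\sxm)$ and $\sxm \leq g+\svA$ in every round, but it deserves a sentence. Interestingly, the paper invokes the \emph{first-half} index $i$ of Proposition~\ref{prelim:label} precisely to pin down these bounds on $\svd$ via an earlier guaranteed gathering; your argument shows that this extra step is in fact dispensable, which is a nice simplification over the paper's proof.
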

	
			\begin{proof}
				Let $\srh$ be the first round in which a good agent finishes the execution of Algorithm~\swF{}. Since, the adversary wakes up at least one good agent, we know that round $r$ exists. Since $\svH=\svN$, we know that $\svD=2^{(2^{\svH})}$ is at least $\svC$, and thus according to line~\ref{alg:explo} of Algorithm~\ref{onlyn:algo:main}, all the good agents are executing Algorithm~\swF{} at round $\srh$. As a result, in view of Lemma~\ref{onlyn:proof:safe}, we just have to prove the following two properties to state that the theorem holds. The first property is that there exists at least one good agent that declares gathering is achieved at round $\srh$ (note that although we will show in the sequel that it is impossible, we cannot rule out for now the possibility that an agent might finish the execution of Algorithm~\swF{} without declaring gathering is achieved). The second property is that at round $\srh$, the first woken-up agent (or one of the first, if there are several such agents) has spent a time that is at most polynomial in $\svC$ and $|\svB|$ to execute Algorithm~\swF{}.

				Let us first focus on the first property and consider the good agent $\szs$ with the smallest label $\svB$. Let $\alpha=3\svP{\svO{\svB}}$. In view of Algorithm~\ref{onlyn:algo:main}, each good agent executes at least $\alpha$ iterations of the while loop of Algorithm~\ref{onlyn:algo:main}, unless it declares that gathering is achieved before. We consider two cases: either there is at least one good agent $\srg$ that never starts executing its $\alpha$-th iteration of the while loop, or every good agent start executing at some point its $\alpha$-th iteration of the while loop.

Concerning the first case, assume without loss of generality that $\srg$ is the first agent that stops executing Algorithm $\texttt{GATHER}$ before starting its $\alpha$-th iteration of the while loop. According to Lemma~\ref{onlyn:proof:delay}, the time spent executing an iteration is the same regardless of the executing good agent and the order of the iteration, and this time is greater than the difference between the rounds at which any two good agents start iterations of the same order. Hence, when agent $\srg$ stops executing Algorithm $\texttt{GATHER}$, no good agent has completed its $\alpha$-th iteration of the while loop. This implies that agent $B$ finishes its execution of Algorithm $\texttt{GATHER}$ at round $r$. Moreover, the fact that $B$ stops executing Algorithm $\texttt{GATHER}$ before starting its $\alpha$-th iteration of the while loop, implies that $B$ declares the gathering is achieved at round $r$: this proves that the first property holds in the first case.

Let us move on to the second case. In view of Proposition~\ref{prelim:label}, for any given good agent $\szt$ different from $A$, there exist two positive integers $\szu$ and $\szv$ such that $2\szu \leq \svP{\svO{\svB}}$, $\svP{\svO{\svB}} < 2\szv \leq 2\svP{\svO{\svB}}$ and the $\szu$-th (resp. $\szv$-th) bits in the transformed labels of $\szs$ and $\szt$ are different. Hence, at round $\srh$, each good agent has at least started executing its $\alpha$-th iteration of the while loop, and thus has completed its $(3\szu - 2)$-th iteration and at least started its $(3\szv - 2)$-th iteration. 

Moreover, in view of Algorithm~\ref{onlyn:algo:main}, for its $(3\szu - 2)$-th (resp. $(3\szv - 2)$-th) execution of \swAA{\svG{\svD}}{\svD}{\svf}, agent $\szs$ uses for parameter $\svf$ a value belonging to $\{0;1\}$ that is different of that used by agent $\szt$ (which also belongs to $\{0;1\}$) during its $(3\szu - 2)$-th (resp. $(3\szv - 2)$-th) execution of \swAA{\svG{\svD}}{\svD}{\svf}. By Corollary~\ref{onlyn:proof:cor}, there exist a round $\szi$ and a node $\szh$ (resp. $\szj$ and $\szk$) such that each good agent completes its $(3\szu - 2)$-th (resp. $(3\szv - 2)$-th) execution of \swD{} at $\szi$ in $\szh$ (resp. at $\szj$ in $\szk$). At round $\szi + 1$, each good agent enters \sta{\swT} in node $\szh$. Thus, at this point the value of variable $\svd$ of each good agent is at least the number of good agents and at most the total number of agents. Since there are at least $\svR{\svA}$ good agents, in view of Proposition~\ref{onlyn:proof:figures}, $\svS{\svd}{\sxn}$ is at least $\svA$, and $\svT{\svd}{\sxn}$ is at most the number of good agents.
				
				Furthermore, the length of the transformed label of $\szs$ is $\svP{\svO{\svB}}$. This means that during its $(3\szv - 2)$-th execution of \swC{}, at round $\szj + 2$, while all good agents are in $\szk$, agent $\szs$ enters \sta{\swU}. At the same round, every other good agent is also in $\szk$ entering either \sta{\swU} or \sta{\swV}. Whichever the state, they spend $3\sxf$ rounds in it so that all the good agents in \sta{\swV} notice agent $\szs$ in \sta{\swU} during at least $2\sxf$ rounds. As a result, every good agent finishes its $(3\szv - 2)$-th execution of \swC{} that returns a pair whose first element is $\svK$. From Lemma~\ref{onlyn:proof:lemma3}, it follows that there exist a round $\szl$ and a node $\szm$ such that each good agent completes its $3\szv$-th execution of \stj{\swC} at $\szl$ in $\szm$, and the value of variable $\sve$ of each good agent at round $\szl$ is 3. From round $\szl+1$ on, each good agent starts its $\szv$-th execution of \swE{}. When executing this procedure, each of them transmits the word \swW{} and the value 3 of its variable $\sve$. In view of Proposition~\ref{onlyn:proof:figures}, there are strictly more than $\svS{\sxo}{\sxq}$ good agents. Hence, agent $\szs$ as well as all good agents return \svL{}, and thus declare that gathering is achieved at round $r=\szl + 2$ in node $\szm$ which proves that the first property holds in the second case as well.
				
				We now show the second property. According to the two cases analyzed above, the good agents declare that the gathering is achieved at round $r$ before any of them starts its iteration of the while loop of order $\alpha+1$: the value $\alpha$ is polynomial in $\svP{\svB}$ since $\alpha=3\svP{\svO{\svB}}$ and $\svP{\svO{\svB}} = 4\svP{\svB} + 8$. Besides, the number of rounds required to execute any iteration of the while loop is bounded by $4(\svG{\svD} + \swAa{\svD} + \swDd{\svD} + 1)$ in view of Lemma~\ref{onlyn:proof:delay}. Note that in view of the definitions of $\svG{\svD}$, $\swAa{\svD}$ and $\swDd{\svD}$, $4(\svG{\svD} + \swAa{\svD} + \swDd{\svD} + 1)$ is polynomial in $N$, and thus in $n$ as $N=2^{(2^{\lceil \log \log n \rceil})}$ (cf. line~\ref{onlyn:line:1} of Algorithm~\ref{onlyn:algo:main}). Hence, the total number of rounds spent by any good agent before round $\srh$ is bounded by $12(4\svP{\svB} + 8)(\svG{\svD} + \swAa{\svD} + \swDd{\svD} + 1)$, which is polynomial in $\svC$ and $\svP{\svB}$. This concludes the proof of the second property, and by extension, of the theorem. 

			\end{proof}

	\section{The negative result}
\label{sec:gen}

Algorithm  {\tt GATHER} introduced in the previous section uses the value $\lceil \log \log n \rceil$ as global knowledge, which can be coded with a binary string of size $\mathcal{O}(\log \log \log n)$.
In this section, we show that, to solve Byzantine gathering with strong teams, in all graph of size at most $n$, in a time polynomial in $n$ and $|l_{min}|$, the order of magnitude of the size of knowledge used by our algorithm {\tt GATHER} is optimal. More precisely, we have the following theorem.






\begin{theorem}
There is no algorithm solving $f$-Byzantine gathering with strong teams for all $f$ and in all graphs of size at most $n$, which is polynomial in $n$ and $|l_{min}|$ and which uses a global knowledge of size $o(\log \log \log n)$.
\end{theorem}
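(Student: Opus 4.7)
The strategy is a pigeonhole-plus-indistinguishability argument. Assume for contradiction that there is an algorithm $\mathcal{A}$ using advice $\mathcal{GK}(n)$ of size $o(\log\log\log n)$ and running in time at most $\Pi(n,|l_{min}|)$ for some polynomial $\Pi$. I plan to exhibit two legal instances sharing the same advice value on which $\mathcal{A}$ cannot simultaneously be correct.

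The first ingredient is a pigeonhole on advice values. I will define a rapidly growing sequence by fixing $m_0$ large and iterating $m_{i+1}=Q(m_i)$, where $Q(n)=n+\Pi(n,1)+3$ is the polynomial that will bound my witness graph's size. Iteration yields $\log\log\log m_K=\Theta(\log K)$, and because $|\mathcal{GK}(m_i)|=o(\log\log\log m_i)$ the set $\{\mathcal{GK}(m_0),\dots,\mathcal{GK}(m_K)\}$ contains at most $2^{o(\log K)}=K^{o(1)}$ distinct values, strictly fewer than $K+1$ for $K$ large enough. The pigeonhole principle then produces $i<j\leq K$ with $\mathcal{GK}(m_i)=\mathcal{GK}(m_j)$; I set $n_1=m_i$, $n_2=m_j\geq Q(n_1)$, and write $g^{*}$ for the common value. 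Consequently $\mathcal{A}$ with advice $g^{*}$ must correctly solve gathering within $\Pi(n_1,1)$ rounds on every graph of size at most $n_1$ and within $\Pi(n_2,1)$ rounds on every graph of size at most $n_2$.

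Next I will construct a reference instance $I_1$ and a witness instance $I_2$. For $I_1$, pick any $f$ admitted by the strong-team condition (e.g.\ $f=0$, which only requires $\geq 2$ good agents), fix an arbitrary graph $G_1$ of size at most $n_1$, use consecutive small labels so that $|l_{min}|=1$, place the good and Byzantine agents arbitrarily subject to the strong-team hypothesis, and wake all agents in round $0$. By the assumed correctness of $\mathcal{A}$ with advice $g^{*}$ on $n_1$-bounded graphs, there is a round $T_1\leq\Pi(n_1,1)$ at which every good agent of $I_1$ is simultaneously at some common node $v_1$, declares termination, and stops. I will then build $I_2$ by extending $G_1$ with a simple path of length $L=T_1+1$ attached at an arbitrary node $u_G$ of $G_1$, whose other endpoint hosts a single additional good agent $E$ with a fresh label larger than all existing ones (so $|l_{min}|$ remains $1$). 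Inside the $G_1$-part the originals and the $f$ Byzantine agents keep their $I_1$ positions; the adversary instructs the Byzantine agents to reproduce exactly their $I_1$ behaviour during rounds $0,\dots,T_1$, wakes the originals in round $0$, and keeps $E$ dormant until round $T_1+1$. Since the graph has at most $n_1+L+1\leq Q(n_1)\leq n_2$ nodes, the team is still strong, and the Byzantine count is still $f$, the instance $I_2$ is legal.

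The final step is the indistinguishability that drives the contradiction. Because $L>T_1$ and $E$ is dormant during rounds $0,\dots,T_1$, no original good agent can reach $E$ or even the other endpoint of the connecting path within that time window, and no activity outside $G_1$ can affect the originals. Hence every original's round-by-round local view in $I_2$ (its current degree, the port it enters by, and the labels and messages it observes at its current node) exactly matches its view in $I_1$; determinism of $\mathcal{A}$ forces the originals to replay their $I_1$ execution and jointly declare termination at $v_1$ at round $T_1$, then stop. But $E$ lies at distance $>T_1$ from $v_1$, so it is not at $v_1$ at round $T_1$, and since the originals are frozen from round $T_1$ onward, no subsequent round can have all good agents simultaneously at the same node simultaneously declaring termination. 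This violates the gathering specification on the legal instance $I_2$, contradicting the correctness of $\mathcal{A}$. The delicate step is the pigeonhole bookkeeping—guaranteeing that the extreme slow growth of $|\mathcal{GK}(n)|$ really forces a repeat across a super-polynomial gap in $n$—after which the construction of $I_1,I_2$ and the indistinguishability analysis are routine, needing only that the connecting path be strictly longer than $T_1$.
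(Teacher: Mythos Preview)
Your indistinguishability step has a fatal gap. When you attach a path at a node $u_G\in G_1$, the degree of $u_G$ increases and $u_G$ acquires a new port. In this model every agent learns the degree of the node it currently occupies; hence any good agent that visits $u_G$ during rounds $0,\dots,T_1$ in $I_1$ will, in $I_2$, observe a strictly larger degree (and may take the new port), so from that round on its local view no longer matches its view in $I_1$. You give no argument that no good agent reaches $u_G$ by round $T_1$, and in general you cannot: any algorithm that explores to wake dormant agents will sweep all of $G_1$ well within $T_1\le\Pi(n_1,1)$ rounds. So the claim ``no activity outside $G_1$ can affect the originals'' is false---the attachment itself sits at a node of $G_1$ and is observable there. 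There is also a modeling slip in your pigeonhole: you write $\mathcal{GK}(m_i)$ as if the advice depends only on the size bound, but here the oracle sees the full \emph{instance} and may assign different advice to different instances of the same size; the pigeonhole must be over a fixed family of instances, not over sizes.

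The paper's proof avoids both problems by a quite different construction. It builds a sequence of oriented rings (so every node has the same degree and port layout---there is no attachment point to detect), and in each larger instance it surrounds the distinguished good agent labeled $0$ with enough Byzantine agents to \emph{simulate}, round by round, exactly what that agent would see in the smaller instance; the remaining good agents are placed at distance exceeding the smaller instance's running time, so agent $0$ meets only Byzantines during the relevant window and is fooled into declaring termination. The sequence is calibrated so that $\Theta(\log\log n)$ of these instances have size at most $n$, and since any two of them must receive distinct advice, some instance needs advice of size $\Omega(\log\log\log n)$.
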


\begin{proof}
Suppose by contradiction that the theorem is false. Hence, there exists an algorithm {\tt Alg} that solves $f$-Byzantine gathering with strong teams for all $f$ in all graphs of size at most $n$,
which is polynomial in $n$ and $|l_{min}|$ and which uses a global knowledge of size $o(\log \log \log n)$. The proof relies on the construction of a family $\mathcal{F}_{n}$ (for any $n\geq4$) of initial instances with strong teams such that for each of them the graph size is at most $n$. Our goal is to prove that there is an instance from $\mathcal{F}_{n}$ for which algorithm {\tt Alg} needs a global knowledge whose size does not belong to $o(\log \log \log n)$, which would be a contradiction with the definition of {\tt Alg}. Let us first present the construction of an infinite sequence of instances $\mathcal{I}=I_0,I_1,I_2,\ldots,I_i,\ldots$ by induction on $i$. Instance $I_0$ consists of an oriented ring of $4$ nodes (i.e., a ring in which at each node the edge going clockwise has port number $0$ and the edge going anti-clockwise has port $1$). In this ring, there is no Byzantine agent but there are two good agents labeled $0$ and $1$ that are placed in diametrically opposed nodes. All the agents in $I_0$ wake up at the same time.

Now let us describe the construction of instance $I_i$ with $i\geq1$ using some features of instance $I_{i-1}$. Let $c$ be the smallest constant integer such that the time complexity of algorithm {\tt Alg} is at most $n^c$ from every instance made of a graph of size at most $n$ with a strong team in which $|l_{min}|=1$. Let $\mu_{i-1}$ and $n_{i-1}$ be respectively the total number of agents in $I_{i-1}$ and the number of nodes in the graph of $I_{i-1}$. Instance $I_i$ consists of an oriented ring of $(n_{i-1})^{4c}$ nodes. In this ring an agent labeled $0$ is placed on a node denoted by $v_0$. In each of both nodes that are adjacent to $v_0$, $(n_{i-1})^c*\mu_{i-1}$ Byzantine agents are placed (which gives a total of $2*(n_{i-1})^c*\mu_{i-1}$ Byzantine agents). On the node that is diametrically opposed to $v_0$, enough good agents are placed in order to have a strong team. The way of assigning labels to all agents that are not at $v_0$ is arbitrary but respects the condition that initially no two agents share the same label. Finally, all the agents in $I_{i}$ wake up at the same time. This closes the description of the construction of $\mathcal{I}$, for which we have the following claim.

\noindent \textbf{Claim 1} For any two instances $I_j$ and $I_{j'}$ of $\mathcal{I}$, algorithm {\tt Alg} requires a distinct global knowledge.

\noindent \textbf{Proof of Claim 1}

Assume by contradiction that the claim does not hold for two instances $I_j$ and $I_{j'}$ such that $j<j'$. Consider any execution $EX_j$ of algorithm {\tt Alg} from $I_j$. According to the construction of $\mathcal{I}$, we know that every agent is woken up at the first round of $EX_j$. We denote by $r_1,r_2,\ldots,r_k$ the sequence of consecutive rounds from the first round of $EX_j$ to the round when all good agents declare that gathering is done. We also denote by $G_i$ the group of agents (possibly empty) that are with the good agent labeled $0$ at round $r_i$ of $EX_j$. Now, using execution $EX_j$, let us describe a possible execution $EX_{j'}$ of algorithm {\tt Alg} from $I_{j'}$: this execution is designed in such a way that it will fool the good agent labeled $0$ and will induce it into premature termination. According to the construction of $\mathcal{I}$, all the agents of $I_{j'}$ are woken up in the first round of $I_{j'}$ and all the good ones are executing algorithm {\tt Alg}. In the first round of $EX_{j'}$ the agent labeled $0$ is alone (as in the first round of $EX_j$). Then, for each $i\in{2,\ldots,k}$, the good agent labeled $0$ in $EX_{j'}$ meets a group of $|G_i|$ Byzantine agents whose the multiset of labels is exactly the same as the multiset of labels belonging to the agents of $G_i$ in the $i$th round of $EX_j$. This is always possible in view of the fact that for each $i\in{1,\ldots,k}$, $|G_i|\leq\mu_j$ and the Byzantine agents of $I_{j'}$ can choose to move by ensuring that in the $i$th round of $EX_{j'}$ it remains at least $(k-i)*\mu_j$ Byzantine agents in the node adjacent to the one occupied by the agent labeled $0$ in the clockwise direction (resp. anti-clockwise direction): indeed according to the construction of $I_{j'}$, in each of both nodes adjacent to the starting node of the good agent labeled $0$, there are initially  $(n_{j'-1})^c*\mu_{j'-1}\geq k*\mu_{j'-1}$ Byzantine agents, as $k\leq (n_{j})^c \leq (n_{j'-1})^c$. Finally, if algorithm {\tt Alg} prescribes some message exchange between agents during their meetings, then the Byzantine agents in execution $EX_{j'}$ give exactly the same information to $0$, as the agents with respective labels in execution $EX_j$. Hence, from the point of view of agent $0$, the first $k$ rounds of $EX_j$ look exactly identical to the first $k$ rounds of $EX_{j'}$. This is due to the actions of Byzantine agents, the fact that all nodes in $I_j$ and $I_{j'}$ look identical, and also because $k\leq (n_{j})^c$ which implies that, regardless of the algorithm {\tt Alg}, the agent labeled $0$ cannot meet any good agent in the first $k$ rounds of $EX_{j'}$ as the distance between agent $0$ and any other good agent is initially at least $\frac{(n_{j'-1})^{4c}}{2}\geq\frac{(n_{j})^{4c}}{2}$. 
Therefore, in the $k$th round of execution $EX_{j'}$, the good agent labeled $0$ declares having met all good agents and stops, which is incorrect, since it has not met any good agent. This contradicts the definition of algorithm {\tt Alg} and closes the proof of this claim.

Now, consider the largest $x$ such that in each of the $x+1$ first instances $I_0,I_1,\ldots,I_x$ of $\mathcal{I}$, the graph size is at most $n$: these $x+1$ instances constitute family $\mathcal{F}_{n}$. In view of the construction of sequence $\mathcal{I}$ and the definition of $x$, we have $4^{((4c)^x)}\leq n<4^{((4c)^{x+1})}$. Hence, $x$ belongs to $\Omega(\log \log n)$. However, according to Claim~1, the global knowledge given to distinct instances in this family must be different. Hence, there is at least one instance of $\mathcal{F}_{n}$ for which algorithm {\tt Alg} uses a global knowledge of size $\Omega(\log x)$: since $x\in\Omega(\log \log n)$, we have $\Omega(\log x)\in\Omega(\log \log \log n)$. This contradicts the fact that {\tt Alg} uses a global knowledge of size $o(\log \log \log n)$ and proves the theorem.
\end{proof}

	\section{Conclusion}
\label{sec:ccl}

In this paper, we designed the first polynomial algorithm w.r.t $n$ and $|l_{min}|$ allowing to gather all good agents in presence of Byzantine ones that can act in an unpredictable way and lie about their labels. Our algorithm works under the assumption that the team evolving in the network is strong i.e., the number of good agents is roughly at least quadratic in the number $f$ of Byzantine agents. The required global knowledge $\mathcal{GK}$ is of size $\mathcal{O}(\log \log \log n)$, which is of optimal order of magnitude to get a time complexity that is polynomial in $n$ and $|l_{min}|$ even with strong teams.

A natural open question that immediately comes to mind is to ask if we can do the same by reducing the ratio between the good agents and the Byzantine agents. For example, could it be still possible to solve the problem in polynomial time with a global knowledge of size $\mathcal{O}(\log \log \log n)$ if the number of good agents is at most $o(f^2)$?
Note that the answer to this question may be negative but then may become positive with a little bit more global knowledge. Actually, we can even easily show that the answer is true if the agents are initially given a complete map of the graph with all port numbers, and in which each node $v$ is associated to the list of all labels of the good agents initially occupying node $v$. However, the size of $\mathcal{GK}$ is {then huge} as it belongs to $\Omega(n^2)$. In fact, in this case what is really interesting is to find the optimal size for $\mathcal{GK}$. This observation allows us to conclude with the following open problem that is more general and appealing. 

{\em What are the trade-offs among the ratio good/Byzantine agents, the time complexity and the amount of global knowledge to solve $f$-Byzantine gathering?} 

Bringing an exhaustive and complete answer to this question appears to be really challenging but would turn out to be a major step in our understanding of the problem.

	\bibliographystyle{plain}
	\bibliography{./Commun/main}

\begin{thebibliography}{10}

\bibitem{AbiteboulKM01}
Serge Abiteboul, Haim Kaplan, and Tova Milo.
\newblock Compact labeling schemes for ancestor queries.
\newblock In {\em Proceedings of the Twelfth Annual Symposium on Discrete
  Algorithms, January 7-9, 2001, Washington, DC, {USA.}}, pages 547--556, 2001.

\bibitem{AgmonP06}
Noa Agmon and David Peleg.
\newblock Fault-tolerant gathering algorithms for autonomous mobile robots.
\newblock {\em {SIAM} J. Comput.}, 36(1):56--82, 2006.

\bibitem{Alpern02}
Steve Alpern.
\newblock Rendezvous search: {A} personal perspective.
\newblock {\em Operations Research}, 50(5):772--795, 2002.

\bibitem{Alpern03}
Steve Alpern.
\newblock {\em The theory of search games and rendezvous}.
\newblock International Series in Operations Research and Management Science,
  Kluwer Academic Publishers, 2003.

\bibitem{BampasCGIL10}
Evangelos Bampas, Jurek Czyzowicz, Leszek Gasieniec, David Ilcinkas, and Arnaud
  Labourel.
\newblock Almost optimal asynchronous rendezvous in infinite multidimensional
  grids.
\newblock In {\em Distributed Computing, 24th International Symposium, {DISC}
  2010, Cambridge, MA, USA, September 13-15, 2010. Proceedings}, pages
  297--311, 2010.

\bibitem{BarborakM93}
Michael Barborak and Miroslaw Malek.
\newblock The consensus problem in fault-tolerant computing.
\newblock {\em {ACM} Comput. Surv.}, 25(2):171--220, 1993.

\bibitem{BouchardDD16}
S{\'{e}}bastien Bouchard, Yoann Dieudonn{\'{e}}, and Bertrand Ducourthial.
\newblock Byzantine gathering in networks.
\newblock {\em Distributed Computing}, 29(6):435--457, 2016.

\bibitem{CieliebakFPS12}
Mark Cieliebak, Paola Flocchini, Giuseppe Prencipe, and Nicola Santoro.
\newblock Distributed computing by mobile robots: Gathering.
\newblock {\em {SIAM} J. Comput.}, 41(4):829--879, 2012.

\bibitem{CohenFIKP08}
Reuven Cohen, Pierre Fraigniaud, David Ilcinkas, Amos Korman, and David Peleg.
\newblock Label-guided graph exploration by a finite automaton.
\newblock {\em {ACM} Trans. Algorithms}, 4(4):42:1--42:18, 2008.

\bibitem{CollinsCGL10}
Andrew Collins, Jurek Czyzowicz, Leszek Gasieniec, and Arnaud Labourel.
\newblock Tell me where {I} am so {I} can meet you sooner.
\newblock In {\em Automata, Languages and Programming, 37th International
  Colloquium, {ICALP} 2010, Bordeaux, France, July 6-10, 2010, Proceedings,
  Part {II}}, pages 502--514, 2010.

\bibitem{CzyzowiczGKKNOS16}
Jurek Czyzowicz, Konstantinos Georgiou, Evangelos Kranakis, Danny Krizanc, Lata
  Narayanan, Jaroslav Opatrny, and Sunil~M. Shende.
\newblock Search on a line by byzantine robots.
\newblock In {\em 27th International Symposium on Algorithms and Computation,
  {ISAAC} 2016, December 12-14, 2016, Sydney, Australia}, pages 27:1--27:12,
  2016.

\bibitem{CzyzowiczKP12}
Jurek Czyzowicz, Adrian Kosowski, and Andrzej Pelc.
\newblock How to meet when you forget: log-space rendezvous in arbitrary
  graphs.
\newblock {\em Distributed Computing}, 25(2):165--178, 2012.

\bibitem{CzyzowiczPL12}
Jurek Czyzowicz, Andrzej Pelc, and Arnaud Labourel.
\newblock How to meet asynchronously (almost) everywhere.
\newblock {\em {ACM} Transactions on Algorithms}, 8(4):37, 2012.

\bibitem{DKU14}
Shantanu Das, Dariusz Dereniowski, Adrian Kosowski, and Przemyslaw Uznanski.
\newblock Rendezvous of distance-aware mobile agents in unknown graphs.
\newblock In {\em Structural Information and Communication Complexity - 21st
  International Colloquium, {SIROCCO} 2014, Takayama, Japan, July 23-25, 2014.
  Proceedings}, pages 295--310, 2014.

\bibitem{DefagoGMP06}
Xavier D{\'{e}}fago, Maria Gradinariu, St{\'{e}}phane Messika, and
  Philippe~Raipin Parv{\'{e}}dy.
\newblock Fault-tolerant and self-stabilizing mobile robots gathering.
\newblock In {\em Distributed Computing, 20th International Symposium, {DISC}
  2006, Stockholm, Sweden, September 18-20, 2006, Proceedings}, pages 46--60,
  2006.

\bibitem{DessmarkFKP06}
Anders Dessmark, Pierre Fraigniaud, Dariusz~R. Kowalski, and Andrzej Pelc.
\newblock Deterministic rendezvous in graphs.
\newblock {\em Algorithmica}, 46(1):69--96, 2006.

\bibitem{DieudonnePP14}
Yoann Dieudonn{\'{e}}, Andrzej Pelc, and David Peleg.
\newblock Gathering despite mischief.
\newblock {\em {ACM} Transactions on Algorithms}, 11(1):1, 2014.

\bibitem{DieudonnePV15}
Yoann Dieudonn{\'{e}}, Andrzej Pelc, and Vincent Villain.
\newblock How to meet asynchronously at polynomial cost.
\newblock {\em SIAM J. Comput.}, 44(3):844--867, 2015.

\bibitem{FraigniaudGIP09}
Pierre Fraigniaud, Cyril Gavoille, David Ilcinkas, and Andrzej Pelc.
\newblock Distributed computing with advice: information sensitivity of graph
  coloring.
\newblock {\em Distributed Computing}, 21(6):395--403, 2009.

\bibitem{FraigniaudIP08}
Pierre Fraigniaud, David Ilcinkas, and Andrzej Pelc.
\newblock Tree exploration with advice.
\newblock {\em Inf. Comput.}, 206(11):1276--1287, 2008.

\bibitem{FraigniaudP08}
Pierre Fraigniaud and Andrzej Pelc.
\newblock Deterministic rendezvous in trees with little memory.
\newblock In {\em Distributed Computing, 22nd International Symposium, {DISC}
  2008, Arcachon, France, September 22-24, 2008. Proceedings}, pages 242--256,
  2008.

\bibitem{FraigniaudP13}
Pierre Fraigniaud and Andrzej Pelc.
\newblock Delays induce an exponential memory gap for rendezvous in trees.
\newblock {\em {ACM} Transactions on Algorithms}, 9(2):17, 2013.

\bibitem{GuilbaultP13}
Samuel Guilbault and Andrzej Pelc.
\newblock Gathering asynchronous oblivious agents with local vision in regular
  bipartite graphs.
\newblock {\em Theor. Comput. Sci.}, 509:86--96, 2013.

\bibitem{IzumiSKIDWY12}
Taisuke Izumi, Samia Souissi, Yoshiaki Katayama, Nobuhiro Inuzuka, Xavier
  D{\'{e}}fago, Koichi Wada, and Masafumi Yamashita.
\newblock The gathering problem for two oblivious robots with unreliable
  compasses.
\newblock {\em {SIAM} J. Comput.}, 41(1):26--46, 2012.

\bibitem{KatzKKP04}
Michal Katz, Nir~A. Katz, Amos Korman, and David Peleg.
\newblock Labeling schemes for flow and connectivity.
\newblock {\em {SIAM} J. Comput.}, 34(1):23--40, 2004.

\bibitem{KowalskiM08}
Dariusz~R. Kowalski and Adam Malinowski.
\newblock How to meet in anonymous network.
\newblock {\em Theor. Comput. Sci.}, 399(1-2):141--156, 2008.

\bibitem{KranakisKMPR17}
Evangelos Kranakis, Danny Krizanc, Euripides Markou, Aris Pagourtzis, and
  Felipe Ram{\'{\i}}rez.
\newblock Different speeds suffice for rendezvous of two agents on arbitrary
  graphs.
\newblock In {\em {SOFSEM} 2017: Theory and Practice of Computer Science - 43rd
  International Conference on Current Trends in Theory and Practice of Computer
  Science, Limerick, Ireland, January 16-20, 2017, Proceedings}, pages 79--90,
  2017.

\bibitem{KranakisKR06}
Evangelos Kranakis, Danny Krizanc, and Sergio Rajsbaum.
\newblock Mobile agent rendezvous: {A} survey.
\newblock In {\em Structural Information and Communication Complexity, 13th
  International Colloquium, {SIROCCO} 2006, Chester, UK, July 2-5, 2006,
  Proceedings}, pages 1--9, 2006.

\bibitem{Lynch96}
Nancy~A. Lynch.
\newblock {\em Distributed Algorithms}.
\newblock Morgan Kaufmann, 1996.

\bibitem{MarcoGKKPV06}
Gianluca~De Marco, Luisa Gargano, Evangelos Kranakis, Danny Krizanc, Andrzej
  Pelc, and Ugo Vaccaro.
\newblock Asynchronous deterministic rendezvous in graphs.
\newblock {\em Theor. Comput. Sci.}, 355(3):315--326, 2006.

\bibitem{MillerP15}
Avery Miller and Andrzej Pelc.
\newblock Fast rendezvous with advice.
\newblock {\em Theor. Comput. Sci.}, 608:190--198, 2015.

\bibitem{MillerP16}
Avery Miller and Andrzej Pelc.
\newblock Time versus cost tradeoffs for deterministic rendezvous in networks.
\newblock {\em Distributed Computing}, 29(1):51--64, 2016.

\bibitem{NisseS09}
Nicolas Nisse and David Soguet.
\newblock Graph searching with advice.
\newblock {\em Theor. Comput. Sci.}, 410(14):1307--1318, 2009.

\bibitem{PeaseSL80}
Marshall~C. Pease, Robert~E. Shostak, and Leslie Lamport.
\newblock Reaching agreement in the presence of faults.
\newblock {\em J. {ACM}}, 27(2):228--234, 1980.

\bibitem{Reingold08}
Omer Reingold.
\newblock Undirected connectivity in log-space.
\newblock {\em J. {ACM}}, 55(4), 2008.

\bibitem{Schelling}
Thomas Schelling.
\newblock {\em The Strategy of Conflict}.
\newblock Oxford University Press, Oxford, 1960.

\bibitem{Ta-ShmaZ14}
Amnon Ta{-}Shma and Uri Zwick.
\newblock Deterministic rendezvous, treasure hunts, and strongly universal
  exploration sequences.
\newblock {\em {ACM} Transactions on Algorithms}, 10(3):12, 2014.

\bibitem{ThorupZ05}
Mikkel Thorup and Uri Zwick.
\newblock Approximate distance oracles.
\newblock {\em J. {ACM}}, 52(1):1--24, 2005.

\end{thebibliography}

\end{document}